\newtheorem{theorema}{Theorem}
\newtheorem{theorem}{Theorem}[section]
\newtheorem{corollary}[theorem]{Corollary}
\newtheorem{example}[theorem]{Example}
\newtheorem{lemma}[theorem]{Lemma}
\newtheorem{proposition}[theorem]{Proposition}
\newenvironment{proof}[1][Proof]{\noindent\textbf{#1.} }{\ \rule{0.5em}{0.5em}}
\let\oldexample\example
\renewcommand{\example}{\oldexample\normalfont}
\let\oldremark\remark
\renewcommand{\remark}{\oldremark\normalfont}
\def\pQ{p_{\scriptscriptstyle Q}}
\begin{document}

\title{Monotonic Mechanisms for Selling Multiple Goods\thanks{%
First version: November 2021. Partially supported by the European Research
Council (ERC) under the European Union Horizon 2020 research and innovation
programme (grant agreement No 740282, PI: Noam Nisan).}}
\author{Ran Ben Moshe\thanks{%
The Hebrew University of Jerusalem (Department of Mathematics). Research
conducted towards an M.Sc. degree under the supervision of Sergiu Hart; see
Ben Moshe (2022).\quad \emph{E-mail}: \texttt{ran.ben-moshe1@mail.huji.ac.il}%
} \and Sergiu Hart\thanks{%
The Hebrew University of Jerusalem (Federmann Center for the Study of
Rationality, Department of Economics, and Institute of Mathematics).\quad 
\emph{E-mail}: \texttt{hart@huji.ac.il} \quad \emph{Web site}: \texttt{%
http://www.ma.huji.ac.il/hart}} \and Noam Nisan\thanks{%
The Hebrew University of Jerusalem (Federmann Center for the Study of
Rationality, and School of Computer Science and Engineering). \emph{E-mail}: 
\texttt{noam@cs.huji.ac.il} \quad \emph{Web site}: \texttt{%
http://www.cs.huji.ac.il/\symbol{126}noam}}}
\maketitle

\begin{abstract}
Maximizing the revenue from selling two or more goods has been shown to
require the use of \emph{nonmonotonic} mechanisms, where a higher-valuation
buyer may pay less than a lower-valuation one. Here we show that the
restriction to \emph{monotonic} mechanisms may not just lower the revenue,
but may in fact yield only a \emph{negligible fraction} of the maximal
revenue; more precisely, the revenue from monotonic mechanisms is no more
than $k$ times the simple revenue obtainable by selling the goods
separately, or bundled (where $k$ is the number of goods), whereas the
maximal revenue may be arbitrarily larger. We then study the class of
monotonic mechanisms and its subclass of allocation-monotonic mechanisms,
and obtain useful characterizations and revenue bounds.
\end{abstract}

\tableofcontents

\def\@biblabel#1{#1\hfill}
\def\thebibliography#1{\section*{References}
\addcontentsline{toc}{section}{References}
\list
{}{
\labelwidth 0pt
\leftmargin 1.8em
\itemindent -1.8em
\usecounter{enumi}}
\def\newblock{\hskip .11em plus .33em minus .07em}
\sloppy\clubpenalty4000\widowpenalty4000
\sfcode`\.=1000\relax\def\baselinestretch{1}\large \normalsize}
\let\endthebibliography=\endlist%

\section{Introduction\label{s:intro}}

Consider the basic problem of a single seller---a monopolist---who is
maximizing his revenue when selling \emph{multiple} goods to a single buyer,
in the standard Bayesian setup where the seller knows only the distribution
of the buyer's valuation of the goods (with arbitrary correlation between
the goods). Even in the simplest case where the buyer's valuation is
additive over sets of goods, the seller has neither cost nor value for the
goods, and both seller and buyer have quasilinear risk-neutral utilities,
the problem turns out to be extremely complex---unlike the single-good case,
where it is optimal to simply set up a price for the good (a
\textquotedblleft take-it-or-leave-it offer," Myerson 1981). For two or more
goods, optimal mechanisms are structurally and conceptually complex (McAfee
and McMillan 1988; Thanassoulis 2004; Maneli and Vincent 2006; Alaei, Fu,
Haghpanah, Hartline, and Malekian 2012; Daskalakis, Deckelbaum, and Tzamos
2017); they may require the use of randomizations, i.e., the buyer is
offered to buy lottery tickets\footnote{%
Despite the risk neutrality of both seller and buyer; in the risk-averse
case, lotteries are needed already in the single-good case (Siedner 2019).}
(Manelli and Vincent 2007, 2012; Chawla, Malec, and Sivan 2010; Daskalakis,
Deckelbaum, and Tzamos 2014); there may be arbitrarily many, even infinitely
many, outcomes (Daskalakis, Deckelbaum, and Tzamos 2013); and no simple
mechanism can guarantee a positive fraction of the optimal revenue (Briest,
Chawla, Kleinberg, and Weinberg 2010/2015;\footnote{%
By \textquotedblleft 2010/2015" we mean \textquotedblleft conference
proceeding in 2010 and journal publication in 2015."} Hart and Nisan
2013/2019; Dughmi, Han, and Nisan 2014).

One surprising feature is the \emph{nonmonotonicity of the maximal revenue}
(Hart and Reny 2015), where the buyer is willing to pay more for the goods,
yet ends up paying less! That is, there are situations where the buyer's
valuation of each good (his \textquotedblleft willingness to pay") goes up,%
\footnote{%
In the first-order dominance sense; i.e., the probability of higher
valuations increases.} and yet the seller's optimal revenue goes down. This
cannot happen when there is a single good, where incentive compatibility
implies that a buyer with a higher valuation must get more of the good%
\footnote{%
I.e., a higher allocation---which means a higher probability of getting the
good in the indivisible case, and a higher fraction of the good in the
divisible case.} and pay more for it. For two or more goods, this is however
no longer true. Consider the following example, from Hart and Reny (2015). 
\begin{figure}[ptb] \centering%
%
%
%
%
%
%
%
%
%
%
%
%
%
%
%
%
%
%
%
%
%
%
%
%
%
%
%
%
%
%
%
%
%
\begin{pspicture}(0,-12.5)(8,6)
     \rput(-3,0){
     \psset{unit=1.5cm}
     \psline[linewidth=2pt]{->}(0,0)(4.2,0)
     \rput(4.4,0){$x_1$}
     \psline[linewidth=2pt]{->}(0,-0.03)(0,4.2)
     \rput(0,4.35){$x_2$}
    \psline[linewidth=1pt]{-}(1,0)(1,2)
     \psline[linewidth=1pt]{-}(0,2)(1,2)
     \psline[linewidth=1pt]{-}(1,2)(2,3)
     \psline[linewidth=1pt]{-}(2,3)(2,4)
     \psline[linewidth=1pt]{-}(2,3)(4,3)
     \rput*[l](0.2,1){$q=(0,0)$}
     \rput[l](0.2,0.6){$s=0$}
     \rput[l](2.2,1.8){$q=(1,0)$}
     \rput[l](2.2,1.4){$s=1$} 
     \rput[l](0.5,3.4){$q=(0,1)$}
     \rput[l](0.5,3){$s=2$}
     \rput[l](2.5,3.9){$q=(1,1)$}
     \rput[l](2.5,3.5){$s=4$}
	\psline[linewidth=4pt,linecolor=red]{->}(1.08,2.36)(1.92,2.64)
     \pscircle*(1,2.333){0.05}
     \pscircle*(2,2.667){0.05}
     \rput(2,-0.6){\textbf{not} monotonic}
      \rput(2,-1){\textbf{not} allocation monotonic}
      \rput(2,-1.4){(supermodular non-symmetric)}
}
\rput(5,0){ \psset{unit=1.5cm}
     \psline[linewidth=2pt]{->}(0,0)(4.2,0)
     \rput(4.4,0){$x_1$}
     \psline[linewidth=2pt]{->}(0,-0.03)(0,4.2)
     \rput(0,4.35){$x_2$}
    \psline[linewidth=1pt]{-}(1,0)(1,1)
     \psline[linewidth=1pt]{-}(0,1)(1,1)
     \psline[linewidth=1pt]{-}(1,1)(3,3)
     \psline[linewidth=1pt]{-}(3,3)(3,4)
     \psline[linewidth=1pt]{-}(3,3)(4,3)
     \rput*[l](0.2,0.7){$q=(0,0)$}
     \rput[l](0.2,0.3){$s=0$}
     \rput[l](2.2,1.2){$q=(1,0)$}
     \rput[l](2.2,0.8){$s=1$} 
     \rput[l](0.5,2.4){$q=(0,1)$}
     \rput[l](0.5,2){$s=1$}
     \rput[l](3.5,3.9){$q=(1,1)$}
     \rput[l](3.5,3.5){$s=4$}
	\psline[linewidth=4pt,linecolor=red]{->}(1.08,1.36)(1.92,1.64)
     \pscircle*(1,1.333){0.05}
     \pscircle*(2,1.667){0.05}
\rput(2,-0.6){monotonic}
      \rput(2,-1){\textbf{not} allocation monotonic}
 \rput(2,-1.4){(supermodular symmetric)}
}
\rput(1,-10){ \psset{unit=1.5cm}
     \psline[linewidth=2pt]{->}(0,0)(4.2,0)
     \rput(4.4,0){$x_1$}
     \psline[linewidth=2pt]{->}(0,-0.03)(0,4.2)
     \rput(0,4.35){$x_2$}
    \psline[linewidth=1pt]{-}(1.5,0)(1.5,1.5)
     \psline[linewidth=1pt]{-}(0,2)(1,2)
     \psline[linewidth=1pt]{-}(1,2)(1.5,1.5)
     \psline[linewidth=1pt]{-}(1.5,1.5)(4,1.5)
     \psline[linewidth=1pt]{-}(1,2)(1,4)
     \rput[l](0.2,1.1){$q=(0,0)$}
     \rput[l](0.2,0.7){$s=0$}
     \rput[l](2.4,0.9){$q=(1,0)$}
     \rput[l](2.4,0.5){$s=1.5$} 
     \rput*[l](0.2,3){$q=(0,1)$}
     \rput[l](0.2,2.6){$s=2$}
     \rput[l](1.8,2.5){$q=(1,1)$}
     \rput[l](1.8,2.1){$s=3$}
\rput(2,-0.6){monotonic}
      \rput(2,-1){allocation monotonic}
 \rput(2,-1.4){(submodular)}
}
 \end{pspicture}%
\caption{Deterministic mechanisms for two goods ($q$ denotes 
the allocation of the two goods, and $s$  denotes the payment; see Section \ref{sus:model})\label{fig:mon}}%
\end{figure}%

\begin{example}
\emph{Nonmonotonic mechanism} (see Figure \ref{fig:mon}, top left). Consider
the deterministic mechanism for two goods where the price of the first good
is $1$, the price of the second good is $2$, and the price of the bundle of
both goods is $4$. A buyer who values the first good at $1$ and the second
good at $2.3$ will buy the second good and pay $2$ (he prefers getting $%
2.3-2=0.3$ from the second good to $1-1=0$ from the first good),\footnote{%
The other options are not better: they yield $0$ (when getting nothing) or $%
1+2.3-4=-0.7$ (when buying both goods); similarly for the other valuation $%
(2,2.7).$} whereas a buyer with the higher valuations of $2$ for the first
good and $2.7$ for the second will buy the first good instead and pay only $%
1 $ (he prefers $2-1=1$ to $2.7-2=0.7)$. This mechanism is \emph{nonmonotonic%
}: a higher valuation pays less than a lower valuation (here: $(1,2.3)$ pays 
$2$, and $(2,2.7)$ pays $1$).
\end{example}

The fact that such a mechanism exists is in itself not surprising. What is
surprising is that nonmonotonic mechanisms are \emph{needed} in order to
maximize the seller's revenue. As a consequence, in such cases the seller's
revenue goes down as the buyer's valuations go up (just move a small
probability mass from a low valuation that pays more to a high valuation
that pays less).

The first result of the present paper is that the restriction to using only
monotonic mechanisms may not just decrease the revenue, but may in fact
yield only a negligible portion of the optimal revenue. Indeed, we have (see
Theorem \ref{th:monrev} in Section \ref{s:mon-rev}):

\begin{quote}
$\bullet $ \textbf{Result A. }\emph{For every }$k$\emph{-good valuation }$X$%
\emph{, the maximal revenue }\textsc{MonRev}$(X)$ \emph{that is} \emph{%
obtained by monotonic mechanisms is no more than }$k$\emph{\ times the
revenue }\textsc{SRev}$(X)$\emph{\ that is obtained by selling the goods
separately, or the revenue }\textsc{BRev}$(X)$\emph{\ that is obtained by
selling the goods as a single bundle}: 
\begin{eqnarray*}
\text{\textsc{MonRev}}(X) &\leq &k\cdot \,\text{\textsc{SRev}}(X)\text{\ \ \
and} \\
\text{\textsc{MonRev}}(X) &\leq &k\cdot \,\text{\textsc{BRev}}(X).
\end{eqnarray*}
\end{quote}

Since the separate and bundled revenues (as well as the revenue from any
other simple mechanisms) can be arbitrarily small relative to the optimal
revenue (Briest et al. 2015 for $k\geq 3$, and Hart and Nisan 2019 for $%
k\geq 2$), from Result A we immediately get (see Theorem \ref{c:gfor(mon)}
in Section \ref{s:mon-rev}):

\begin{quote}
$\bullet $ \textbf{Result B.}\emph{\ For every number of goods }$k\geq 2$ 
\emph{there exist }$k$\emph{-good valuations }$X$\emph{\ whose optimal
revenue }\textsc{Rev}$(X)$ \emph{is infinite, yet all monotonic mechanisms
yield a bounded revenue}$\emph{:}$%
\begin{equation*}
\text{\textsc{Rev}}(X)=\infty \;\;\;\text{and}\;\;\;\text{\textsc{MonRev}}%
(X)=1
\end{equation*}
\end{quote}

\noindent (for bounded valuations, for every $\varepsilon >0$ there are $X$
with values in, say, $[0,1]^{k}$, such that \textsc{MonRev}$(X)<\varepsilon
\cdot $\thinspace \textsc{Rev}$(X)$).

The bound $k$ in Result A is tight relative to the bundled revenue \textsc{%
BRev} (since selling separately---which is a monotonic mechanism---may yield 
$k$ times \textsc{BRev}, already in the i.i.d. case), while relative to the
separate revenue \textsc{SRev} we do not know whether it is so: the largest
gap that we have obtained is only of the order of $\log k$.

We thus study the class of monotonic mechanisms.\footnote{%
See Rubinstein and Weinberg (2015) and Yao (2018) for some studies of
monotonic mechanisms.} Two classes of monotonic mechanisms have been
identified in Hart and Reny (2015): the class of \emph{symmetric
deterministic} mechanisms, and the class of \emph{submodular} mechanisms
(i.e., those where the price increase due to increasing the quantity of one
good is lower when the quantities of other goods are higher). The latter
mechanisms turn out to satisfy a stronger form of monotonicity, namely, 
\emph{allocation monotonicity}, which requires the allocation of goods to be
a nondecreasing function of the valuation. This is easily seen to imply the
monotonicity of the mechanism (i.e., the payment being a nondecreasing
function of the valuation; see Proposition \ref{p:amon -> mon}). An example
of such a mechanism:

\begin{example}
\emph{Monotonic (and thus allocation-monotonic) mechanism} (see Figure \ref%
{fig:mon}, bottom). Consider the submodular deterministic mechanism for two
goods where the price of good $1$ is $1.5$, the price of good $2$ is $2$,
and the price of the bundle is $3$ (which is less than $1.5+2$).
\end{example}

Allocation monotonicity is a strictly stronger requirement than
monotonicity; for example:

\begin{example}
\label{ex:mon-not-am}\emph{Monotonic but not allocation-monotonic mechanism }%
(see Figure \ref{fig:mon}, top right). Consider the symmetric deterministic
mechanism for two goods where the price of each single good is $1$ and the
price of the bundle is $4$ (which is more than $1+1$, and so the mechanism
is supermodular). When the valuation increases from, say, $(1,2.3)$ to $%
(2,1.7)$ the allocation changes from buying good $2$ to buying good $1$;
this contradicts allocation monotonicity, since the higher valuation $%
(2,1.7) $ gets less of good $2$ (it does not contradict monotonicity, since
the payment stays the same, at $1$).
\end{example}

Our next result is a characterization of allocation monotonicity for
deterministic mechanisms (Theorem \ref{th:am} (iv) in Section \ref{sus:a-mon
characterization}; note that characterization results require some
\textquotedblleft regularity" when breaking ties):

\begin{quote}
$\bullet $ \textbf{Result C1.}\emph{\ A deterministic mechanism is
allocation monotonic if and only if it is submodular.}
\end{quote}

\noindent This does not however hold for general, probabilistic mechanisms,
where submodularity implies allocation monotonicity (see Theorem \ref{th:am}
(ii) in Section \ref{sus:a-mon characterization}), but the converse is not
true: see Example \ref{ex:AM not subm}, which belongs to the interesting
class of \textquotedblleft quadratic mechanisms," introduced and studied in
Section \ref{s:quadratic}. In fact, allocation-monotonic mechanisms satisfy
a strict weakening of submodularity, namely, \emph{separable subadditivity}:
buying a bundle of goods, whether deterministic or probabilistic,\ costs no
more than buying the goods separately (one may refer to this as
\textquotedblleft subadditivity across goods"; it is in general weaker than
subadditivity, where the same requirement applies also to different
quantities of the \emph{same} good---see Section \ref{sus:s-modularity}). We
have (see Theorem \ref{th:am} (ii)--(iii) in Section \ref{sus:a-mon
characterization}):

\begin{quote}
$\bullet $ \textbf{Result C2.}\emph{\ Every submodular mechanism is
allocation monotonic, and every allocation-monotonic mechanism is separably
subadditive.}
\end{quote}

\noindent (Neither one of these implications is an equivalence when there
are multiple goods.)

The property of separable subadditivity is closely related to the notion of
\textquotedblleft sybil-proofness" of Chawla, Teng, and Tzamos (2022). We
appeal to their Theorem 1.3 (see our Appendix \ref{sus-a:chawla}) and obtain
(see Theorem \ref{th:amon-srev} in Section \ref{sus:a-mon rev}):

\begin{quote}
$\bullet $ \textbf{Result D. }\emph{For every }$k$\emph{-good valuation }$X$%
\emph{, the maximal revenue }\linebreak \textsc{AMonRev}$(X)$ \emph{that is} 
\emph{obtained by allocation-monotonic mechanisms is no more than }$O(\log
k) $\emph{\ times the revenue }\textsc{SRev}$(X)$\emph{\ that is obtained by
selling the goods separately}:\emph{\ }%
\begin{equation*}
\text{\textsc{AMonRev}}(X)\;\leq \;O(\log k)\cdot \,\text{\textsc{SRev}}(X).
\end{equation*}
\end{quote}

Next, we deal with the other class of monotonic mechanisms, namely, the
symmetric deterministic mechanisms, for which we obtain (see Theorem \ref%
{th:symdrev} in Section \ref{sus:sym det general}):

\begin{quote}
$\bullet $ \textbf{Result E. }\emph{For every }$k$\emph{-good valuation }$X$%
\emph{, the maximal revenue }\linebreak \textsc{SymDRev}$(X)$ \emph{that is} 
\emph{obtained by symmetric deterministic mechanisms (which are monotonic)
is no more than }$O(\log ^{2}k)$\emph{\ times the revenue }\textsc{SRev}$(X)$%
\emph{\ that is obtained by selling the goods separately}:\emph{\ }%
\begin{equation*}
\text{\textsc{SymDRev}}(X)\;\leq \;O(\log ^{2}k)\cdot \,\text{\textsc{SRev}}%
(X).
\end{equation*}
\end{quote}

We get this result by first studying the revenue of \emph{supermodular}
symmetric deterministic mechanisms, for which we obtain a tight bound of $%
\ln k$ relative to \textsc{SRev}, and then extend it to the entire class,
again by the Chawla, Teng, and Tzamos (2022) result. The above class of
supermodular mechanisms also provides a gap of $\ln k$ between the revenue
of allocation-monotonic mechanisms \textsc{AMonRev} and that of monotonic
mechanisms \textsc{MonRev} (see Example \ref{ex:harmonic} and Corollary \ref%
{c:amon vs mon}).

The paper is organized as follows. Section \ref{s:prelim} presents the basic
model, concepts, notation, and preliminaries, including in particular the
useful notion of the \emph{canonical} pricing function in Section \ref%
{susus:canonical p}. In Section \ref{s:mon-rev} we deal with the monotonic
revenue, and obtain Results A and B. Allocation-monotonic mechanisms---their
characterizations (Results C1 and C2) and revenue bound (Result D)---are
studied in Section \ref{s:a-mon}. The class of quadratic mechanisms is
introduced in Section \ref{s:quadratic}, providing, in Section \ref%
{sus:a-mon not submod}, an example of an allocation-monotonic mechanism that
is not submodular. Section \ref{s:sym det} is devoted to the class of
symmetric deterministic mechanisms---which are monotonic---and provides
Result E. Characterizations of monotonicity for general (non-symmetric)
deterministic mechanisms are presented in Section \ref{s:mon det}, and we
conclude with a number of problems that have remained open. The appendices
contain several complements and proofs; in particular, in Appendix \ref%
{s-a:det} we study general deterministic mechanisms (which need not be
monotonic), extending the analysis of the symmetric case of Section \ref%
{s:sym det}.

\section{Preliminaries\label{s:prelim}}

\subsection{The Model\label{sus:model}}

The notation follows Hart and Nisan (2017, 2019) and Hart and Reny (2015).

One seller (a monopolist) is selling a number $k\geq 1$ of indivisible
goods\ (or items, objects, etc.) to one buyer; let $K:=\{1,2,...,k\}$ denote
the set of goods. The goods have no cost or value to the seller, and their
values to the buyer are $x_{1},x_{2},...,x_{k}\geq 0$. The value of getting
a set of goods is \emph{additive}: each subset $I\subseteq K$ of goods is
worth $x(I):=\sum_{i\in I}x_{i}$ to the buyer (and so, in particular, the
buyer's demand is not restricted to one good only). The valuation of the
goods is given by a random variable $X=(X_{1},X_{2},...,X_{k})$ that takes
values in\footnote{%
For vectors $x=(x_{1},x_{2},...,x_{k})$ in $\mathbb{R}^{k},$ we write $x\geq
0$ when $x_{i}\geq 0$ for all $i,$ and $x\gg 0$ when $x_{i}>0$ for all $i.$
The nonnegative orthant is $\mathbb{R}_{+}^{k}=\{x\in \mathbb{R}^{k}:x\geq
0\},$ and $x\cdot y=\sum_{i=1}^{k}x_{i}y_{i}$ is the scalar product of $x$
and $y$ in $\mathbb{R}^{k}.$} $\mathbb{R}_{+}^{k}$; we will refer to $X$ as
a $k$\emph{-good} \emph{random valuation.} The realization $%
x=(x_{1},x_{2},...,x_{k})\in \mathbb{R}_{+}^{k}$ of $X$ is known to the
buyer, but not to the seller, who knows only the distribution $F$ of $X$
(which may be viewed as the seller's belief); we refer to a buyer with
valuation $x$ also as a buyer of \emph{type }$x$. The buyer and the seller
are assumed to be risk neutral and to have quasilinear utilities.

The objective is to \emph{maximize} the seller's (expected) \emph{revenue}.

As was well established by the so-called Revelation Principle\ (starting
with Myerson 1981; see for instance the book of Krishna 2010), we can
restrict ourselves to \textquotedblleft direct mechanisms" and
\textquotedblleft truthful equilibria.\textquotedblright\ A direct\emph{\
mechanism} $\mu $ consists of a pair of functions\footnote{%
All functions are assumed to be Borel-measurable; see Hart and Reny (2015),
footnotes 10 and 48.} $(q,s)$, where $q=(q_{1},q_{2},...,q_{k}):\mathbb{R}%
_{+}^{k}\rightarrow \lbrack 0,1]^{k}$ and $s:\mathbb{R}_{+}^{k}\rightarrow 
\mathbb{R}$, which prescribe the \emph{allocation} and the \emph{payment},
respectively. Specifically, if the buyer reports a valuation vector $x\in 
\mathbb{R}_{+}^{k}$, then $q_{i}(x)\in \lbrack 0,1]$ is the probability that
the buyer receives good\footnote{%
An alternative interpretation of the model: the goods are infinitely
divisible and the valuation is linear in quantity, in which case $q_{i}$ is
the quantity (i.e., fraction) of good $i$ that the buyer gets.} $i$ (for all 
$i=1,2,...,k$), and $s(x)$ is the payment that the seller receives from the
buyer. When the buyer of type $x$ reports his type truthfully, his payoff is 
$b(x)=\sum_{i=1}^{k}q_{i}(x)x_{i}-s(x)=q(x)\cdot x-s(x)$, and the seller's
payoff is $s(x).$

The mechanism $\mu =(q,s)$ satisfies \emph{individual rationality} (IR%
\textbf{)} if $b(x)\geq 0$ for every $x\in \mathbb{R}_{+}^{k}$; it satisfies 
\emph{incentive compatibility} (IC) if $b(x)\geq q(y)\cdot x-s(y)$ for every
alternative report $y\in \mathbb{R}_{+}^{k}$ of the buyer when his value is $%
x$, for every $x\in \mathbb{R}_{+}^{k}$.

The (expected) revenue of a mechanism $\mu =(q,s)$ from a buyer with random
valuation $X$, which we denote by $R(\mu ;X)$, is the expectation of the
payment received by the seller, i.e., $R(\mu ;X)=\mathbb{E}\left[ s(X)\right]
$. We now define

\begin{itemize}
\item \textsc{Rev}$(X)$, the \emph{optimal revenue}, is the maximal revenue
that can be obtained: \textsc{Rev}$(X)=\sup_{\mu }R(\mu ;X)$, where the
supremum is taken over the class of all IC and IR mechanisms $\mu .$
\end{itemize}

When there is only one good, i.e., when $k=1$, Myerson's (1981) result%
\footnote{%
See also Riley and Samuelson (1981) and Riley and Zeckhauser (1983).} is
that 
\begin{equation}
\text{\textsc{Rev}}(X)=\sup_{t\geq 0}t\cdot (1-F(t)),  \label{eq:one good}
\end{equation}%
where $F(t)=\mathbb{P}\left[ X\leq t\right] $ is the cumulative distribution
function of $X$. Thus, it is optimal for the seller to \textquotedblleft
post" a price $p$, and then the buyer buys the good for the price $p$
whenever his value is at least $p$; in other words, the seller makes the
buyer a \textquotedblleft take-it-or-leave-it" offer to buy the good at
price $p.$

Besides the maximal revenue \textsc{Rev}$(X)$, we are also interested in
what can be obtained from certain classes of mechanisms. For any class $%
\mathcal{N}$ of IC and IR mechanisms we denote

\begin{itemize}
\item $\mathcal{N}$-\textsc{Rev}$(X):=\sup_{\mu \in \mathcal{N}}R(\mu ;X)$,
the maximal revenue over the class $\mathcal{N}.$
\end{itemize}

\noindent In particular:

\begin{itemize}
\item \textsc{SRev}$(X)$, the \emph{separate revenue}, is the maximal
revenue that can be obtained by selling each good separately. Thus 
\begin{equation*}
\text{\textsc{SRev}}(X)=\text{\textsc{Rev}}(X_{1})+\text{\textsc{Rev}}%
(X_{2})+...+\text{\textsc{Rev}}(X_{k}).
\end{equation*}

\item \textsc{BRev}$(X)$, the \emph{bundling revenue}, is the maximal
revenue that can be obtained by selling all the goods together in one
\textquotedblleft bundle." Thus 
\begin{equation*}
\text{\textsc{BRev}}(X)=\text{\textsc{Rev}}(X_{1}+X_{2}+...+X_{k}).
\end{equation*}

\item \textsc{DRev}$(X)$, the \emph{deterministic revenue}, is the maximal
revenue that can be obtained by deterministic mechanisms; these are the
mechanisms in which every good $i=1,2,...,k$ is either fully allocated or
not at all, i.e., $q_{i}(x)\in \{0,1\}$ for all valuations $x\in \mathbb{R}%
_{+}^{k}$ (rather than $q_{i}(x)\in \lbrack 0,1])$.
\end{itemize}

\noindent As seen in Hart and Nisan (2017, Proposition 6), when maximizing
revenue we can limit ourselves without loss of generality to those IC and IR
mechanisms that satisfy in addition the \emph{no positive transfer} (NPT)
property, namely, $s(x)\geq 0$ for every $x\in \mathbb{R}_{+}^{k}$, from
which it follows that $s(\mathbf{0})=b(\mathbf{0})=0$, where $\mathbf{0}%
=(0,0,...,0)\in \mathbb{R}_{+}^{k}$.

From now on\emph{\ we will assume that all mechanisms }$\mu $\emph{\ are }$k$%
\emph{-good mechanisms that are given in direct form and satisfy IC, IR, and
NPT};\footnote{%
For some of the results only IC is needed.} thus,\emph{\ }$\mu =(q,s):%
\mathbb{R}_{+}^{k}\rightarrow \lbrack 0,1]^{k}\times \mathbb{R}_{+}$ and $b:%
\mathbb{R}_{+}^{k}\rightarrow \mathbb{R}_{+}$, and\emph{\ }$s(\mathbf{0})=b(%
\mathbf{0})=0.$

We conclude with a standard technical result. The function $b$ is \emph{%
nonexpansive} if $b(y)-b(x)\leq \sum_{i=1}^{k}(y_{i}-x_{i})$ for all $x\leq
y $ in $\mathbb{R}_{+}^{k}$; a vector $g\in \mathbb{R}^{k}$ is a \emph{%
subgradient} of the function $b$ at the point $x\in \mathbb{R}_{+}^{k}$ if $%
b(y)\geq b(x)+g\cdot (y-x)$ for every $y\in \mathbb{R}_{+}^{k}$; the set of
subgradients of $b$ at $x$ is denoted by $\partial b(x).$

\begin{proposition}
\label{p:b function}A function $b:\mathbb{R}_{+}^{k}\rightarrow \mathbb{R}%
_{+}$ is a buyer payoff function of some mechanism if and only if $b$ is
continuous, convex, nondecreasing, and nonexpansive. In this case, $b$ is
obtained from the mechanism $\mu =(q,s)$ if and only if $q(x)\in \partial
b(x)^{+}:=\{g\in \partial b(x):g\geq 0\}$ and $s(x)=q(x)\cdot x-b(x)$ for
every $x\in \mathbb{R}_{+}^{k}.$
\end{proposition}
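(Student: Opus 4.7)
My plan is to leverage the standard correspondence between incentive compatibility and the representation of the buyer's payoff as a convex envelope of affine functions, with the allocation playing the role of a subgradient. For the necessity direction, I would begin by rewriting IC as $b(x)=\sup_{y\in\mathbb{R}_+^k}(q(y)\cdot x-s(y))$, with the supremum attained at $y=x$. As a pointwise supremum of affine functions with slopes $q(y)\in[0,1]^k\subseteq\mathbb{R}_+^k$, $b$ is immediately convex. Nondecreasingness falls out by applying IC at report $x$ when the true value is $y\geq x$: $b(y)\geq q(x)\cdot y-s(x)=b(x)+q(x)\cdot(y-x)\geq b(x)$. The dual application (report $y$, true value $x\leq y$) gives $b(y)\leq b(x)+q(y)\cdot(y-x)\leq b(x)+\sum_i(y_i-x_i)$, which is nonexpansiveness; combined with monotonicity this upgrades to $\ell_1$-Lipschitz (hence continuous) via the standard $z:=\max(x,y)$ bracketing. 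Finally, IC rewritten as $b(y)\geq b(x)+q(x)\cdot(y-x)$ is exactly the subgradient inequality $q(x)\in\partial b(x)^+$, and $s(x)=q(x)\cdot x-b(x)$ is the definition of $b$.

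For the sufficiency direction, I would take $b$ with the stated properties (together with $b(\mathbf{0})=0$ as fixed in the paper's NPT setup) and construct a matching $(q,s)$. The crucial step is producing, at every $x\in\mathbb{R}_+^k$, a subgradient in $[0,1]^k$. For $x$ in the interior, $\partial b(x)$ is nonempty by convexity, and substituting $y=x\pm te_i$ into the subgradient inequality---combined with nondecreasingness and nonexpansiveness---forces $g_i\in[0,1]$ for every $g\in\partial b(x)$. For a boundary $x$, I would approximate by $x_n:=x+(1/n)\mathbf{1}$, pick $g_n\in\partial b(x_n)\subseteq[0,1]^k$, and extract a convergent subsequence; closedness of the subdifferential graph places the limit in $\partial b(x)\cap[0,1]^k$. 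After measurably selecting $q(x)$ from this nonempty convex compact set I would set $s(x):=q(x)\cdot x-b(x)$. Then IC is the subgradient inequality rearranged, IR is $b\geq 0$, and NPT follows from the subgradient inequality at $y=\mathbf{0}$, which yields $q(x)\cdot x\geq b(x)-b(\mathbf{0})=b(x)$ and hence $s(x)\geq 0$.

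The main obstacle I anticipate is in the sufficiency direction: simultaneously producing a subgradient in $[0,1]^k$ at boundary points and performing a measurable selection of $q$. Both are standard (closed-graph limits plus a classical selection theorem for convex-compact-valued correspondences), but they are the only places where the argument steps beyond elementary manipulation of IC.
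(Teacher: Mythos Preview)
The paper does not supply its own proof of this proposition; it simply refers the reader to Appendix~A.1 of Hart and Reny~(2015). Your argument is correct and is essentially the standard one found there: IC forces $b$ to be a supremum of affine functions with slopes in $[0,1]^k$, whence convexity, monotonicity, nonexpansiveness, and (via the $\ell_1$-Lipschitz bound) continuity; conversely, any selection $q(x)\in\partial b(x)\cap[0,1]^k$ together with $s(x)=q(x)\cdot x-b(x)$ reconstitutes an IC, IR, NPT mechanism.

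The only point worth flagging is the handling of boundary points of $\mathbb{R}_+^k$. As the paper's footnote explains, Hart and Reny sidestep the issue by extending $b$ to all of $\mathbb{R}^k$, after which $\partial b(x)\subseteq[0,1]^k$ holds everywhere and $\partial b(x)^+$ is unnecessary. You instead stay on $\mathbb{R}_+^k$ and produce a subgradient in $[0,1]^k$ at a boundary $x$ by approximating from the interior and passing to a subsequential limit (closedness of the subdifferential graph). Both routes are routine; yours matches the paper's choice to keep $\mathbb{R}_+^k$ as the domain, and your identification of measurable selection as the only genuinely nontrivial step is accurate.
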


See Appendix A.1 in Hart and Reny (2015) for details.\footnote{%
In Hart and Reny (2015) the function $b$ is extended to a convex function on
the entire space $\mathbb{R}^{k},$ and then $b$ is continuous and $\partial
b(x)\subseteq \lbrack 0,1]^{k}$ everywhere, and there is no need for $%
\partial b(x)^{+}.$ Here we have found it more natural to keep $\mathbb{R}%
_{+}^{k}$ as the domain of $b$ (and so $b(x)=\infty $ for $x\notin \mathbb{R}%
_{+}^{k}).$} The set $\partial b(x)^{+}$ differs from $\partial b(x)$ only
at boundary points $x$ of $\mathbb{\mathbb{R}}_{+}^{k}$: if $x_{i}=0$ then
the $i$-th coordinate $g_{i}$ of any subgradient $g\in \partial b(x)$ can be
lowered arbitrarily, and so be negative (see Theorem 25.6 in Rockafellar
1970).

\subsection{Pricing Functions\label{sus:pricing}}

An equivalent description of mechanisms is by means of pricing functions.

A \emph{pricing function} $p:[0,1]^{k}\rightarrow \mathbb{R}\cup \{\infty \}$
assigns to each allocation $g\in \lbrack 0,1]^{k}$ a price $p(g)$, which may
be infinite; it generates the set of choices (a \textquotedblleft potential
menu") $\mathcal{M}_{p}:=\{(g,p(g)):g\in \lbrack 0,1]^{k}\}$. Given a
mechanism $\mu =(q,s)$, we will say that $p$ is a \emph{pricing function} 
\emph{of }$\mu $ if for every buyer valuation $x\in \mathbb{R}_{+}^{k}$ the
choice $(q(x),s(x))$ of $\mu $ is an optimal choice from the set $\mathcal{M}%
_{p}$. This means, first, that $p(q(x))=s(x)$, and second, that $g\cdot
x-p(g)\leq q(x)\cdot x-p(q(x))=b(x)$ for all $g\in \lbrack 0,1]^{k}$, where $%
b$ is the buyer payoff function of $\mu $. Thus, the price of the allocation 
$q(x)$ equals the corresponding payment $s(x)$ (this is well defined since $%
q(x)=q(x^{\prime })$ implies that $s(x)=s(x^{\prime })$ by IC), and if one
can choose any allocation $g$ in $[0,1]^{k}$ for the corresponding price $%
p(g)$ then $q(x)$ is an optimal choice for a buyer with valuation $x$. Let $%
Q\equiv Q_{\mu }:=q(\mathbb{R}_{+}^{k})\equiv \{q(x):x\in \mathbb{R}%
_{+}^{k}\}\subseteq \lbrack 0,1]^{k}$ denote the \emph{range of allocations
of} $\mu $; while the price of any $g\in Q$ (i.e., any allocation $g$ that
is chosen at some valuation) is well defined, the prices of all $g\notin Q$
are not: they just need to be high enough so that these $g$ are never chosen
(which is easily achieved, for instance, by making the price of each $%
g\notin Q$ infinite\footnote{%
Infinite prices \emph{must} be allowed. For a simple example, take two goods
($k=2),$ and let $\mu $ sell the first good at price $1$ (i.e., $p(1,0)=1);$
any price function $p$ of $\mu $ must then put an infinite price on the
second good (i.e., $p(0,1)=\infty ).$}). To summarize: $p$ is a pricing
function of the mechanism $\mu $ with buyer payoff function $b$ if and only
if%
\begin{equation}
b(x)=\max_{g\in \lbrack 0,1]^{k}}\left( g\cdot x-p(g)\right) =q(x)\cdot
x-p(q(x))  \label{eq:b=p0*}
\end{equation}%
for every $x\in \mathbb{R}_{+}^{k}$; note that $b(\mathbf{0)}=0$ implies
that $p$ cannot have negative values, and so the range of $p$ is $\mathbb{R}%
_{+}\cup \{\infty \}=[0,\infty ].$

When $\mu $ is a deterministic mechanism, i.e., $Q\subseteq \{0,1\}^{k}$, it
is sometimes simpler and more convenient to consider \emph{deterministic}
pricing functions $p\equiv p^{\text{\textsc{D}}}$, which are defined only on
the set $\{0,1\}^{k}$ of deterministic allocations. In this case we identify 
$\{0,1\}^{k}$ with the set of subsets $2^{K}$ of $K=\{1,...,k\}$, and write $%
p(A)$ instead of $p(\mathbf{1}_{A})$ (where $\mathbf{1}_{A}$ denotes the
indicator vector of $A$, i.e., $(\mathbf{1}_{A})_{i}=1$ for $i\in A$ and $(%
\mathbf{1}_{A})_{i}=0$ for $i\notin A)$; thus, $p\equiv p^{\text{\textsc{D}}%
}:\{0,1\}^{k}\sim 2^{K}\rightarrow \lbrack 0,\infty ]$.

In general, each mechanism has at least one pricing function---perhaps more
than one, when not all allocations are chosen---and each pricing function
generates at least one mechanism---perhaps more than one, depending on the
way that the buyer breaks ties when indifferent. In the next two sections we
deal with these multiplicities and provide useful selections that simplify
the analysis.

\subsection{The Canonical Pricing Function\label{susus:canonical p}}

Among all pricing functions of a given mechanism there is a
\textquotedblleft canonical" one that turns out to be particularly
convenient and useful: the price of any unused allocation $g$ is set to be
the \emph{smallest} price that ensures that $g$ is never strictly optimal
(setting it to be the largest price, i.e., infinite, is unwieldy: for
instance, it may yield a pricing function that is at times decreasing).
Since $p(g)$ must satisfy the inequalities $g\cdot x-p(g)\geq b(x)$, i.e., $%
p(g)\geq g\cdot x-b(x)$, for all $x$, the smallest such price is $%
\sup_{x}(g\cdot x-b(x))$.

Following Hart and Reny (2015, Appendix A.2), we thus define the \emph{%
canonical} pricing function $p_{0}$ of the mechanism $\mu $, with buyer
payoff function $b$, by%
\begin{equation}
p_{0}(g)%
{\;:=\;}%
\sup_{x\in \mathbb{R}_{+}^{k}}\left( g\cdot x-b(x)\right)  \label{eq:p0}
\end{equation}%
for every $g\in \lbrack 0,1]^{k}$. As seen above, $p$ is a pricing function
of $\mu $ if and only if $p_{0}(g)\leq p(g)\leq \infty $ for every $g\in
\lbrack 0,1]^{k}$, and $p(g)=p_{0}(g)$ for every $g\in Q$, and so $p_{0}$ is
the \emph{minimal} pricing function of $\mu $. The function $p_{0}$ is
nondecreasing, convex, and closed (i.e., lower semicontinuous\footnote{%
The function $p_{0}$ is \emph{lower semicontinuous} if $\underline{\lim }%
_{h\rightarrow g}p_{0}(h)\geq p_{0}(g)$ for every $g\in \lbrack 0,1]^{k}.$
At relatively interior points $g$ of $\mathrm{dom}\,p_{0}=\{g:p_{0}(g)<%
\infty \}$ the function $p_{0}$ is in fact continuous, since it is a convex;
see Rockafellar (1970), Theorems 9.4 and 10.2.}), because it is the supremum
of such functions; also, $p_{0}(\mathbf{0})=0$ (because $b(\mathbf{0})=0$).
In Proposition \ref{p:p0} in Appendix \ref{sus-a:canonical p} we will show
that the canonical pricing function is in fact the \emph{unique} pricing
function that is nondecreasing, convex, and closed. Moreover, the buyer
payoff function $b$ and the canonical pricing function $p_{0}$ are Fenchel
conjugates; see (\ref{eq:b=p0*}) for $p=p_{0}$ and (\ref{eq:p0}).

When $\mu $ is a deterministic mechanism, i.e., $Q\subseteq \{0,1\}^{k}$,
the restriction to $\{0,1\}^{k}$ of the canonical pricing function $p_{0}$
will be called the \emph{canonical deterministic} pricing function of $\mu $%
, and denoted by $p_{0}^{\text{\textsc{D}}}$; thus, $p_{0}^{\text{\textsc{D}}%
}:\{0,1\}^{k}\rightarrow \lbrack 0,\infty ]$ is given by (\ref{eq:p0}) for
every $g\in \{0,1\}^{k}$. When $\{0,1\}^{k}$ is identified with $2^{K}$, the
set of subsets of $K=\{1,...,k\}$, (\ref{eq:p0}) becomes%
\begin{equation*}
p_{0}^{\text{\textsc{D}}}(A):=\sup_{x\in \mathbb{R}_{+}^{k}}(x(A)-b(x)),
\end{equation*}%
for every $A\subseteq K$, where $x(A)=\sum_{i\in A}x_{i}$. From the
properties of $p_{0}$ on $[0,1]^{k}$ it follows that $p_{0}^{\text{\textsc{D}%
}}$ is a nondecreasing function, i.e., if $A\subseteq B\subseteq K$ then $%
p_{0}^{\text{\textsc{D}}}(A)\leq p_{0}^{\text{\textsc{D}}}(B)$ (convexity
and closedness are irrelevant here); in fact, as we will now show, $p_{0}^{%
\text{\textsc{D}}}$ is the \emph{unique} pricing function that is
nondecreasing. Let $%
p_{\scriptscriptstyle Q}%
:Q\rightarrow \mathbb{R}_{+}$ denote the common restriction to $Q$ of all
pricing functions of $\mu $.

\begin{proposition}
\label{p:p0-det}Let $\mu $ be a deterministic mechanism. Then the canonical
deterministic pricing function $p_{0}^{\text{\textsc{D}}}$ of $\mu $ is the
unique deterministic pricing function of $\mu $ that is nondecreasing, and
it is given by%
\begin{equation}
p_{0}^{\text{\textsc{D}}}(A)=\inf \{%
\pQ%
(B):B\in Q,~B\supseteq A\}  \label{eq:p0D}
\end{equation}%
for every $A\subseteq K.$
\end{proposition}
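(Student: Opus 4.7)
My plan is to introduce the auxiliary function $\tilde p:2^K\to[0,\infty]$ defined by $\tilde p(A):=\inf\{\pQ(B):B\in Q,\;B\supseteq A\}$ (with the convention $\inf\emptyset:=+\infty$), to establish the identity $p_0^{\text{\textsc{D}}}=\tilde p$, and then to extract uniqueness by a short sandwich argument using the monotonicity of $p_0^{\text{\textsc{D}}}$ already recorded in the excerpt. Since $Q\subseteq 2^K$ is finite, the infimum is actually a minimum whenever $\tilde p(A)$ is finite; this finiteness plays a role in the harder half.

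The easy inequality $p_0^{\text{\textsc{D}}}(A)\le\tilde p(A)$ I would get from IC: for every $x\in\mathbb{R}_{+}^{k}$ and every $B\in Q$ with $B\supseteq A$, incentive compatibility at $x$ against any type mapping to $B$ gives $b(x)\ge x(B)-\pQ(B)\ge x(A)-\pQ(B)$, so $x(A)-b(x)\le\pQ(B)$; taking the infimum over such $B$ and then the supremum over $x$ in the defining formula (\ref{eq:p0}) yields the bound.

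For the reverse inequality I would probe with the specific valuation $x^{M}$ defined by $x^{M}_i=M$ for $i\in A$ and $x^{M}_i=0$ otherwise. Using the identity $b(x^{M})=\max_{B\in Q}\bigl(M\,|B\cap A|-\pQ(B)\bigr)$ (a maximum because $Q$ is finite, and because $q(x^{M})\in Q$ always realises the supremum by IC), a direct rearrangement gives
\[
x^{M}(A)-b(x^{M})\;=\;\min_{B\in Q}\bigl(\pQ(B)+M\cdot|A\setminus B|\bigr).
\]
For $B\supseteq A$ the inner expression is exactly $\pQ(B)$, whereas for $B\not\supseteq A$ it is at least $M$. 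Hence for $M$ large enough the minimum is achieved on the family $\{B\in Q:B\supseteq A\}$ and equals $\tilde p(A)$ when the latter is finite, and tends to $+\infty$ when no $B\in Q$ contains $A$; either way, taking the supremum over $x$ in (\ref{eq:p0}) yields $p_0^{\text{\textsc{D}}}(A)\ge\tilde p(A)$. This is the one delicate step in the argument, since it is where the finiteness of $Q$ enters and where one must split into the cases $\tilde p(A)<\infty$ and $\tilde p(A)=\infty$.

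Finally, for uniqueness, let $p$ be any nondecreasing deterministic pricing function of $\mu$. Since every pricing function of $\mu$ agrees with $\pQ$ on $Q$, monotonicity of $p$ gives $p(A)\le p(B)=\pQ(B)$ for every $B\in Q$ with $B\supseteq A$; taking the infimum over such $B$ yields $p\le\tilde p=p_0^{\text{\textsc{D}}}$. The reverse inequality $p\ge p_0^{\text{\textsc{D}}}$ is the minimality of $p_0^{\text{\textsc{D}}}$ already noted in the excerpt, so $p=p_0^{\text{\textsc{D}}}$; combined with the fact (also stated in the excerpt) that $p_0^{\text{\textsc{D}}}$ is itself nondecreasing, this completes the proof.
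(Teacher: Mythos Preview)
Your proof is correct and follows essentially the same approach as the paper: both establish the easy inequality $p_0^{\text{\textsc{D}}}\le\tilde p$ via monotonicity (you re-derive it from IC, the paper quotes the already-established monotonicity of $p_0$), both handle the reverse inequality by probing with the valuation $M\mathbf{1}_A$ for large $M$, and the uniqueness argument is identical. The only cosmetic difference is that the paper phrases the hard direction as a contradiction (if $p_0(A)<\tilde p(A)$ then $(A,p_0(A))$ would be the unique optimal choice at $M\mathbf{1}_A$, forcing $A\in Q$), whereas you give the equivalent direct computation $x^M(A)-b(x^M)=\min_{B\in Q}\bigl(\pQ(B)+M|A\setminus B|\bigr)$.
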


\begin{proof}
Let $p_{2}(A)$ denote the right-hand side of (\ref{eq:p0D}). When $%
A\subseteq B\in Q$ we get $p_{0}(A)\leq p_{0}(B)=%
\pQ%
(B)$ (because $p_{0}$ is nondecreasing), and then the infimum over $B\in Q$
yields $p_{0}(A)\leq p_{2}(A)$, with equality when $A\in Q$ (because then $%
B=A$ is included). Thus $p_{2}$ is indeed a pricing function of $\mu .$

Assume by way of contradiction that $p_{0}(A)<p_{2}(A)$ for some $A\notin Q$%
. Consider the valuation $x=M\mathbf{1}_{A}$ for large $M>0$ (specifically, $%
M>\max_{B\in Q}%
\pQ%
(B)$). We claim that $(A,p_{0}(A))$ is strictly better at $x$ than $(B,%
\pQ%
(B))$ for any $B\in Q$; indeed, if $B\supseteq A$ then $p_{0}(A)<p_{2}(A)%
\leq 
\pQ%
(B)$, and so $x(B)-%
\pQ%
(B)=M\left\vert A\right\vert -%
\pQ%
(B)<M\left\vert A\right\vert -p_{0}(A)=x(A)-p_{0}(A)$, and if $B\nsupseteq A$
then at least one coordinate of $A$ is not included in $B$, and so $%
x(A)-x(B)\geq M$, which, for large enough $M$, yields $x(B)-%
\pQ%
(B)<x(A)-p_{0}(A)$. Thus $(A,p_{0}(A))$ is the unique optimal choice at $x$,
in contradiction to $A\notin Q.$

Uniqueness: let $p$ be any nondecreasing pricing function of $\mu $; since $%
p $ coincides with $%
\pQ%
$ on $Q$ and is nondecreasing, it follows that $p\leq p_{2}$ (because $%
p(A)\leq p(B)=%
\pQ%
(B)$ for every $A\subseteq B\in Q$, and so $p(A)\leq p_{2}(A))$; now $%
p_{2}=p_{0}$, the minimal pricing function (proved above), and so $p\leq
p_{2}=p_{0}$ yields $p=p_{0}.$
\end{proof}

\subsection{Favorable Tie Breaking\label{sus:tie-fav}}

{A pricing function $p$ does not determine the mechanism in case of ties,
i.e., when two or more allocations provide the same maximal payoff for some
buyer valuation:\ $b(x)=g\cdot x-p(g)=g^{\prime }\cdot x-p(g^{\prime })$}
for some $x$ and distinct $g$ and $g^{\prime }${. More generally, ties occur
at points }$x$ where the{\ buyer payoff function }$b$ (which is uniquely
determined by $p$; see (\ref{eq:b=p0*})) is not differentiable and there are
multiple subgradients in $\partial b(x)^{+}$ (see Proposition \ref{p:b
function}). {In order to fully specify the mechanism one needs to specify
how ties are broken in such cases. The \textquotedblleft favorable"
tie-breaking rules below---which maximize the payment and the
allocation---are particularly convenient, because they simplify the analysis
while affecting neither the maximization of revenue nor the relevant
mechanism properties (such as monotonicity and allocation monotonicity).
Moreover, these properties could well fail if one were to break ties
arbitrarily; see Hart and Reny (2015) (the reason that we need to go beyond
their seller favorability is that for allocation monotonicity the choice of
allocation matters as well).\footnote{%
What is needed is a certain consistency when breaking ties. Indeed, some of
the characterization results below hold under other tie-breaking rules, such
as always choosing a \emph{minimal}\textbf{\ }allocation; we focus on
tie-favorable mechanisms because these are the ones that maximize revenue.}}

We will say that a mechanism $\mu =(q,s)$, with buyer payoff function $b$, is

\begin{itemize}
\item \emph{seller favorable} (Hart and Reny 2015)\emph{\ }if for every $%
x\in \mathbb{R}_{+}^{k}$ the payment $s(x)$ is maximal; i.e., there is no $%
g\in \partial b(x)^{+}$ such that $g\cdot x-b(x)>s(x);$

\item $\emph{buyer}$ \emph{favorable} if for every $x\in \mathbb{R}_{+}^{k}$
the allocation $q(x)$ is maximal; i.e., there is no $g\in \partial b(x)^{+}$
such that $g\geq q(x)$ and $g\neq q(x);$

\item \emph{tie favorable} if it is both seller and buyer favorable.
\end{itemize}

We will at times refer to a mechanism $\mu ^{\prime }$ with the same buyer
payoff function $b$ as a (tie-breaking) \emph{version} of the mechanism $\mu 
$. Since the canonical pricing function $p_{0}$ is determined by $b$, all
tie-breaking versions of $\mu $\ have the same canonical pricing function.

The seller\emph{-}favorability condition is equivalent to the subgradient $%
q(x)\in \partial b(x)^{+}$ being maximal in the direction $x$ , i.e., $%
q(x)\in \partial b(x)_{x}^{+}:=\arg \max_{g\in \partial b(x)^{+}}g\cdot x$,
and then $s(x)=b^{\prime }(x;x)-b(x)$, where $b^{\prime }(y;z)$ denotes the
derivative of $b$ at $y$ in the direction $z$; see Hart and Reny 2015,
Appendix A.1. Tie favorability requires in addition that $q(x)$ be a maximal
element of the set $\partial b(x)_{x}^{+}$ (and thus of $\partial b(x)$ and $%
\partial b(x)^{+}$ as well; at interior points $x\gg 0$ any seller-favorable
choice is buyer favorable, and hence tie favorable). Since there are always
such choices (the sets $\partial b(x)^{+}$ and $\partial b(x)_{x}^{+}$ are
nonempty, compact, convex, and $\subseteq \lbrack 0,1]^{k}$), there always
exist seller-, buyer-, and tie-favorable versions of any mechanism.

The restriction to seller-favorable mechanisms is without loss of generality
when maximizing revenue, because these have the highest payments and
revenues; moreover, all seller-favorable versions of the mechanism with
buyer payoff function $b$ have identical payment functions $s$ (given by the
above formula), and so they always yield identical revenues. Therefore the
further restriction to the subclass of tie-favorable mechanisms is also
without loss of generality when maximizing revenue.

Finally, as we will show in Proposition \ref{p:s-fav monot} and Corollary %
\ref{c:am-fav}, monotonicity and allocation monotonicity are preserved when
one considers tie-favorable versions of such mechanisms. This of course
applies also to any property (such as submodularity; see Section \ref%
{sus:s-modularity}) that depends only on the buyer payoff function $b$ or
only on the canonical pricing function $p_{0},$ which are common to all
versions of a mechanism.

\subsection{Submodularity and Supermodularity\label{sus:s-modularity}}

Notions of submodularity and supermodularity will play an important role in
our analysis and results. We will use these notions for functions that are
defined not only on discrete domains (such as deterministic pricing
functions on $\{0,1\}^{k}\sim 2^{K}$) but also on continuous domains (such
as general pricing functions on $[0,1]^{k}$, and buyer payoff functions on $%
\mathbb{R}_{+}^{k}$).

Let $\mathcal{L}\subseteq \mathbb{R}_{+}^{k}$ be a lattice with respect to
coordinatewise maximum and minimum; i.e., for any $x=(x_{i})_{i=1,...,k}$
and $y=(y_{i})_{i=1,...,k}$ in $\mathcal{L}$ the vectors $x\vee y:=(\max
\{x_{i},y_{i}\})_{i=1,...,k}$ and $x\wedge y:=(\min
\{x_{i},y_{i}\})_{i=1,...,k}$ belong to $\mathcal{L}$ as well; in our use, $%
\mathcal{L}$ will be $\mathbb{R}_{+}^{k}$, $[0,1]^{k}$, or $\{0,1\}^{k}$.
The vectors $x,y\geq 0$ are \emph{orthogonal} if $x\cdot y=0$, which is the
same as $x\wedge y=\mathbf{0}$; i.e., $x$ and $y$ are positive on disjoint
sets of coordinates.

Let $f:\mathcal{L}\rightarrow \mathbb{R\cup \{}\infty \}$ be a function
(such as a buyer payoff function $b$, or a pricing function $p$; the value $%
\infty $ is thus allowed).

\begin{itemize}
\item $f$ is \emph{submodular} if for every $x,y\in \mathcal{L}$,%
\begin{equation}
f(x)+f(y)\geq f(x\vee y)+f(x\wedge y).  \label{eq:submod-f}
\end{equation}

\item $f$ is \emph{supermodular} if for every $x,y\in \mathcal{L}$,%
\begin{equation}
f(x)+f(y)\leq f(x\vee y)+f(x\wedge y)  \label{eq:supermod-f}
\end{equation}%
(for convex functions, supermodularity is equivalent to the stronger
requirement of \emph{ultramodularity}; see the discussion below).

\item $f$ is \emph{separably subadditive} if for every $x,y,x+y\in \mathcal{L%
}$ such that $x$ and $y$ are orthogonal,%
\begin{equation}
f(x+y)\leq f(x)+f(y).  \label{eq:subadd-f}
\end{equation}

\item $f$ is \emph{separably superadditive} if for every $x,y,x+y\in 
\mathcal{L}$ such that $x$ and $y$ are orthogonal,%
\begin{equation}
f(x+y)\geq f(x)+f(y).  \label{eq:superadd-f}
\end{equation}
\end{itemize}

A number of comments:

\emph{Submodularity and supermodularity}. The submodularity condition (\ref%
{eq:submod-f}) can be rewritten as follows: put $d^{1}:=x-x\wedge y\geq 0$
and $d^{2}:=y-x\wedge y\geq 0$; then $d^{1}$ and $d^{2}$ are orthogonal
(because $d_{i}^{1}>0$ when $x_{i}>y_{i}$, and $d_{i}^{2}>0$ when $%
y_{i}>x_{i}$), and replacing $x\wedge y$ with $x$ yields $%
f(x+d^{1})+f(x+d^{2})\geq f(x+d^{1}+d^{2})+f(x)$. Thus: $f$ is submodular if
and only if%
\begin{equation}
f(x+d^{2})-f(x)\geq f(x+d^{1}+d^{2})-f(x+d^{1})  \label{eq:subm-d}
\end{equation}%
for every $x\in \mathcal{L}$ and $d^{1},d^{2}\geq 0$ such that $d^{1}$ and $%
d^{2}$ are \emph{orthogonal} (assume that all vectors are in the domain
where $f$ is finite). The interpretation of (\ref{eq:subm-d}) is that the
contribution of $d^{2}$ to $f$ can only decrease as we increase $x$ by $%
d^{1} $; since $d^{1}$ and $d^{2}$ are orthogonal, this means that the
change in $f $ when we increase some coordinates can only be smaller when 
\emph{other} coordinates are larger. (This equivalent formulation shows that
(\ref{eq:submod-f}) is the appropriate definition of submodularity for
pricing functions; cf. Babaioff, Nisan, and Rubinstein 2018).

When $f$ is twice differentiable, it is thus submodular if and only if its
second-order partial derivatives satisfy%
\begin{equation}
\frac{\partial ^{2}f}{\partial x_{i}\partial x_{j}}(x)\leq 0\text{ for all }%
i\neq j;  \label{eq:d2p<=0}
\end{equation}%
i.e., all \emph{off-diagonal} elements of the Hessian matrix $\nabla
^{2}f(x) $ are nonpositive: the $i$-th partial derivative $\partial
f(x)/\partial x_{i}$ at $x$ can only decrease when some coordinate $x_{j}$
with $j$ \emph{different} from $i$ increases. We emphasize that
submodularity does \emph{not} entail (\ref{eq:subm-d}) for \emph{all} $%
d^{1},d^{2}\geq 0$, but only for orthogonal $d^{1},d^{2}\geq 0$; thus, there
is no requirement on the diagonal elements of the Hessian matrix. In fact,
the function $f$ may well be convex,\footnote{%
For example, the function $f(x)=(x_{1}-x_{2})^{2}$ is submodular and convex.}
in which case the opposite inequality holds for $d^{1}=d^{2}$, and the
diagonal elements of the Hessian matrix are nonnegative, i.e., $\partial
^{2}f(x)/\partial x_{i}^{2}\geq 0$ for all $i$.

Similarly, supermodularity is equivalent to%
\begin{equation}
f(x+d^{2})-f(x)\leq f(x+d^{1}+d^{2})-f(x+d^{1})  \label{eq:super-d}
\end{equation}%
for \emph{orthogonal} $d^{1},d^{2}\geq 0$, which means that the marginals of 
$f$ can only increase when \emph{other} coordinates increase (and so $%
\partial ^{2}f(x)/\partial x_{i}\partial x_{j}\geq 0$ for all $i\neq j$ in
the twice-differentiable case).

\emph{Ultramodularity}. A function $f$ is \emph{ultramodular} if (\ref%
{eq:super-d}) holds for \emph{all} $d^{1},d^{2}\geq 0$ (whether or not they
are orthogonal); when $f$ is twice differentiable, this translates to all
the elements of the Hessian matrix $\nabla ^{2}f(x)$ being nonnegative: $%
\partial ^{2}f(x)/\partial x_{i}\partial x_{j}\geq 0$ for \emph{all}\textbf{%
\ }$i,j$. For convex functions, supermodularity is equivalent to the
stronger condition of ultramodularity (because (\ref{eq:super-d}) holds for $%
i=j$ by convexity; see Corollary 4.1 in Marinacci and Montrucchio 2005).

\emph{Separable subadditivity and superadditivity.} First, we note that
separable subadditivity is weaker than subadditivity, which requires $%
f(x+y)\leq f(x)+f(y)$ to hold not only for orthogonal $x$ and $y,$ but for 
\emph{all} $x$ and $y$ (the two properties are equivalent when the domain $%
\mathcal{L}$ is $\{0,1\}^{k},$ because then $x+y\in \mathcal{L}$ if and only
if $x$ and $y$ are orthogonal). The same applies to separable
superadditivity vs. superadditivity. For functions $f$ that satisfy $f(%
\mathbf{0})=0$ supermodularity implies separable superadditivity, and
submodularity implies separable subadditivity (because for orthogonal
vectors $x,y$ we have $x\vee y=x+y$ and $x\wedge y=\mathbf{0}$).

Next, the separable subadditivity condition (\ref{eq:subadd-f}) can be
equivalently expressed as follows. Given a partition of $K$ into two
disjoint sets $K^{\prime }$ and $K^{\prime \prime }$, every vector $x\in 
\mathbb{R}_{+}^{k}$ can be expressed as the sum $x=x^{\prime }+x^{\prime
\prime }$ of the orthogonal vectors $x^{\prime }$ and $x^{\prime \prime }$
with supports $K^{\prime }$ and $K^{\prime \prime }$, respectively (put $%
x_{i}^{\prime }=x_{i}$ and $x_{i}^{\prime \prime }=0$ for $i\in K^{\prime }$%
, and $x_{i}^{\prime }=0$ and $x_{i}^{\prime \prime }=x_{i}$ for $i\in
K^{\prime \prime }$).\footnote{%
When $\mathcal{L}$ is $\mathbb{R}_{+}^{k},$ $[0,1]^{k},$ or $\{0,1\}^{k},$
if $x\in \mathcal{L}$ then $x^{\prime },x^{\prime \prime }\in \mathcal{L}.$}
Then $f$ is separably subadditive if and only if%
\begin{equation*}
f(x)\leq f(x^{\prime })+f(x^{\prime \prime })
\end{equation*}%
for every $x$ and every partition $K=K^{\prime }\cup K^{\prime \prime }$.
Applying this repeatedly yields 
\begin{equation*}
f(x)\leq \sum_{i=1}^{k}f(x_{i}e^{i}),
\end{equation*}%
where $e^{i}\in \mathbb{R}_{+}^{k}$ denotes the $i$-th unit vector (and so $%
x=\sum_{i}x_{i}e^{i}$). The same applies to separable superadditivity, with
all inequalities reversed.

Finally, if $f$ is both submodular and supermodular, i.e., (\ref{eq:submod-f}%
) and (\ref{eq:supermod-f}) hold as equalities, then (\ref{eq:subadd-f}) and
(\ref{eq:superadd-f}) hold as equalities, and $f$ is \emph{separably additive%
} (or \textquotedblleft additive across coordinates"), i.e., $%
f(x)=\sum_{i=1}^{k}f(x_{i}e^{i})$ (where $e^{i}$ is the $i$-th unit vector).

Turning now to mechanisms, we define:

\begin{itemize}
\item A mechanism $\mu $ is \emph{submodular }/\emph{\ supermodular }/\emph{%
\ separably subadditive} /\emph{\ separably superadditive }if its canonical
pricing function $p:[0,1]^{k}\rightarrow \lbrack 0,\infty ]$ is,
respectively, supermodular / submodular / separably subadditive /\ separably
superadditive\emph{.}
\end{itemize}

\noindent Thus, $\mu $ is submodular if%
\begin{equation}
p(g)+p(h)\geq p(g\vee h)+p(g\wedge h)  \label{eq:sub-m}
\end{equation}%
holds for its canonical pricing function $p$ for every $g,h\in \lbrack
0,1]^{k}$; i.e., (see (\ref{eq:subm-d}) and (\ref{eq:d2p<=0})), the marginal
price of good $i$ can only decrease as the quantity of good $j$ increases.
For deterministic mechanisms it suffices that (\ref{eq:sub-m}) holds for
deterministic allocations, i.e., for sets of goods: 
\begin{equation}
p(A)+p(B)\geq p(A\cup B)+p(A\cap B)  \label{eq:subm-sets}
\end{equation}%
for all $A,B\subseteq K$ (which is easily seen to be equivalent to%
\begin{equation}
p(A\cup \{i\})-p(A)\geq p(A\cup \{i,j\})-p(A\cup \{j\})
\label{eq:submod-i-j}
\end{equation}%
for all $A\subset K$ and $i\neq j$ not in $A$); Proposition \ref{p:subm-p0}
in Appendix \ref{sus-a:submod p} will show that it suffices that (\ref%
{eq:subm-sets}) holds \emph{only} for $A$ and $B$ in the range of
allocations $Q$ of the mechanism and, moreover, for \emph{some} pricing
function $p$, not necessarily the canonical one.

\subsection{Monotonicity and Allocation Monotonicity\label{sus:mon and amon}}

Let $\mu =(q,s)$ be a mechanism.

\begin{itemize}
\item $\mu $ is \emph{monotonic} (Hart and Reny 2015) if its payment
function $s$ is nondecreasing, i.e., $s(x)\leq s(y)$ for every two
valuations $x\leq y$ in $\mathbb{R}_{+}^{k}.$

\item $\mu $ is \emph{allocation monotonic} if its allocation function $q$
is nondecreasing,\footnote{%
Adding the IC inequalities $q(x)\cdot x-s(x)\geq q(y)\cdot x-s(y)$ (at $x)$
and $q(y)\cdot y-s(y)\geq q(x)\cdot y-s(x)$ (at $y)$ gives $(q(y)-q(x))\cdot
(y-x)\geq 0,$ which does \emph{not} yield $q(y)\geq q(x)$ when $y\geq x$
unless we are in the one-dimensional case of $k=1$ (see Hart and Reny 2015).}
i.e., $q(x)\leq q(y)$ for every two valuations $x\leq y$ in $\mathbb{R}%
_{+}^{k}.$
\end{itemize}

Let \textsc{MonRev}$(X)$ and \textsc{AMonRev}$(X)$ denote the maximal
revenue that can be achieved by monotonic and allocation-monotonic
mechanisms, respectively, for a $k$-good random valuation $X$. In the case
of one good, i.e., when $k=1$, every IC mechanism is monotonic and
allocation monotonic (this follows from the IC conditions: see the proof of
Proposition 2 in Hart and Reny 2015), and so \textsc{AMonRev}$(X)=\,$\textsc{%
MonRev}$(X)=\,$\textsc{Rev}$(X).$

Two immediate observations: allocation monotonicity is a stronger
requirement than monotonicity, and both \textsc{MonRev} and \textsc{AMonRev}%
---unlike \textsc{Rev}---are nondecreasing with respect to first-order
stochastic dominance: they can only increase when the buyer valuations,
i.e., his \textquotedblleft willingness to pay," increase.

\begin{proposition}
\label{p:amon -> mon}If the mechanism $\mu $ is allocation monotonic then it
is monotonic.
\end{proposition}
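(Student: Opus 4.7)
The plan is to extract monotonicity of the payment function $s$ from a single application of incentive compatibility, combined with the hypothesis that the allocation function $q$ is nondecreasing and that valuations lie in $\mathbb{R}_+^k$. Fix any two valuations $x\leq y$ in $\mathbb{R}_+^k$; I want to show $s(x)\leq s(y)$.

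I would first write the IC inequality \emph{at the lower type} $x$, namely that a buyer of type $x$ weakly prefers the truthful report $x$ to the deviation $y$:
\begin{equation*}
q(x)\cdot x - s(x) \;\geq\; q(y)\cdot x - s(y).
\end{equation*}
Rearranging, this gives the lower bound
\begin{equation*}
s(y) - s(x) \;\geq\; \bigl(q(y)-q(x)\bigr)\cdot x.
\end{equation*}

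Now I would invoke allocation monotonicity: since $x\leq y$, we have $q(x)\leq q(y)$ coordinatewise, so $q(y)-q(x)\geq 0$. Combined with $x\geq 0$, the inner product on the right-hand side is nonnegative, yielding $s(y)\geq s(x)$, as desired.

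There is essentially no obstacle here; the only point worth flagging is that one must use IC at the \emph{lower} type $x$ (IC at $y$ would give an upper bound on $s(y)-s(x)$, which is the wrong direction), and that nonnegativity of $x$ is what converts the sign information on $q(y)-q(x)$ into the sign information on the payment difference. No appeal to NPT, IR, or the structure of the canonical pricing function is needed.
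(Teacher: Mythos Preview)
Your proof is correct and is essentially the same as the paper's: both amount to applying IC at the lower type $x$ against the report $y$ and using $q(y)\geq q(x)$ together with $x\geq 0$. The paper phrases it as a contradiction (if $s(y)<s(x)$ then $(q(y),s(y))$ strictly dominates $(q(x),s(x))$, violating optimality at $x$), while you do the direct inequality manipulation; the content is identical.
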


\begin{proof}
If $q(y)\geq q(x)$ but $s(y)<s(x)$ then $(q(y),s(y))$ is always strictly
better than $(q(x),s(x))$ (a higher allocation at a strictly cheaper
payment), and so the latter cannot be the optimal choice at $x.$
\end{proof}

\begin{proposition}
\label{p:rev mon}Let $X$ and $Y$ be random valuations. If $Y$ first-order
stochastically dominates $X$ then%
\begin{equation*}
\text{\textsc{MonRev}}(Y)\geq \text{\textsc{MonRev}}(X)\text{\ \ \ and\ \ \ 
\textsc{AMonRev}}(Y)\geq \text{\textsc{AMonRev}}(X).
\end{equation*}
\end{proposition}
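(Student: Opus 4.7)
The plan is to exploit the fact that both classes in question have nondecreasing payment functions, and then to invoke the coupling characterization of multivariate first-order stochastic dominance.

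First I would recall the standard characterization of first-order stochastic dominance: $Y$ first-order stochastically dominates $X$ if and only if $\mathbb{E}[f(Y)] \geq \mathbb{E}[f(X)]$ for every (measurable) nondecreasing function $f : \mathbb{R}_+^k \to \mathbb{R}$ for which both expectations exist; equivalently, by Strassen's theorem, there is a coupling of $X$ and $Y$ on a common probability space such that $Y \geq X$ almost surely, coordinatewise. Either formulation will do.

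Next I would fix any monotonic mechanism $\mu = (q,s)$. By definition of monotonicity the payment function $s : \mathbb{R}_+^k \to \mathbb{R}_+$ is nondecreasing, so the FOSD characterization applied to $f = s$ gives
\begin{equation*}
R(\mu;Y) = \mathbb{E}[s(Y)] \;\geq\; \mathbb{E}[s(X)] = R(\mu;X).
\end{equation*}
Taking the supremum over all monotonic $\mu$ on each side yields $\text{\textsc{MonRev}}(Y) \geq \text{\textsc{MonRev}}(X)$. For the second inequality, observe that by Proposition \ref{p:amon -> mon} every allocation-monotonic mechanism is in particular monotonic, so its payment function is nondecreasing and the same argument delivers $R(\mu;Y) \geq R(\mu;X)$ for each allocation-monotonic $\mu$; the supremum over this subclass then gives $\text{\textsc{AMonRev}}(Y) \geq \text{\textsc{AMonRev}}(X)$.

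There is essentially no obstacle: the content of the proposition is just that the definitions of the two revenue notions depend on the distribution only through the expectation of a nondecreasing function (the payment), which is precisely what FOSD controls. The only care to take is the measurability of $s$ (assumed throughout the paper, cf. the footnote on Borel-measurability in Section \ref{sus:model}) and the fact that we are comparing suprema of the \emph{same} family of mechanisms evaluated at two different distributions, which is immediate since the class of monotonic (resp.\ allocation-monotonic) mechanisms does not depend on $X$ or $Y$.
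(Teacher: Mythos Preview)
Your proof is correct and follows essentially the same approach as the paper: invoke the coupling characterization of first-order stochastic dominance, use that $s$ is nondecreasing for any (allocation-)monotonic mechanism to get $R(\mu;X)\leq R(\mu;Y)$, and then take the supremum over the respective class. The paper's proof is just a terser version of what you wrote.
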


\begin{proof}
Without loss of generality assume that the random variables $X$ and $Y$ are
\textquotedblleft coupled," i.e., that they are defined on the same
probability space and $X\leq Y$ pointwise. Then, for every
(allocation-)monotonic mechanism $\mu =(q,s)$ we have $s(X)\leq s(Y)$
(because $s$ is nondecreasing), and so $R(\mu ;X)=\mathbb{E}\left[ s(X)%
\right] \leq \mathbb{E}\left[ s(Y)\right] =R(\mu ;Y)$. Taking the supremum
over all (allocation-)monotonic mechanisms $\mu $ yields the result.
\end{proof}

\bigskip

Two classes of monotonic mechanisms have been identified in Hart and Reny
(2015): the class of deterministic symmetric mechanisms, and the class of
submodular mechanisms (Propositions 5 and 8 there). We will consider these
two classes in Sections \ref{s:sym det} and \ref{s:a-mon} below. While the
mechanisms in the first class need not be allocation monotonic---see Example %
\ref{ex:mon-not-am} in the Introduction---those in the second class are
allocation monotonic (the argument that proves Proposition 8 in Hart and
Reny 2015 can be easily adapted to show this; see also Section \ref{s:a-mon}
below).

\section{Revenue of Monotonic Mechanisms\label{s:mon-rev}}

In this section we prove that monotonic mechanisms for $k$ goods \emph{cannot%
} extract more than $k$ times the separate revenue, and also no more than $k$
times the bundled revenue (Theorem A), which is very different from
general---and hence necessarily nonmonotonic---mechanisms for $k\geq 2$,
which can extract an arbitrarily large multiple of these (and any other
simple) revenues (Theorem B).

\begin{theorema}
\label{th:monrev}Let $X$ be a $k$-good random valuation. Then%
\begin{equation*}
\text{\textsc{MonRev}}(X)\leq k\cdot \min \{\text{\textsc{SRev}}(X),\,\text{%
\textsc{BRev}}(X)\}.
\end{equation*}
\end{theorema}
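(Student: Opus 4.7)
The plan is to handle both bounds via one construction: from the monotonic $k$-good mechanism $\mu=(q,s)$ I would extract a single-good mechanism $\hat\mu$ by restricting attention to the diagonal valuations $v\mathbf{1}\in\mathbb{R}_{+}^{k}$, and then leverage monotonicity of $s$ to transfer Myerson-type one-good revenue bounds back to the $k$-good setting.

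Concretely, I would define $\hat\mu=(\hat q,\hat s)$ on $\mathbb{R}_{+}$ by
\begin{equation*}
\hat q(v)\;:=\;\tfrac{1}{k}\sum_{j=1}^{k}q_{j}(v\mathbf{1})\in[0,1],\qquad \hat s(v)\;:=\;\tfrac{1}{k}\,s(v\mathbf{1}).
\end{equation*}
The key identity is that the single-good payoff in $\hat\mu$ of a buyer with value $v$ who reports $\hat v$ equals exactly $1/k$ times the $k$-good payoff in $\mu$ of a buyer with valuation $v\mathbf{1}$ who reports the valuation $\hat v\mathbf{1}$. Restricting the IC constraints of $\mu$ to diagonal pairs $(v\mathbf{1},\hat v\mathbf{1})$ therefore yields IC of $\hat\mu$; IR follows from $b(v\mathbf{1})\ge 0$ and NPT from $s\ge 0$, so $\hat\mu$ is a valid one-good IC/IR/NPT mechanism.

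Next I would use monotonicity of $s$ in the clean form: whenever $v\ge\max_{i}x_{i}$ one has $x\le v\mathbf{1}$, and hence $s(x)\le s(v\mathbf{1})$. Plugging in the random variable $V=Y:=X_{1}+\cdots+X_{k}$ (so $X\le Y\mathbf{1}$ coordinatewise) gives
\begin{equation*}
\mathbb{E}[s(X)]\;\le\;\mathbb{E}[s(Y\mathbf{1})]\;=\;k\,\mathbb{E}[\hat s(Y)]\;\le\;k\,\text{\textsc{Rev}}(Y)\;=\;k\,\text{\textsc{BRev}}(X),
\end{equation*}
which is the bundled bound. For the separate bound I would use instead $V=M:=\max_{i}X_{i}$, which again satisfies $X\le M\mathbf{1}$; the identical chain delivers $\mathbb{E}[s(X)]\le k\,\text{\textsc{Rev}}(M)$, and a one-line union bound then closes the argument: for every posted price $p\ge 0$,
\begin{equation*}
p\,\mathbb{P}[M\ge p]\;\le\;\sum_{i=1}^{k}p\,\mathbb{P}[X_{i}\ge p]\;\le\;\sum_{i=1}^{k}\text{\textsc{Rev}}(X_{i})\;=\;\text{\textsc{SRev}}(X),
\end{equation*}
whence $\text{\textsc{Rev}}(M)\le\text{\textsc{SRev}}(X)$ after taking the supremum over $p$.

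The main conceptual step, in my view, is identifying the right dominating one-dimensional random variable in the monotonicity inequality: $Y$ delivers \textsc{BRev} directly, whereas for \textsc{SRev} one must use $M=\max_{i}X_{i}$ and close the chain with the extra Myerson-plus-union-bound step $\text{\textsc{Rev}}(M)\le\text{\textsc{SRev}}(X)$. The IC verification for $\hat\mu$ is then the only routine check to grind through---one must confirm that restricting $\mu$'s multi-dimensional IC constraints to just the diagonal pairs really does generate all the one-dimensional IC inequalities needed for $\hat\mu$, and that the averaging in $\hat q$ is compatible with this. Everything else is just monotonicity of $s$ combined with the single-good revenue bound applied to $Y$ and $M$.
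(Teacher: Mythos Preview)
Your proof is correct and follows essentially the same approach as the paper. The diagonal restriction $\hat\mu$ is exactly one direction of the paper's Lemma~\ref{l:Ye} (which states $\textsc{Rev}(Ye)=k\cdot\textsc{Rev}(Y)$), and your union-bound step for $\textsc{Rev}(\max_i X_i)\le\textsc{SRev}(X)$ is one half of the paper's Proposition~\ref{p:max}; the only cosmetic difference is that the paper routes both bounds through $M=\max_i X_i$ and then invokes $\textsc{Rev}(M)\le\textsc{Rev}(\sum_i X_i)$ via one-good revenue monotonicity, whereas you handle \textsc{BRev} slightly more directly by taking $V=\sum_i X_i$ from the start.
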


Since selling separately and selling the bundle of all goods are monotonic
mechanisms, we trivially have 
\begin{equation}
\text{\textsc{MonRev}}(X)\geq \max \{\text{\textsc{SRev}}(X),\,\text{\textsc{%
BRev}}(X)\}.  \label{eq:SBrev}
\end{equation}

An immediate implication of Theorem \ref{th:monrev} and the results of Hart
and Nisan (2019)\footnote{%
See Briest et al. (2015) for $k\geq 3$.} is that for two or more goods \emph{%
monotonic mechanisms cannot guarantee any positive fraction of the optimal
revenue. }The Guaranteed Fraction of Optimal Revenue (\textsc{GFOR}) for
monotonic mechanisms is thus zero, the same as that of any class of
\textquotedblleft simple mechanisms," and the revenue is mostly obtained
from \emph{nonmonotonic} mechanisms. Theorem \ref{th:monrev} implies that
monotonic mechanisms are no better, in terms of revenue, than mechanisms
with a menu size that is at most the number of goods. Formally:

\begin{theorema}
\label{c:gfor(mon)}For every $k\geq 2$:

\begin{description}
\item[(i)] There exists a $k$-good random valuation $X$ such that%
\begin{equation*}
\text{\textsc{MonRev}}(X)=1\text{\ \ and\ \ \textsc{Rev}}(X)=\infty .
\end{equation*}

\item[(ii)] For every $\varepsilon >0$ there exists a $k$-good random
valuation with bounded values (in, say, $[0,1]^{k}$) such that%
\begin{equation*}
\text{\textsc{MonRev}}(X)<\varepsilon \cdot \text{\textsc{Rev}}(X).
\end{equation*}

\item[(iii)] There exists a $k$-good random valuation $X$ such that%
\begin{equation*}
\text{\textsc{MonRev}}(X)\leq \frac{k^{2}}{2^{k}-1}\cdot \text{\textsc{DRev}}%
(X).
\end{equation*}
\end{description}
\end{theorema}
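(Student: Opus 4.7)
The plan is to reduce every part to Theorem A, which gives the universal bound $\text{\textsc{MonRev}}(X) \leq k\cdot\min\{\text{\textsc{SRev}}(X),\text{\textsc{BRev}}(X)\}$, and then to invoke (or exhibit) examples where one of $\text{\textsc{SRev}}(X)$ or $\text{\textsc{BRev}}(X)$ is very small compared with the relevant ``target'' revenue ($\text{\textsc{Rev}}$ or $\text{\textsc{DRev}}$).

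For part (i), I would start from Hart and Nisan (2019), which, for every $k\geq 2$, provides a $k$-good random valuation $X_0$ whose separate and bundled revenues are \emph{finite}, say $\text{\textsc{SRev}}(X_0),\text{\textsc{BRev}}(X_0)<\infty$, while $\text{\textsc{Rev}}(X_0)=\infty$. By Theorem A, $\text{\textsc{MonRev}}(X_0)\leq k\cdot\text{\textsc{SRev}}(X_0)<\infty$; moreover, by the trivial lower bound \eqref{eq:SBrev}, $\text{\textsc{MonRev}}(X_0)\geq\text{\textsc{SRev}}(X_0)>0$ (assuming the example is nondegenerate, which we can arrange). All three revenues are positively homogeneous of degree one in $X$, so after rescaling $X:=X_0/\text{\textsc{MonRev}}(X_0)$ we obtain $\text{\textsc{MonRev}}(X)=1$ while $\text{\textsc{Rev}}(X)=\infty$.

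For part (ii), I would take the bounded version of the Hart--Nisan construction, which provides, for any $C>0$, a $k$-good random valuation taking values in, say, $[0,1]^{k}$ with $\text{\textsc{Rev}}(X)/\text{\textsc{SRev}}(X)\geq C$. Choosing $C=k/\varepsilon$ and applying Theorem A gives
\[
\text{\textsc{MonRev}}(X)\;\leq\;k\cdot\text{\textsc{SRev}}(X)\;\leq\;k\cdot\tfrac{\varepsilon}{k}\,\text{\textsc{Rev}}(X)\;=\;\varepsilon\cdot\text{\textsc{Rev}}(X).
\]

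For part (iii), the plan is again to combine Theorem A with an example in which $\text{\textsc{DRev}}$ is exponentially larger than $\min\{\text{\textsc{SRev}},\text{\textsc{BRev}}\}$. Concretely, it suffices to exhibit a $k$-good valuation $X$ with $\text{\textsc{DRev}}(X)\geq \frac{2^{k}-1}{k}\cdot\min\{\text{\textsc{SRev}}(X),\text{\textsc{BRev}}(X)\}$, since then Theorem A immediately yields $\text{\textsc{MonRev}}(X)\leq k\cdot\min\{\text{\textsc{SRev}},\text{\textsc{BRev}}\}\leq\frac{k^{2}}{2^{k}-1}\cdot\text{\textsc{DRev}}(X)$. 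The natural candidate is a valuation supported on ``bundle indicators'' $v_{S}\,\mathbf{1}_{S}$ over the $2^{k}-1$ nonempty subsets $S\subseteq K$, with the masses $p_{S}$ and values $v_{S}$ chosen so that (a) a deterministic (necessarily nonmonotonic) menu offering a separate price to each bundle extracts essentially full surplus, of order $2^{k}-1$ in normalized units, while (b) selling the goods separately or as a single bundle forces the seller to pool many subsets under one price and loses all but an $O(k)$ factor. This is the construction underlying the exponential menu-size gaps of Hart and Nisan (2019) and Briest et al.\ (2015), and I expect the main technical obstacle to be verifying the two matching estimates $\text{\textsc{DRev}}(X)=\Omega(2^{k}/k)$ and $\max\{\text{\textsc{SRev}}(X),\text{\textsc{BRev}}(X)\}=O(1)$ on such a construction; the rest of the argument is then a one-line application of Theorem A.
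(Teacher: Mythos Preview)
Your proposal is correct and follows essentially the same route as the paper: reduce each part to Theorem~A and then invoke the Hart--Nisan (2019) gap constructions. The paper's proof is simply the one-liner ``Use \textsc{MonRev}$(X)\leq k\cdot$\textsc{BRev}$(X)$ and Theorems~A and~D in Hart and Nisan (2019),'' so the only difference is that the paper cites the specific external theorems (and works through \textsc{BRev} rather than $\min\{\textsc{SRev},\textsc{BRev}\}$), whereas you sketch the underlying constructions; in particular, for part~(iii) the relevant input is precisely an $X$ with $\textsc{DRev}(X)\geq\frac{2^{k}-1}{k}\cdot\textsc{BRev}(X)$, which is Theorem~D there, so no further verification is needed.
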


\begin{proof}
Use \textsc{MonRev}$(X)\leq k\cdot $\textsc{BRev}$(X)$ and Theorems A and D
in Hart and Nisan (2019).
\end{proof}

\bigskip

To prove Theorem \ref{th:monrev}, we start with two preliminary results. A $%
k $-good valuation $X$ all of whose coordinates are equal, i.e., with
support included in the set $\{ye:y\geq 0\}\subset \mathbb{R}_{+}^{k}$,
where $e:=(1,...,1)\in \mathbb{R}_{+}^{k}$, is called a \emph{diagonal}
valuation. A diagonal valuation is thus 
\begin{equation*}
(\underbrace{Y,...,Y}_{k})\equiv Ye,
\end{equation*}%
where $Y$ is a one-good random valuation$.$

\begin{lemma}
\label{l:Ye}Let $Ye$ be a $k$-good diagonal valuation. Then%
\begin{equation*}
\text{\textsc{Rev}}(Ye)=k\cdot \text{\textsc{Rev}}(Y).
\end{equation*}
\end{lemma}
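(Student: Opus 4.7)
The plan is to establish the equality by two separate inequalities. For the lower bound $\text{\textsc{Rev}}(Ye) \ge k\cdot \text{\textsc{Rev}}(Y)$, I would simply observe that selling each of the $k$ goods separately is feasible (it produces an IC, IR, NPT mechanism), and since the $k$ marginals of $Ye$ are all distributed like $Y$, running the Myerson-optimal single-good mechanism on each good yields total expected revenue $k\cdot\text{\textsc{Rev}}(Y)$. Thus $\text{\textsc{SRev}}(Ye)=k\cdot\text{\textsc{Rev}}(Y)$, which is itself a lower bound on $\text{\textsc{Rev}}(Ye)$.

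The main work is the reverse inequality $\text{\textsc{Rev}}(Ye)\le k\cdot\text{\textsc{Rev}}(Y)$. I would take any $k$-good IC/IR/NPT mechanism $\mu=(q,s)$ and restrict its behavior to the diagonal. Writing $Q(y):=\sum_{i=1}^{k} q_i(ye)$ and $S(y):=s(ye)$, define a one-good mechanism $\tilde\mu=(\tilde q,\tilde s)$ on $\mathbb{R}_+$ by
\begin{equation*}
\tilde q(y):=\frac{Q(y)}{k},\qquad \tilde s(y):=\frac{S(y)}{k}.
\end{equation*}
Since each $q_i(ye)\in[0,1]$, we get $\tilde q(y)\in[0,1]$; NPT and $\tilde s(\mathbf 0)=0$ are inherited directly from $\mu$. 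The key observation is that for every pair of one-good values $y,y'\ge 0$, the one-good buyer payoff of $\tilde\mu$ obtained by a type-$y$ buyer reporting $y'$ equals exactly $1/k$ times the payoff in $\mu$ obtained by a type-$ye$ buyer reporting $y'e$:
\begin{equation*}
y\,\tilde q(y')-\tilde s(y')=\tfrac{1}{k}\bigl(y\,Q(y')-S(y')\bigr).
\end{equation*}
Thus IR of $\mu$ at $ye$ yields IR of $\tilde\mu$ at $y$, and IC of $\mu$ restricted to reports on the diagonal yields IC of $\tilde\mu$. So $\tilde\mu$ is a valid one-good IC/IR/NPT mechanism.

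Computing revenues, $R(\tilde\mu;Y)=\mathbb{E}[\tilde s(Y)]=\mathbb{E}[S(Y)]/k=R(\mu;Ye)/k$, and since $R(\tilde\mu;Y)\le\text{\textsc{Rev}}(Y)$ by Myerson (equation (\ref{eq:one good})), we obtain $R(\mu;Ye)\le k\cdot\text{\textsc{Rev}}(Y)$. Taking the supremum over all mechanisms $\mu$ gives the desired upper bound, completing the proof. The only slightly subtle point — which is where I would focus the verification — is that diagonal-only IC of $\mu$ suffices to get full one-dimensional IC of $\tilde\mu$; this follows because $\tilde\mu$'s type space is itself one-dimensional and the required inequality is precisely the diagonal restriction.
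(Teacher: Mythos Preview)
Your proof is correct and the upper bound argument is essentially identical to the paper's (restricting $\mu$ to the diagonal and rescaling by $1/k$). For the lower bound you take a slightly different route: you invoke \textsc{SRev}, observing that selling each coordinate separately already achieves $k\cdot\text{\textsc{Rev}}(Y)$ since every marginal of $Ye$ is distributed as $Y$. The paper instead lifts an arbitrary one-good mechanism $\nu=(\hat q,\hat s)$ to a $k$-good mechanism via the coordinate average, setting $q(x)=(\hat q(\bar x),\dots,\hat q(\bar x))$ and $s(x)=k\hat s(\bar x)$ with $\bar x=(1/k)\sum_i x_i$. Both constructions are valid; yours is arguably simpler (it piggybacks on the already-defined \textsc{SRev}), while the paper's exhibits an explicit two-sided correspondence between one-good and $k$-good mechanisms along the diagonal, making the equality $R(\mu;Ye)=k\cdot R(\nu;Y)$ hold mechanism-by-mechanism rather than only at the supremum.
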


\begin{proof}
The restriction of any (IC and IR) $k$-good mechanism $\mu =(q,s)$ to the
diagonal $\{ye:y\geq 0\}\subset $ $\mathbb{R}_{+}^{k}$ yields an (IC\ and
IR) one-good mechanism $\nu =(\hat{q},\hat{s})$ given by $\hat{q}%
(y):=(1/k)\sum_{i=1}^{k}q_{i}(ye)$ and $\hat{s}(y):=(1/k)s(ye)$ for every $%
y\geq 0$. Conversely, any (IC and IR) one-good mechanism $\nu =(\hat{q},\hat{%
s})$ yields an (IC and IR) $k$-good mechanism $\mu =(q,s)$ given by $q(x):=(%
\hat{q}(\bar{x}),...,\hat{q}(\bar{x}))\in \lbrack 0,1]^{k}$ and $s(x):=k\hat{%
s}(\bar{x})$, where $\bar{x}:=(1/k)\sum_{i=1}^{k}x_{i}$, for every $x\in 
\mathbb{R}_{+}^{k}$. For the revenue, we have $R(\mu ;Ye)=k\cdot R(\nu ;Y).$
\end{proof}

\bigskip

The following result may be of independent interest.

\begin{proposition}
\label{p:max}Let $X_{1},...,X_{n}$ be one-good random valuations.\footnote{%
With arbitrary correlation between the $X_{i}$-s.} Then%
\begin{equation*}
\text{\textsc{Rev}}\left( \max_{1\leq i\leq n}X_{i}\right) \leq \min \left\{
\sum_{i=1}^{n}\text{\textsc{Rev}}(X_{i}),\,\text{\textsc{Rev}}\left(
\sum_{i=1}^{n}X_{i}\right) \right\} .
\end{equation*}
\end{proposition}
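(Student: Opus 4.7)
Both bounds are single-good statements, so the natural tool is Myerson's formula (\ref{eq:one good}): for any one-good random variable $Z$,
\[
\text{\textsc{Rev}}(Z)=\sup_{t\geq 0}t\cdot\mathbb{P}[Z\geq t].
\]
All the work reduces to comparing the survival functions of $\max_i X_i$, of $\sum_i X_i$, and of the individual $X_i$.

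For the bound $\text{\textsc{Rev}}(\max_i X_i)\leq \text{\textsc{Rev}}(\sum_i X_i)$, I would simply observe that since each $X_i\geq 0$, we have $\max_i X_i\leq \sum_i X_i$ pointwise, hence $\max_i X_i$ is first-order stochastically dominated by $\sum_i X_i$, i.e.\ $\mathbb{P}[\max_i X_i\geq t]\leq \mathbb{P}[\sum_i X_i\geq t]$ for every $t\geq 0$. Multiplying by $t$ and taking the supremum, Myerson's formula yields the desired inequality. (Equivalently, this is just the $k=1$ case of Proposition \ref{p:rev mon} once one notes that every IC, IR one-good mechanism is monotonic.)

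For the bound $\text{\textsc{Rev}}(\max_i X_i)\leq \sum_i \text{\textsc{Rev}}(X_i)$, the key step is the union bound: the event $\{\max_i X_i\geq t\}$ is the union $\bigcup_i\{X_i\geq t\}$, so
\[
\mathbb{P}\!\left[\max_i X_i\geq t\right]\leq \sum_{i=1}^n \mathbb{P}[X_i\geq t].
\]
Multiplying by $t\geq 0$ and using Myerson gives
\[
t\cdot \mathbb{P}\!\left[\max_i X_i\geq t\right]\leq \sum_{i=1}^n t\cdot \mathbb{P}[X_i\geq t]\leq \sum_{i=1}^n \text{\textsc{Rev}}(X_i).
\]
Taking $\sup_{t\geq 0}$ on the left-hand side, once more via Myerson, yields the claim. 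Neither step poses any real obstacle; the only subtlety worth noting is that the correlation between the $X_i$ plays no role, because both the union bound and the pointwise inequality $\max\leq \sum$ are correlation-free.
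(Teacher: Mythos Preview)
Your proposal is correct and matches the paper's proof essentially line for line: the paper uses the union bound and Myerson's formula for the $\sum_i\textsc{Rev}(X_i)$ bound, and the pointwise inequality $\max_i X_i\leq \sum_i X_i$ together with one-good revenue monotonicity for the $\textsc{Rev}(\sum_i X_i)$ bound.
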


\begin{proof}
For \textsc{SRev}: for every $t\geq 0$ we have%
\begin{equation*}
\mathbb{P}\left[ \max_{1\leq i\leq n}X_{i}\geq t\right] =\mathbb{P}\left[
\bigcup\nolimits_{1\leq i\leq n}\{X_{i}\geq t\}\right] \leq \sum_{i=1}^{n}%
\mathbb{P}\left[ X_{i}\geq t\right] ,
\end{equation*}%
and thus%
\begin{equation*}
t\cdot \mathbb{P}\left[ \max_{1\leq i\leq n}X_{i}\geq t\right] \leq
\sum_{i=1}^{n}t\cdot \mathbb{P}\left[ X_{i}\geq t\right] \leq \sum_{i=1}^{n}%
\text{\textsc{Rev}}(X_{i}).
\end{equation*}%
Taking the supremum over $t\geq 0$ yields \textsc{Rev}$\left(
\max_{i}X_{i}\right) \leq \sum_{i}$\textsc{Rev}$(X_{i})$ by the Myerson
result for one good (\ref{eq:one good}).

For \textsc{BRev}: since $\max_{i}X_{i}\leq \sum_{i}X_{i}$ and the revenue
is monotonic for one good (Proposition 11 in Hart and Nisan 2017) we get 
\textsc{Rev}$\left( \max_{i}X_{i}\right) \leq \,$\textsc{Rev}$%
(\sum_{i}X_{i}).$
\end{proof}

\bigskip

\noindent \textbf{Remarks. }\emph{(a) }The result of Proposition \ref{p:max}
is tight: in the following example we have \textsc{Rev}$(\max_{i}X_{i})=%
\sum_{i}\text{\textsc{Rev}}(X_{i})=\,\text{\textsc{Rev}}(\sum_{i}X_{i})$.
Let $(X_{1},...,X_{n})$ take as values the $n$ unit vectors in $\mathbb{R}%
_{+}^{n}$, with probability $1/n$ each; then $\max_{i}X_{i}=\sum_{i}X_{i}=1$
and so \textsc{Rev}$(\max_{i}X_{i})=\text{\textsc{Rev}}(\sum_{i}X_{i})=1$;
and, for each $i$ we have \textsc{Rev}$(X_{i})=1/n$ (obtained by selling at
price $1$), and so $\sum_{i}$\textsc{Rev}$(X_{i})=1.$

\emph{(b) }Let \textsc{SymSRev}$(X)$ denote the maximal revenue that is
obtained by \emph{symmetric} separate mechanisms; by Myerson's result for
one good (\ref{eq:one good}), this is achieved by selling each one of the
goods at the same price $t$, i.e., 
\begin{equation*}
\text{\textsc{SymSRev}}(X)=\sup_{t\geq 0}\sum_{i=1}^{n}t\cdot \mathbb{P}%
\left[ X_{i}\geq t\right]
\end{equation*}%
(whereas \textsc{SRev}$(X)=\sum_{i=1}^{n}\sup_{t\geq 0}t\cdot \mathbb{P}%
\left[ X_{i}\geq t\right] $). The proof of (i) above shows that%
\begin{equation*}
\text{\textsc{Rev}}\left( \max_{1\leq i\leq n}X_{i}\right) \leq \min \left\{ 
\text{\textsc{SymSRev}}(X_{1},...,X_{n}),\,\text{\textsc{BRev}}%
(X_{1},...,X_{n})\right\} .
\end{equation*}%
This tightening of \textsc{SRev} to \textsc{SymSRev} applies to Theorem \ref%
{th:monrev} as well.\footnote{%
Another instance where one can replace \textsc{SRev} with \textsc{SymSRev}
is in Theorem A.7 in Hart and Nisan (2019), whose proof shows that $\int
1/w(t)\,\mathrm{d}t$ is in fact a bound on the multiple of the \emph{%
symmetric} separate revenue.}

\bigskip

We can now prove Theorem \ref{th:monrev}. See Appendix \ref{s-a:monrev} for
an easy generalization.

\bigskip

\begin{proof}[Proof of Theorem \protect\ref{th:monrev}]
Let $X$ be a $k$-good random valuation, and put $Y:=\max_{1\leq i\leq
k}X_{i} $. Since $X\leq Ye$, for any monotonic mechanism $\mu =(q,s)$ we
have $s(X)\leq s(Ye)$, and so, taking expectation, $R(\mu ;X)\leq R(\mu ;Ye)$%
. Taking the supremum over all monotonic $\mu $ yields \textsc{MonRev}$%
(X)\leq \,$\textsc{Rev}$(Ye)$, which equals $k\cdot \,$\textsc{Rev}$%
(Y)=k\cdot \,$\textsc{Rev}$(\max_{i}X_{i})$ by Lemma \ref{l:Ye}; thus,%
\begin{equation}
\text{\textsc{MonRev}}(X)\leq k\cdot \text{\textsc{Rev}}\left( \max_{1\leq
i\leq n}X_{i}\right) .  \label{eq:k-max}
\end{equation}%
Proposition \ref{p:max} completes the proof.
\end{proof}

\bigskip

\noindent \textbf{Remarks. }\emph{(a) }The multiple $k$ is tight for \textsc{%
BRev} in Theorem \ref{th:monrev}: Example 27 in Hart and Nisan (2017) shows
that for every $\varepsilon >0$ there are $k$ independent goods such that 
\begin{equation*}
\text{\textsc{MonRev}}(X)\geq \,\text{\textsc{SRev}}(X)\geq (k-\varepsilon
)\cdot \,\text{\textsc{BRev}}(X)
\end{equation*}%
(the first inequality is due to selling separately being a monotonic
mechanism).

\emph{(b) }When $k=2$ the multiple $k$ is tight for \textsc{SRev} in Theorem %
\ref{th:monrev}: Example A.3 in Hart and Nisan (2019) yields \textsc{MonRev}$%
(X)=\,$\textsc{BRev}$(X)=2\cdot \,$\textsc{SRev}$(X).$

When $k\geq 3$ the best-known lower bound is of the order of $\log k$:
Proposition 25 in Hart and Nisan (2017) shows that for $k$ independent 
\textsc{ER} goods we have%
\begin{equation}
\text{\textsc{MonRev}}(X)\geq \,\text{\textsc{BRev}}(X)\geq \Omega (\log
k)\cdot \,\text{\textsc{SRev}}(X)  \label{eq:brev=logk*srev}
\end{equation}%
(the second inequality cannot be improved because \textsc{BRev}$(X)\leq
O(\log k)\cdot \,$\textsc{SRev}$(X)$ for any $X$ by Proposition A.4 in Hart
and Nisan 2019). See also Example \ref{ex:harmonic} below, where \textsc{%
MonRev}$(X)\geq \ln k\,\cdot \,$\textsc{SRev}$(X).$

\section{Allocation-Monotonic Mechanisms\label{s:a-mon}}

In this section we provide characterizations of allocation-monotonic
mechanisms (Theorem C), and then show that these mechanisms can yield at
most an $O(\log k)$ multiple of the separate revenue (Theorem D).

\subsection{Characterization of Allocation-Monotonic Mechanisms\label%
{sus:a-mon characterization}}

Hart and Reny (2015)\ proved that (seller-favorable) submodular mechanisms
are monotonic; the proof there shows that they are in fact allocation
monotonic. In this section we will show that for deterministic mechanisms
the converse is true, i.e., allocation monotonicity is equivalent to
submodularity; however, this equivalence does not hold for probabilistic
mechanisms (see Section \ref{sus:a-mon not submod}; the correct
characterization of allocation monotonicity is the supermodularity of the
Fenchel conjugate of the canonical pricing function, which is the buyer
payoff function). Since these results may not hold for arbitrary
tie-breaking rules, we consider buyer-favorable mechanisms, which suffice
when maximizing revenue (see Section \ref{sus:tie-fav}), and preserve
allocation monotonicity, as we show in Corollary \ref{c:am-fav} below.

We now state the characterization result.

\begin{theorema}
\label{th:am}Let $\mu $ be a buyer-favorable mechanism. Then:

\begin{description}
\item[(i)] $\mu $ is allocation monotonic if and only if the buyer payoff
function $b$ of $\mu $ is supermodular.

\item[(ii)] If $\mu $ is submodular then $\mu $ is allocation monotonic.

\item[(iii)] If $\mu $ is allocation monotonic then $\mu $ is separably
subadditive.

\item[(iv)] When $\mu $ is a \emph{deterministic} mechanism: $\mu $ is
allocation monotonic if\ and only if $\mu $ is submodular.
\end{description}
\end{theorema}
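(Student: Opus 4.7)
The plan is to work throughout with the buyer payoff function $b$ and the canonical pricing function $p_0$, using the Fenchel-conjugate relation $b(x)=\sup_g(g\cdot x - p_0(g))$ together with the buyer-favorable requirement that $q(x)$ be a coordinatewise maximal element of $\partial b(x)^+$. Part (i) is the conceptual core; parts (ii) and (iii) reduce to it by short conjugate-duality arguments; part (iv) requires a separate combinatorial argument specific to deterministic mechanisms.

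\textbf{Part (i).} For $(\Leftarrow)$, supermodularity combined with convexity of $b$ is equivalent to ultramodularity (Marinacci--Montrucchio, already cited in the paper), and this delivers the standard lattice-monotonicity of the subdifferential: for $x\le y$ and any $g\in\partial b(x)$ there exists $h\in\partial b(y)$ with $h\ge g$, and $\partial b(y)$ is itself closed under componentwise maxima. Coordinatewise maximality of $q(y)$ then forces $q(y)\ge g\ge q(x)$, giving monotonicity. For $(\Rightarrow)$, given monotone buyer-favorable $q$, I use that a coordinatewise monotone measurable map on $\mathbb{R}_+^k$ is continuous a.e.; at such points of continuity, $\partial b(z)$ is a singleton and $q(z)=\nabla b(z)$. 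Monotonicity of this gradient then amounts to nonnegativity of the mixed partials $\partial_{ij}b\ge 0$ in the distributional sense, which for continuous convex $b$ is equivalent to supermodularity of $b$ on all of $\mathbb{R}_+^k$ by approximation and continuity.

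\textbf{Parts (ii) and (iii).} For (ii), given (i) it suffices to show $p_0$ submodular implies $b$ supermodular. I would pick maximizers $g_x$ and $g_y$ of $g\cdot x-p_0(g)$ and $g\cdot y-p_0(g)$, and lower-bound $b(x\vee y)+b(x\wedge y)$ by evaluating at the trial points $g_x\vee g_y$ and $g_x\wedge g_y$; submodularity of $p_0$ gives $p_0(g_x\vee g_y)+p_0(g_x\wedge g_y)\le p_0(g_x)+p_0(g_y)$, and an elementary coordinatewise rearrangement yields $(g_x\vee g_y)\cdot(x\vee y)+(g_x\wedge g_y)\cdot(x\wedge y)\ge g_x\cdot x+g_y\cdot y$; combining gives $b(x\vee y)+b(x\wedge y)\ge b(x)+b(y)$. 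For (iii), allocation monotonicity makes $b$ supermodular by (i), which specialised to orthogonal $x,y$ (so $x\vee y=x+y$ and $x\wedge y=\mathbf{0}$) gives separable superadditivity $b(x+y)\ge b(x)+b(y)$. Transferring this to $p_0$: for orthogonal $g,h$ with disjoint supports $I,J$ and any $x$, decompose $x=x^I+x^J+x^R$ orthogonally, use $b(x)\ge b(x^I)+b(x^J)$ (iterated separable superadditivity plus $b\ge 0$) and $(g+h)\cdot x=g\cdot x^I+h\cdot x^J$ to get $(g+h)\cdot x-b(x)\le (g\cdot x^I-b(x^I))+(h\cdot x^J-b(x^J))$; taking the supremum yields $p_0(g+h)\le p_0(g)+p_0(h)$.

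\textbf{Part (iv).} The $(\Leftarrow)$ direction follows from (ii). For $(\Rightarrow)$, I would prove the pairwise form (\ref{eq:submod-i-j}) of submodularity by contradiction. Suppose that for some $A\subset K$ and distinct $i,j\notin A$ we have $v_1:=p_0^{\text{\textsc{D}}}(A\cup\{i\})-p_0^{\text{\textsc{D}}}(A)<v_2:=p_0^{\text{\textsc{D}}}(A\cup\{i,j\})-p_0^{\text{\textsc{D}}}(A\cup\{j\})$, and set $v_1':=p_0^{\text{\textsc{D}}}(A\cup\{j\})-p_0^{\text{\textsc{D}}}(A)$. Choose $w$ in the open interval $(v_1',\,v_1'+v_2-v_1)$ (nonempty since $v_1<v_2$) and consider the type $y$ with $y_l=M$ very large for $l\in A$, $y_i=v_1$, $y_j=w$, and $y_l=0$ otherwise. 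A direct four-way payoff comparison at $y$---together with the observation, using Proposition \ref{p:p0-det}, that no superset outside $A\cup\{i,j\}$ can strictly improve utility since its extra coordinates have value $0$ and weakly higher price---shows that $A\cup\{j\}$ is the unique strict maximiser, so buyer-favorability forces $i\notin q(y)$. The companion type $y_0$ obtained from $y$ by setting its $j$-coordinate to $0$ lies at the threshold between $A$ and $A\cup\{i\}$, so buyer-favorability gives $i\in q(y_0)$; since $y\ge y_0$, allocation monotonicity forces $q(y)\supseteq q(y_0)\ni i$, the desired contradiction. I expect the main obstacle across the proof to be the $(\Rightarrow)$ direction of Part (i), where the interplay of convexity, coordinatewise monotonicity of the subgradient selection, and non-uniqueness of subgradients at non-differentiable points must be handled with care; once this is in place the remaining parts are either direct conjugate-duality manipulations or a bookkeeping argument with ties in the discrete case.
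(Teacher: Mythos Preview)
Your arguments for parts (ii), (iii), and the $(\Leftarrow)$ direction of (i) are essentially the paper's: the same rearrangement inequality and conjugate-duality manipulation appear in Propositions~\ref{p:p-b-mod} and~\ref{p:p-b-add}.

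The genuine weak point is exactly where you flagged it: the $(\Rightarrow)$ direction of (i). Your route via ``coordinatewise monotone $\Rightarrow$ a.e.\ continuous $\Rightarrow$ $\partial b$ singleton $\Rightarrow$ distributional $\partial_{ij}b\ge 0$'' has two gaps. First, continuity of the buyer-favorable selection $q$ at $z$ does \emph{not} force $\partial b(z)$ to be a singleton; buyer-favorability only says $q(z)$ is a \emph{maximal} element of $\partial b(z)^+$, and a nontrivial convex set can have a continuous maximal-element selection without collapsing to a point. Second, even granting that $\nabla b$ is monotone on a full-measure set, passing from this to supermodularity of $b$ everywhere via distributional derivatives and approximation is not a one-liner for convex functions on a closed orthant. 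The paper avoids both issues with a direct device (Proposition~\ref{p:AM2supM}(i)): the one-dimensional envelope formula
\[
b(x+d)-b(x)=d\cdot\int_0^1 q(x+td)\,\mathrm{d}t
\]
along the segment from $x$ to $x+d$, valid for any subgradient selection $q$. Applying it twice with $d=d^2$ at base points $x$ and $x+d^1$, and using only that $q(x+td^2)\le q(x+d^1+td^2)$ pointwise in $t$, gives ultramodularity of $b$ immediately, with no differentiability or measure-theoretic detour.

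Your approach to (iv) is genuinely different from the paper's and worth noting. The paper proves, for deterministic $\mu$, the equivalence ``$b$ supermodular $\Leftrightarrow$ $p$ submodular'' directly (Proposition~\ref{p:p-b-mod}(ii)) by evaluating $b$ at the extreme valuations $x=M\mathbf{1}_A$, $y=M\mathbf{1}_B$ with $A,B\in Q$; this reduces the question to an identity between set sizes and then extends from $Q$ to all of $2^K$ via Proposition~\ref{p:subm-p0}. Your route is a constructive contradiction: exhibit two comparable types $y_0\le y$ at which the $i$-coordinate of the allocation must flip. This is closer in spirit to the techniques of Section~\ref{s:mon det}. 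One small repair is needed: at your threshold type $y_0$ (with $y_{0,i}=v_1$ exactly), both $A$ and $A\cup\{i\}$ are optimal, but buyer-favorability only guarantees that $q(y_0)$ is \emph{some} maximal optimal set, which could be, say, $A\cup\{l\}$ for an $l\neq i,j$ with $p_0^{\text{\textsc{D}}}(A\cup\{l\})=p_0^{\text{\textsc{D}}}(A)$, and then $i\notin q(y_0)$. The fix is to set $y_i=v_1+\varepsilon$ at both $y$ and $y_0$ for small $\varepsilon>0$; then every optimal set at $y_0$ must contain $i$, and the strict inequalities at $y$ survive for $\varepsilon<\min\{w-v_1',\,v_2-v_1\}$.
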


When there are multiple goods, the converse of (ii) is not true for general,
non-deterministic, mechanisms: see Example \ref{ex:AM not subm} in Section %
\ref{sus:a-mon not submod}; also, the converse of (iii) is not true: see
Appendix \ref{sus-a:sep-subadd}. Thus, the class of buyer-favorable
allocation-monotonic mechanisms lies between the strictly smaller class of
submodular mechanisms and the strictly larger class of separably subadditive
mechanisms.

The proof of Theorem \ref{th:am} consists of showing the connections between
properties of a mechanism $\mu $ and those of its buyer utility function $b$
and its canonical pricing function $p$. These are summarized in the two
tables below, one for general, probabilistic, mechanisms, and the other for
deterministic mechanisms. The vertical relations are proved in the three
Propositions \ref{p:AM2supM}, \ref{p:p-b-mod}, and \ref{p:p-b-add} that
follow, and the horizontal implications are immediate (because $b(\mathbf{0}%
)=0$ and $p(\mathbf{0})=0$; see Section \ref{sus:s-modularity}).%
\begin{equation*}
\begin{tabular}{|clc|}
\hline
\multicolumn{3}{|c|}{\textbf{General mechanisms}} \\ \hline\hline
\multicolumn{3}{|l|}{$\mu $ allocation monotonic} \\ 
{\large $\Updownarrow$}
[P\ref{p:AM2supM}] &  &  \\ 
$b$ supermodular & 
{\large $\Rightarrow$}
& $b$ separably superadditive \\ 
{\large $\Uparrow$}
[P\ref{p:p-b-mod}] &  & 
{\large $\Updownarrow$}
[P\ref{p:p-b-add}] \\ 
$p$ submodular & 
{\large $\Rightarrow$}
& $p$ separably subadditive \\ 
($\mu $ submodular) &  & ($\mu $ separably subadditive) \\ \hline
\end{tabular}%
\;\;\;\;\;\;%
\begin{tabular}{|c|}
\hline
\textbf{Deterministic mechanisms} \\ \hline\hline
$\mu $ allocation monotonic \\ 
{\large $\Updownarrow$}
[P\ref{p:AM2supM}] \\ 
$b$ supermodular \\ 
{\large $\Updownarrow$}
[P\ref{p:p-b-mod}] \\ 
$p$ submodular \\ 
($\mu $ submodular) \\ \hline
\end{tabular}%
\end{equation*}

The first proposition shows the equivalence between allocation monotonicity
and supermodularity of the buyer payoff function $b$. This is easy to see
when $b$ is twice differentiable: allocation monotonicity means that $%
q_{i}(x)=\partial b(x)/\partial x_{i}$ is nondecreasing in $x$ for all $i$,
i.e., $\partial ^{2}b(x)/\partial x_{i}\partial x_{j}\geq 0$ for all $i,j$;
this is equivalent to the ultramodularity of $b$, which, since $b$ is a
convex function, is the same as the supermodularity of $b$.

\begin{proposition}
\label{p:AM2supM}Let $\mu $ be a mechanism, with buyer payoff function $b$.
Then:

\begin{description}
\item[(i)] If $\mu $ is allocation monotonic then $b$ is supermodular.

\item[(ii)] If $b$ is supermodular then the unique buyer-favorable (and thus
tie-favorable) version $\tilde{\mu}$ of $\mu $ is allocation monotonic.
\end{description}
\end{proposition}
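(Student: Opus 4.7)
For part (i), my plan is to use a subgradient integration formula: for any measurable selection $q(\cdot)\in\partial b(\cdot)^{+}$ and any $z,d\in\mathbb{R}_{+}^{k}$ with $d\geq 0$,
\[
b(z+d)-b(z)=\int_{0}^{1}q(z+td)\cdot d\,dt.
\]
I would verify this by noting that $\phi(t):=b(z+td)$ is convex (hence absolutely continuous) on $[0,1]$, so $\phi(1)-\phi(0)=\int_{0}^{1}\phi'(t)\,dt$; and the subgradient containment forces $q(z+td)\cdot d\in[\phi'_{-}(t),\phi'_{+}(t)]$, which for a convex one-dimensional $\phi$ collapses to $\phi'(t)$ at a.e.\ $t$. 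Using the orthogonal-increment characterization (\ref{eq:super-d}) of supermodularity, it then suffices to show that for every $x\geq 0$ and orthogonal $d^{1},d^{2}\geq 0$,
\[
b(x+d^{1}+d^{2})-b(x+d^{1})\geq b(x+d^{2})-b(x).
\]
Applying the integration formula to both sides and subtracting produces the integrand $\bigl(q(x+d^{1}+td^{2})-q(x+td^{2})\bigr)\cdot d^{2}$, which is pointwise nonnegative by allocation monotonicity (since $x+d^{1}+td^{2}\geq x+td^{2}$) combined with $d^{2}\geq 0$. Integration delivers the inequality, giving supermodularity of $b$.

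For part (ii), I would set $\tilde{q}_{i}(x):=\partial_{i}^{+}b(x)$, the right partial derivative of $b$ along $e^{i}$, which exists finitely by convexity and equals $\max\{g_{i}:g\in\partial b(x)\}$. The proof then has three steps: \textbf{(a)} $\tilde{q}(x)\in\partial b(x)^{+}$; \textbf{(b)} $\tilde{q}(x)$ is the unique coordinate-wise maximum of $\partial b(x)^{+}$, hence the unique buyer-favorable allocation at $x$; and \textbf{(c)} $\tilde{q}$ is nondecreasing. Step (c) is immediate from the Marinacci--Montrucchio equivalence of supermodularity and ultramodularity for convex $b$: the difference quotient $\bigl(b(z+\varepsilon e^{i})-b(z)\bigr)/\varepsilon$ is nondecreasing in $z$ for $\varepsilon>0$, and passing to the limit $\varepsilon\downarrow 0$ yields $\tilde{q}_{i}(x)\leq\tilde{q}_{i}(y)$ whenever $x\leq y$. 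Step (b) follows directly from (a): any $g\in\partial b(x)^{+}$ satisfies $g_{i}\leq\tilde{q}_{i}(x)$ for each $i$, so $g\leq\tilde{q}(x)$ coordinate-wise, making $\tilde{q}(x)$ the unique maximal element.

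The main obstacle is (a), showing that the vector of right partial derivatives is itself a subgradient in $\partial b(x)^{+}$. For $y\geq x$ I would telescope along the axis-parallel path $x\to x+(y_{1}-x_{1})e^{1}\to\cdots\to y$: at each step, one-dimensional convexity gives $b(z+(y_{i}-x_{i})e^{i})-b(z)\geq (y_{i}-x_{i})\,\partial_{i}^{+}b(z)$, and ultramodularity bounds $\partial_{i}^{+}b(z)$ below by $\partial_{i}^{+}b(x)=\tilde{q}_{i}(x)$ at every intermediate $z\geq x$; summing yields $b(y)\geq b(x)+\tilde{q}(x)\cdot(y-x)$. For a general $y\in\mathbb{R}_{+}^{k}$, I would route through $M:=x\vee y\geq x$ (where the case just proved applies) and then descend coordinate-by-coordinate from $M$ down to $y$ along the indices $i$ with $y_{i}<x_{i}$; using the convex-function lower bound for downward one-dimensional steps together with the fact that (by ultramodularity) both right and left partial derivatives of $b$ are nondecreasing in the base point, I would show that this descent drops $b$ by at most $\tilde{q}(x)\cdot(x-y)^{+}$. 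Adding the two pieces gives $b(y)\geq b(x)+\tilde{q}(x)\cdot(y-x)$ for all $y\geq 0$, establishing (a) and completing the proof.
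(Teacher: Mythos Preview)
Your part (i) is correct and is essentially the paper's argument: the paper establishes the same integral identity \eqref{eq:b-d} (via IC and Rockafellar's Theorem~24.2) and then compares the two integrals using monotonicity of $q$. The only cosmetic difference is that the paper writes the inequality for \emph{all} $d^{1},d^{2}\geq 0$ (ultramodularity) rather than just orthogonal ones, but since $b$ is convex this is equivalent.

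For part (ii), your steps (b) and (c), and the $y\geq x$ half of step (a), are fine and mirror the paper (which cites Marinacci--Montrucchio and its own Proposition~\ref{p:delta+}). However, your argument for step (a) when $y\not\geq x$ has a genuine gap: the claimed descent bound $b(M)-b(y)\leq \tilde q(x)\cdot(M-y)$ is \emph{false} in general. Take $k=2$ and $b(x)=\tfrac{1}{2}(x_{1}+x_{2})^{2}$ on $\{x_{1}+x_{2}\leq 1\}$ (extended by $b(x)=x_{1}+x_{2}-\tfrac{1}{2}$ for $x_{1}+x_{2}>1$, which makes $b$ a valid buyer payoff function that is convex and supermodular). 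With $x=(\tfrac{1}{2},0)$, $y=(0,\tfrac{1}{2})$, $M=(\tfrac{1}{2},\tfrac{1}{2})$, one has $\tilde q(x)=\nabla b(x)=(\tfrac{1}{2},\tfrac{1}{2})$, and
\[
b(M)-b(y)=\tfrac{1}{2}-\tfrac{1}{8}=\tfrac{3}{8}\;>\;\tfrac{1}{4}=\tilde q(x)\cdot(M-y).
\]
The flaw is that during the descent from $M$ to $y$ the intermediate points $w$ are \emph{not} $\leq x$ (the coordinates $j$ with $y_{j}>x_{j}$ remain at $y_{j}$), so monotonicity of partial derivatives in the base point does not yield $\partial_{i}^{-}b(w)\leq\tilde q_{i}(x)$. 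The overall subgradient inequality at this $y$ does hold (with equality, in fact), but only because the slack in the ascent $x\to M$ compensates; your two-piece decomposition throws that slack away.

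The paper's fix (Proposition~\ref{p:delta+}) avoids verifying the subgradient inequality pointwise. From $g_{i}\leq b'(x;e^{i})$ for every $g\in\partial b(x)$ one gets $b'(x;e)=\max_{g\in\partial b(x)}g\cdot e\leq e\cdot\nabla^{+}b(x)$ for $e=(1,\dots,1)$; a telescoping sum in the direction $e$ combined with supermodularity gives the reverse inequality $b'(x;e)\geq\sum_{i}b'(x;e^{i})=e\cdot\nabla^{+}b(x)$. Equality, together with $e\gg 0$ and the closedness of $\partial b(x)$, forces $\nabla^{+}b(x)\in\partial b(x)$.
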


\begin{proof}
(i) Assume that the function $q$ is nondecreasing; we will show that $b$ is
ultramodular, i.e., $b(x+d^{2})-b(x)\leq b(x+d^{1}+d^{2})-b(x+d^{1})$ for
every $x$ in $\mathbb{R}_{+}^{k}$ and every $d^{1},d^{2}\geq 0$.

First, we claim that for every $x\in \mathbb{R}_{+}^{k}$ and $d\geq 0$ we
have%
\begin{equation}
b(x+d)-b(x)=d\cdot \int_{0}^{1}q(x+td)\,\mathrm{d}t.  \label{eq:b-d}
\end{equation}%
Indeed, let $\beta (t):=b(x+td)$ for $t\geq 0$; then $\beta $ is a convex
function of $t$, and the IC inequality\footnote{%
Use $b(x)=q(x)\cdot x-s(x)$ and $b(x+\delta d)\geq q(x)\cdot (x+\delta
d)-s(x).$}%
\begin{equation*}
b(x+(t+\delta )d)-b(x+td)\geq q(x+td)\cdot \delta d
\end{equation*}%
yields, after we divide by $\delta $ and take the limit as $\delta
\rightarrow 0^{-}$ and $\delta \rightarrow 0^{+}$,%
\begin{equation*}
\beta _{-}^{\prime }(t)\leq q(x+td)\cdot d\leq \beta _{+}^{\prime }(t)
\end{equation*}%
for every $t>0$; applying Theorem 24.2 and Corollary 24.2.1 in Rockafellar
(1970) proves (\ref{eq:b-d}).

Next, for every $x\in \mathbb{R}_{+}^{k}$ and every $d^{1},d^{2}\geq 0$, we
have%
\begin{eqnarray*}
b(x+d^{2})-b(x) &=&d^{2}\cdot \int_{0}^{1}q(x+td^{2})\,\mathrm{d}t \\
&\leq &d^{2}\cdot \int_{0}^{1}q(x+d^{1}+td^{2})\,\mathrm{d}%
t=b(x+d^{1}+d^{2})-b(x+d^{1}),
\end{eqnarray*}%
where the inequality is by $d^{1}\geq 0$ (since $q$ is nondecreasing) and $%
d^{2}\geq 0$, and the equalities are by (\ref{eq:b-d}). Therefore $b$ is
ultramodular.

(ii) If $b$ is ultramodular then $\nabla ^{+}b(x):=(b^{\prime
}(x;e^{i}))_{i=1,...,k}$ is the maximal subgradient of $b$ at $x$, for every 
$x\in \mathbb{R}_{+}^{k}$, and $\nabla ^{+}b(x)$ is nondecreasing in $x$ (by
Lemma 5.1 and Theorem 5.4 (iii) in Marinacci and Motrucchio 2005).\footnote{%
Marinacci and Montrucchio (2005) prove these claims only for interior points 
$x$; the proof extends to boundary points as well, as shown in Proposition %
\ref{p:delta+} in Appendix \ref{sus-a:utramodular}.} Thus $\tilde{q}%
(x)=\nabla ^{+}b(x)$ is the unique buyer-favorable choice at $x$, and $%
\tilde{q}$ is nondecreasing.
\end{proof}

\bigskip

This immediately implies that buyer favorability, and hence tie
favorability, preserves allocation monotonicity; when maximizing revenue,
the restriction to tie-favorable versions of allocation-monotonic mechanisms
is thus without loss of generality.

\begin{corollary}
\label{c:am-fav}Let $\mu $ be a mechanism, with buyer payoff function $b$,
and let $\tilde{\mu}$ be a buyer-favorable mechanism with the same buyer
payoff function $b$. If $\mu $ is allocation monotonic then $\tilde{\mu}$ is
allocation monotonic.
\end{corollary}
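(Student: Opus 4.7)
The plan is to chain together the two halves of Proposition \ref{p:AM2supM} via the buyer payoff function, which is common to $\mu$ and $\tilde{\mu}$ by hypothesis.

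First I would invoke Proposition \ref{p:AM2supM}(i): since $\mu$ is allocation monotonic, its buyer payoff function $b$ is supermodular (equivalently, ultramodular, since $b$ is convex). Crucially, this property depends \emph{only} on $b$, not on the specific allocation/payment pair realizing it.

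Next I would apply Proposition \ref{p:AM2supM}(ii) to $\tilde{\mu}$: because $\tilde{\mu}$ has the same buyer payoff function $b$, and $b$ is supermodular, the unique buyer-favorable version of any mechanism with payoff function $b$ is allocation monotonic. Since $\tilde{\mu}$ is assumed to be buyer favorable and shares this $b$, it must coincide with that unique buyer-favorable version (in particular, $\tilde{q}(x)=\nabla^{+}b(x)$ for every $x\in \mathbb{R}_{+}^{k}$), and hence $\tilde{\mu}$ is allocation monotonic.

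There is no real obstacle here: the corollary is an immediate bridge between parts (i) and (ii) of Proposition \ref{p:AM2supM}, relying on the fact that supermodularity of $b$ is a property of the payoff function alone and that buyer-favorable selection from $\partial b(x)^{+}$ is uniquely pinned down to $\nabla^{+}b(x)$ once $b$ is ultramodular. The only point worth stating explicitly in the write-up is this uniqueness, which is what lets us identify $\tilde{\mu}$ with the canonical buyer-favorable version produced in the proof of Proposition \ref{p:AM2supM}(ii).
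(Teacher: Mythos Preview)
Your proof is correct and follows exactly the paper's approach: apply Proposition~\ref{p:AM2supM}(i) to get supermodularity of $b$ from the allocation monotonicity of $\mu$, then Proposition~\ref{p:AM2supM}(ii) to conclude that the buyer-favorable $\tilde{\mu}$ with the same $b$ is allocation monotonic. The explicit mention of uniqueness of the buyer-favorable version is a nice clarification, though the paper's proof leaves it implicit.
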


\begin{proof}
If $\mu $ is allocation monotonic then $b$ is supermodular by Proposition %
\ref{p:AM2supM} (i), and then the buyer-favorable mechanism $\tilde{\mu}$
with the same $b$ is allocation monotonic by (ii) of the same proposition.
\end{proof}

\bigskip

The next proposition provides the relation between the supermodularity of $b$
and the submodularity of its Fenchel conjugate $p$.

\begin{proposition}
\label{p:p-b-mod}Let $\mu $ be a mechanism, with buyer payoff function $b$
and canonical pricing function $p$.

\begin{description}
\item[(i)] If $p$ is submodular then $b$ is supermodular.

\item[(ii)] When $\mu $ is a \emph{deterministic} mechanism: $p$ is
submodular if and only if $b$ is supermodular.
\end{description}
\end{proposition}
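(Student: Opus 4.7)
The plan for (i) is a direct Fenchel-duality argument. Fix $x,y \in \mathbb{R}_+^k$; since $p$ is lower semicontinuous on the compact cube $[0,1]^k$, the conjugacy relation (\ref{eq:b=p0*}) furnishes optimizers $g, h \in [0,1]^k$ with $b(x) = g \cdot x - p(g)$ and $b(y) = h \cdot y - p(h)$. Using the same formula as a lower bound at $g \vee h$ and $g \wedge h$, I would add to get
\begin{equation*}
b(x \vee y) + b(x \wedge y) \;\ge\; (g \vee h) \cdot (x \vee y) + (g \wedge h) \cdot (x \wedge y) - p(g \vee h) - p(g \wedge h).
\end{equation*}
A one-line coordinate check shows $(g_i \vee h_i)(x_i \vee y_i) + (g_i \wedge h_i)(x_i \wedge y_i) \ge g_i x_i + h_i y_i$ (the difference factors, up to sign, as $(h_i - g_i)(x_i - y_i)\ge 0$), and invoking the submodularity of $p$ closes the gap to yield $b(x \vee y) + b(x \wedge y) \ge b(x) + b(y)$.

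For (ii), only the direction ``$b$ supermodular $\Rightarrow$ $p$ submodular'' needs work. I would prove this for $p^{\text{\textsc{D}}}$ on $\{0,1\}^k$, which by the remark following (\ref{eq:subm-sets}) (and Proposition \ref{p:subm-p0}) is equivalent to submodularity of the full canonical $p$. The essential device is that for a deterministic mechanism every buyer's optimum lies in $\{0,1\}^k$, so $b(x) = \max_{C \subseteq K}(x(C) - p^{\text{\textsc{D}}}(C))$. Evaluating at $x = M\mathbf{1}_A$ with $M$ large: any $C \not\supseteq A$ loses at least $M$ relative to the choice $C = A$, while any $C \supsetneq A$ is dominated by $C = A$ through monotonicity of $p^{\text{\textsc{D}}}$. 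Hence for a sufficiently large common $M$, valid simultaneously over the finite list of subsets we will need, $b(M \mathbf{1}_A) = M|A| - p^{\text{\textsc{D}}}(A)$.

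Then for fixed $A, B \subseteq K$, I would apply supermodularity of $b$ at $M\mathbf{1}_A$ and $M\mathbf{1}_B$, noting $M\mathbf{1}_A \vee M\mathbf{1}_B = M\mathbf{1}_{A \cup B}$ and $M\mathbf{1}_A \wedge M\mathbf{1}_B = M\mathbf{1}_{A \cap B}$. Substituting the formula $b(M\mathbf{1}_S) = M|S| - p^{\text{\textsc{D}}}(S)$ on all four subsets and using $|A| + |B| = |A \cup B| + |A \cap B|$, the $M$-linear terms cancel and the inequality collapses exactly to $p^{\text{\textsc{D}}}(A) + p^{\text{\textsc{D}}}(B) \ge p^{\text{\textsc{D}}}(A \cup B) + p^{\text{\textsc{D}}}(A \cap B)$. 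I anticipate the main (minor) obstacle to be handling infinite prices: if any $p^{\text{\textsc{D}}}(S) = \infty$, monotonicity makes the submodularity inequality trivial in the extended reals, and the linearization argument above is needed only for the subsets with finite price, for which only finitely many values of $C$ must be dominated so a single large $M$ works.
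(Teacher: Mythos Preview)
Your argument for (i) is essentially the paper's: both rely on the coordinatewise inequality $g\cdot x + h\cdot y \le (g\vee h)\cdot(x\vee y) + (g\wedge h)\cdot(x\wedge y)$ together with the submodularity of $p$. (A small quibble: the per-coordinate difference is not literally $(h_i-g_i)(x_i-y_i)$ in all cases---it equals $[(h_i-g_i)(x_i-y_i)]_+$---but the nonnegativity you need holds.) The paper takes the supremum over $g,h$ rather than selecting optimizers, which is an inessential variation.

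For (ii) the key device is again the same: evaluate $b$ at $M\mathbf 1_A$ for large $M$ and cancel the $M$-linear terms via $|A|+|B|=|A\cup B|+|A\cap B|$. The paper restricts to $A,B\in Q$, obtains the \emph{equality} $b(M\mathbf 1_A)=M|A|-p(A)$ only for $A\in Q$, and appeals to Proposition~\ref{p:subm-p0} to pass from $Q$ to $2^K$. Your route is a bit more direct: using monotonicity of the canonical $p^{\textsc{D}}$ you get the equality $b(M\mathbf 1_S)=M|S|-p^{\textsc{D}}(S)$ for \emph{every} $S$ with finite price (not only $S\in Q$), so the substitution works at all four sets $A,B,A\cup B,A\cap B$ and the submodularity inequality drops out with no separate extension step. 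This is the cleaner version of the argument; the supermodularity of $b$ combined with \emph{equalities} (not merely the one-sided bounds $b(M\mathbf 1_{A\cup B})\ge M|A\cup B|-p(A\cup B)$) is exactly what makes the cancellation yield the correct sign.

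There is, however, a genuine gap in your handling of infinite prices. You say ``if any $p^{\textsc{D}}(S)=\infty$, monotonicity makes the submodularity inequality trivial in the extended reals.'' That is fine when $S\in\{A,B\}$ (the left side becomes $\infty$), but it fails precisely in the case $p^{\textsc{D}}(A),p^{\textsc{D}}(B)<\infty$ while $p^{\textsc{D}}(A\cup B)=\infty$: then the desired inequality reads $\text{(finite)}\ge\infty$, which is false, not trivial---and monotonicity gives no contradiction since $A\cup B\supseteq A,B$. You must rule this configuration out under the hypothesis that $b$ is supermodular. One line does it: if $p^{\textsc{D}}(A\cup B)=\infty$ then (by monotonicity of $p^{\textsc{D}}$) every $C\supseteq A\cup B$ has infinite price, so $b(M\mathbf 1_{A\cup B})\le M(|A\cup B|-1)$; combining with $b(M\mathbf 1_{A\cap B})\le M|A\cap B|$ and supermodularity of $b$ at $M\mathbf 1_A,M\mathbf 1_B$ gives $p^{\textsc{D}}(A)+p^{\textsc{D}}(B)\ge M$ for all large $M$, a contradiction. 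With this patch your proof is complete.
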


The converse of (i) does not hold in the general, non-deterministic case:
see Section \ref{sus:a-mon not submod} below. Assuming differentiability
(and ignoring technicalities), $p$ is submodular if and only if the
off-diagonal entries of the Hessian matrix $\nabla ^{2}p$ are $\leq 0$; this
implies that the off-diagonal entries of its inverse $(\nabla ^{2}p)^{-1},$
which is the Hessian matrix $\nabla ^{2}b$ of $p$'s Fenchel conjugate%
\footnote{%
See Crouzeix (1977).} $b$, are $\geq 0$, i.e., $b$ is supermodular; the
converse is however not true in general (cf. Proposition \ref{p:quad} and
Example \ref{ex:AM not subm}).

\bigskip

\begin{proof}
(i) Since for any $x,y\in \mathbb{R}_{+}^{k}$ and $g,h\in Q$ we have%
\begin{equation*}
g\cdot x+h\cdot y\leq (g\vee h)\cdot (x\vee y)+(g\wedge h)\cdot (x\wedge y)
\end{equation*}%
(because\footnote{%
This is the ultramodularity of the function $(x,g)\mapsto g\cdot
x=\sum_{i}g_{i}x_{i}$ (all its second-order partial derivatives are either $%
1 $ or $0$).} the difference between the right-hand side and the left-hand
side is $(g-g\wedge h)\cdot (y-x\wedge y)+(h-g\wedge h)\cdot (x-x\wedge
y)\geq 0$), subtracting the submodularity inequality (\ref{eq:sub-m}), i.e., 
$p(g)+p(h)\geq p(g\vee h)+p(g\wedge h)$, gives%
\begin{eqnarray*}
\lbrack g\cdot x-p(g)]+[h\cdot y-p(h)] &\leq &[(g\vee h)\cdot (x\vee
y)-p(g\vee h)]+[(g\wedge h)\cdot (x\wedge y)-p(g\wedge h)] \\
&\leq &b(x\vee y)+b(x\wedge y)
\end{eqnarray*}%
(the inequality on the second line because $p$ is a pricing function; see (%
\ref{eq:b=p0*}). Taking the supremum over $g\in Q$ and $h\in Q$ yields,
again by (\ref{eq:b=p0*}), $b(x)+b(y)\leq b(x\vee y)+b(x\wedge y)$, and so $%
b $ is supermodular.

(ii) Let $\mu $ be a deterministic mechanism, with range of allocations $%
Q\subseteq 2^{K}$, and let $p:2^{K}\rightarrow \lbrack 0,\mathbb{\infty ]}$
be its canonical deterministic pricing function. Thus, $b(x)=\max_{A%
\subseteq K}(x(A)-p(A))=\max_{A\in Q}(x(A)-p(A))$ for every $x\in \mathbb{R}%
_{+}^{k}$, and $p(A)=\sup_{x\in \mathbb{R}_{+}^{k}}\left( x(A)-b(x)\right) $
for every $A\subseteq K.$

Fix a large $M>0$ (specifically, $M\geq \max_{A\in Q}p(A)$); for $x=M\mathbf{%
1}_{A}$ (with $A\subseteq K$) we have%
\begin{equation*}
b(x)\geq x(A)-p(A)=M\left\vert A\right\vert -p(A).
\end{equation*}%
Moreover, if $A\in Q$ then 
\begin{equation*}
b(x)=x(A)-p(A)=M\left\vert A\right\vert -p(A)
\end{equation*}%
(i.e., $A$ is optimal at $x$). Indeed, $x(A)-p(A)\geq x(B)-p(B)$ for every $%
B\subseteq K$, because if $B\supseteq A$ then $x(B)=x(A)$ and $p(A)\leq p(B)$%
, and if $B\nsupseteq A$ then $x(B)=x(A)-M\left\vert A\backslash
B\right\vert \leq x(A)-M$ (since $A\backslash B\neq \emptyset )$ and $%
p(A)-p(B)\leq p(A)\leq M$ (since $A\in Q$).

Let $A,B\in Q$. Take $x=M\mathbf{1}_{A}$ and $y=M\mathbf{1}_{B}$; then $%
x\vee y=M\mathbf{1}_{A\cup B}$ and $x\wedge y=M\mathbf{1}_{A\cap B}$, which
yields%
\begin{eqnarray*}
b(x) &=&M\left\vert A\right\vert -p(A), \\
b(y) &=&M\left\vert B\right\vert -p(B), \\
b(x\vee y) &\geq &M\left\vert A\cup B\right\vert -p(A\cup B), \\
b(x\wedge y) &\geq &M\left\vert A\cap B\right\vert -p(A\cap B)
\end{eqnarray*}%
(the last two are inequalities because $A\cup B$ and $A\cap B$ need not be
in $Q$). Since $\left\vert A\right\vert +\left\vert B\right\vert =\left\vert
A\cup B\right\vert +\left\vert A\cap B\right\vert $, we get%
\begin{equation*}
b(x)+b(y)-b(x\vee y)-b(x\wedge y)\leq p(A\cup B)+p(A\cap B)-p(A)-p(B).
\end{equation*}%
If $b$ is supermodular then the left-hand side is $\geq 0$, and so the
right-hand side is $\geq 0$, which yields the submodularity inequality (\ref%
{eq:subm-sets}) for all $A,B\in Q$; this extends to all $A,B\subseteq K$ by
Proposition \ref{p:subm-p0} in Appendix \ref{sus-a:submod p}.
\end{proof}

\bigskip

The third proposition shows the equivalence between the separable
superadditivity of $b$ and the separable subadditivity of its Fenchel
conjugate $p$ (unlike super/submodularity, where only one direction holds in
general, and there is equivalence only in the deterministic case; see
Proposition \ref{p:p-b-mod}).

\begin{proposition}
\label{p:p-b-add}Let $\mu $ be a mechanism, with buyer payoff function $b$
and canonical pricing function $p$. Then $b$ is separably superadditive if
and only if $p$ is separably subadditive.
\end{proposition}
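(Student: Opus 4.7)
The plan is to exploit the Fenchel conjugacy between the buyer payoff function $b$ and the canonical pricing function $p$ recorded in (\ref{eq:b=p0*}) and (\ref{eq:p0}), namely $b(x)=\sup_{g\in[0,1]^{k}}(g\cdot x-p(g))$ and $p(g)=\sup_{x\in\mathbb{R}_{+}^{k}}(g\cdot x-b(x))$. Given a partition $K=K'\cup K''$, any vector $v$ splits uniquely as $v=v'+v''$ with $v'$ supported on $K'$ and $v''$ on $K''$. The key preliminary observation is that in the supremum defining $b(x)$ for an $x$ supported on $K'$, one may restrict to $g$ supported on $K'$: indeed, replacing any $g$ by its restriction $g'$ (zero outside $K'$) leaves $g\cdot x$ unchanged while not increasing $p(g)$, since $p$ is nondecreasing.

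For the direction that separable subadditivity of $p$ implies separable superadditivity of $b$, I fix orthogonal $x,y\in\mathbb{R}_{+}^{k}$ with supports in $K',K''$ and $\varepsilon>0$. By the preliminary observation, I pick $g^{\ast}\in[0,1]^{k}$ supported on $K'$ with $g^{\ast}\cdot x-p(g^{\ast})>b(x)-\varepsilon$, and $h^{\ast}\in[0,1]^{k}$ supported on $K''$ with $h^{\ast}\cdot y-p(h^{\ast})>b(y)-\varepsilon$. Then $g^{\ast}$ and $h^{\ast}$ are orthogonal (so $g^{\ast}+h^{\ast}\in[0,1]^{k}$), and $(g^{\ast}+h^{\ast})\cdot(x+y)=g^{\ast}\cdot x+h^{\ast}\cdot y$ by orthogonality. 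Separable subadditivity of $p$ gives $p(g^{\ast}+h^{\ast})\leq p(g^{\ast})+p(h^{\ast})$, so
\[
b(x+y)\;\geq\;(g^{\ast}+h^{\ast})\cdot(x+y)-p(g^{\ast}+h^{\ast})\;\geq\;b(x)+b(y)-2\varepsilon,
\]
and letting $\varepsilon\to 0$ yields $b(x+y)\geq b(x)+b(y)$.

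For the converse, I fix orthogonal $g,h\in[0,1]^{k}$ with supports in $K',K''$. For every $x\in\mathbb{R}_{+}^{k}$, decompose $x=x'+x''$; orthogonality gives $(g+h)\cdot x=g\cdot x'+h\cdot x''$, and separable superadditivity of $b$ gives $b(x)\geq b(x')+b(x'')$. Hence
\[
(g+h)\cdot x-b(x)\;\leq\;\bigl(g\cdot x'-b(x')\bigr)+\bigl(h\cdot x''-b(x'')\bigr)\;\leq\;p(g)+p(h),
\]
and taking the supremum over $x$ delivers $p(g+h)\leq p(g)+p(h)$.

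The only technical subtlety is the potential non-attainment of the suprema in the first direction, which is handled by the $\varepsilon$-approximation above; the case where $p(g)+p(h)=\infty$ makes the subadditivity inequality vacuous, so nothing further is needed. I do not expect any substantial obstacle beyond these routine points, since the two directions are essentially dual to each other under Fenchel conjugacy once the support-restriction observation is in hand.
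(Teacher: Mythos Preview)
Your proof is correct and follows essentially the same route as the paper's: both directions exploit the Fenchel duality between $b$ and $p$, together with the fact that $p$ is nondecreasing so that one may restrict attention to allocations supported on the relevant block of the partition. The only cosmetic difference is that, in the direction ``$p$ separably subadditive $\Rightarrow$ $b$ separably superadditive,'' the paper uses the actual allocation $q(x')$ of the mechanism (restricted via $q(x')\wedge\mathbf{1}_{K'}$) to attain the supremum, whereas you use an $\varepsilon$-approximation; both are equivalent here.
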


\begin{proof}
(i) Assume that $b$ is separably superadditive. Take a partition $%
K=K^{\prime }\cup K^{\prime \prime }$. For every $g\in \lbrack 0,1]^{k}$ and 
$x\in \mathbb{R}_{+}^{k}$ we then have $g\cdot x=g^{\prime }\cdot x^{\prime
}+g^{\prime \prime }\cdot x^{\prime \prime }$ (where $x=x^{\prime
}+x^{\prime \prime }$ and $g=g^{\prime }+g^{\prime \prime }$ are the
corresponding decompositions, as in Section \ref{sus:s-modularity}), and $%
b(x)\geq b(x^{\prime })+b(x^{\prime \prime })$ (by the separable
superadditivity of $b$), which yields%
\begin{eqnarray*}
g\cdot x-b(x) &\leq &\left( g^{\prime }\cdot x^{\prime }+g^{\prime \prime
}\cdot x^{\prime \prime }\right) -\left( b(x^{\prime })+b(x^{\prime \prime
})\right) \\
&=&g^{\prime }\cdot x^{\prime }-b(x^{\prime })+g^{\prime \prime }\cdot
x^{\prime \prime }-b(x^{\prime \prime })\leq p(g^{\prime })+p(g^{\prime
\prime }).
\end{eqnarray*}%
Taking the supremum over $x$ gives $p(g)\leq p(g^{\prime })+p(g^{\prime
\prime }).$

(ii) Assume that $p$ is separably subadditive. Let $x\in \mathbb{R}_{+}^{k}$
and take a partition $K=K^{\prime }\cup K^{\prime \prime }$, resulting in
vectors $x^{\prime }$ and $x^{\prime \prime }$ as above. Let $g^{\prime
}:=q(x^{\prime })\wedge \mathbf{1}_{K^{\prime }}$ and $g^{\prime \prime
}:=q(x^{\prime \prime })\wedge \mathbf{1}_{K^{\prime \prime }}$. Then $%
g^{\prime }\cdot x^{\prime }=q(x^{\prime })\cdot x^{\prime }$, which implies
that $p(g^{\prime })\geq p(q(x^{\prime }))$ by IC at $x^{\prime }$; but $%
g^{\prime }\leq q(x^{\prime })$ and $p$ is nondecreasing, and so we have
equality, i.e., $p(g^{\prime })=p(q(x^{\prime }))$, which yields $%
b(x^{\prime })=g^{\prime }\cdot x^{\prime }-p(g^{\prime })$. Similarly, $%
b(x^{\prime \prime })=g^{\prime \prime }\cdot x^{\prime \prime }-p(g^{\prime
\prime })$. Put $g:=g^{\prime }+g^{\prime \prime }$' then $g\in \lbrack
0,1]^{k}$ (because $g^{\prime }$ and $g^{\prime \prime }$ are orthogonal)
and $g\cdot x=g^{\prime }\cdot x^{\prime }+g^{\prime \prime }\cdot x^{\prime
\prime }$, and so we get%
\begin{eqnarray*}
b(x) &\geq &g\cdot x-p(g)\geq \left( g^{\prime }\cdot x^{\prime }+g^{\prime
\prime }\cdot x^{\prime \prime }\right) -\left( p(g^{\prime })+p(g^{\prime
\prime })\right) \\
&=&(g^{\prime }\cdot x^{\prime }-p(g^{\prime }))+(g^{\prime \prime }\cdot
x^{\prime \prime }-p(g^{\prime \prime }))=b(x^{\prime })+b(x^{\prime \prime
}),
\end{eqnarray*}%
where the second inequality is by the separable subadditivity of $p.$
\end{proof}

\bigskip

The three Propositions \ref{p:AM2supM}, \ref{p:p-b-mod}, and \ref{p:p-b-add}
prove Theorem \ref{th:am} (see the implications in the above tables).

\subsection{Revenue of Allocation-Monotonic Mechanisms\label{sus:a-mon rev}}

In this section we prove that allocation-monotonic mechanisms yield a
multiple of at most $O(\log k)$ of the separate revenue (Theorem D); this
follows from the separable subadditivity property of allocation-monotonic
mechanisms (Theorem \ref{th:am} (iii)), by applying a result of Chawla,
Teng, and Tzamos (2022).

\begin{theorema}
\label{th:amon-srev}Let $X$ be a $k$-good random valuation. Then%
\begin{equation*}
\text{\textsc{AMonRev}}(X)\leq 2\ln (2k)\cdot \text{\textsc{SRev}}(X).
\end{equation*}
\end{theorema}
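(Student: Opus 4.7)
The plan is straightforward given the tools already developed in the paper: combine the structural result Theorem \ref{th:am}(iii) with the imported revenue bound of Chawla, Teng, and Tzamos (2022).

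First, I would take any allocation-monotonic mechanism $\mu$, with buyer payoff function $b$ and canonical pricing function $p$. By Theorem \ref{th:am}(iii), $\mu$ is separably subadditive, i.e., $p(g' + g'') \leq p(g') + p(g'')$ for every orthogonal pair $g', g'' \in [0,1]^k$. Iterating the inequality along the coordinate decomposition $g = \sum_{i} g_i e^i$ yields
\[
p(g) \;\leq\; \sum_{i=1}^{k} p(g_i e^i)
\]
for every $g \in [0,1]^k$: the cost of any menu item is dominated by the sum of the costs of buying each coordinate slice separately.

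Next, I would identify this inequality with the sybil-proofness condition of Chawla, Teng, and Tzamos (2022): a sybil-proof mechanism is one in which a buyer cannot reduce his total payment by splitting into multiple identities, one per coordinate, and purchasing the appropriate slice through each. For a buyer with additive valuation $x$ this is exactly the requirement that $p$ be separably subadditive, so every allocation-monotonic $\mu$ is sybil-proof in the CTT sense. With this translation in hand (which is what the restatement in Appendix \ref{sus-a:chawla} provides), Theorem 1.3 of Chawla, Teng, and Tzamos applied to the single-buyer additive setting gives
\[
R(\mu; X) \;\leq\; 2\ln(2k)\cdot \text{\textsc{SRev}}(X)
\]
for every allocation-monotonic $\mu$. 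Taking the supremum over all such $\mu$ yields the bound on \textsc{AMonRev}$(X)$.

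The main conceptual work lies upstream: Theorem \ref{th:am}(iii) already packaged the nontrivial geometric step (allocation monotonicity $\Rightarrow$ supermodularity of $b$ $\Leftrightarrow$ separable subadditivity of $p$), and the combinatorial core of the $\log k$ bound is imported from CTT. The present proof is therefore a two-step reduction whose only delicate point is checking that separable subadditivity of the canonical pricing function is indeed the right instantiation of CTT's sybil-proofness hypothesis; once this dictionary is verified (in the appendix), the theorem follows immediately.
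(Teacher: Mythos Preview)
Your approach is the same as the paper's---use Theorem~\ref{th:am}(iii) to obtain separable subadditivity of the canonical pricing function $p$, then apply the Chawla--Teng--Tzamos bound---but you have skipped a necessary step. The version of the CTT result recorded in Appendix~\ref{sus-a:chawla} (Theorem~\ref{th:chawla}) is a \emph{two-sided} pricing-approximation theorem: it requires $c_1\,p'(g)\leq p(g)\leq c_2\,p'(g)$ for some comparison function $p'$ in a cone $\mathcal{P}'$, and returns the factor $2\ln(2c_2/c_1)$. Separable subadditivity gives you only the upper bound $p(g)\leq p'(g):=\sum_{i} p(g_i e^{i})$, i.e., $c_2=1$. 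To land on $2\ln(2k)$ you also need $c_1=1/k$, i.e., $p(g)\geq (1/k)\,p'(g)$. The paper obtains this from the fact that the canonical $p$ is nondecreasing: since $g\geq g_i e^{i}$ for each $i$, we have $p(g)\geq p(g_i e^{i})$, and hence $p(g)$ is at least the average $(1/k)\sum_i p(g_i e^{i})$. Your appeal to ``sybil-proofness'' as a self-contained black box elides this; the appendix does not package the CTT result that way, and without the lower inequality Theorem~\ref{th:chawla} cannot be invoked.
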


\begin{proof}
Let $\mu $ be a tie-favorable allocation-monotonic mechanism (recall
Corollary \ref{c:am-fav}), with buyer payoff function $b$ and canonical
pricing function $p$. Define a new pricing function $p^{\prime
}:[0,1]^{k}\rightarrow \lbrack 0,\mathbb{\infty ]}$ by%
\begin{equation*}
p^{\prime }(g)%
{\;:=\;}%
\sum_{i=1}^{k}p(g_{i}e^{i});
\end{equation*}%
thus $p^{\prime }$ is a separably additive function, and thus it yields a
separable mechanism (namely, take for each good $i$ the one-good mechanism
with pricing function $p_{i}(g_{i})%
{\;:=\;}%
p(g_{i}e^{i})$).

The function $p$ is separably subadditive by Theorem \ref{th:am} (iii), and
so%
\begin{equation}
p(g)\leq p^{\prime }(g)  \label{eq:p<=p'}
\end{equation}%
for every $g$ (see Section \ref{sus:s-modularity}). Because $p$ is
nondecreasing we have $p(g)\geq p(g_{i}e^{i})$ for all $i$, and so $p(g)$ is
at least as large as their average, i.e., 
\begin{equation}
p(g)\geq \frac{1}{k}p^{\prime }(g).  \label{eq:p>=(1/k)p'}
\end{equation}

Since $p$ is a nondecreasing and closed function, so is the derived function 
$p^{\prime }$. We can therefore apply the result of Chawla, Teng, and Tzamos
(2022) (see Theorem \ref{th:chawla} in Appendix \ref{sus-a:chawla}): \textsc{%
AMonRev}$(X)$, which is the revenue obtainable from any pricing function $p$
as above, is, by (\ref{eq:p<=p'}) and (\ref{eq:p>=(1/k)p'}), at most $2\ln
(2k)$ times the revenue obtainable from any derived pricing function $%
p^{\prime }$; the pricing functions $p^{\prime }$ are separable, and so
their revenue is at most \textsc{SRev}$(X).$
\end{proof}

\section{Quadratic Mechanisms\label{s:quadratic}}

In this section we introduce a useful class of mechanisms that have a simple
representation, and are thus amenable to an easier analysis: mechanisms
where the relevant functions---specifically, the payment function $s$, the
buyer payoff function $b$, and the pricing function $p$---are all convex
quadratic functions (in appropriate domains). In particular, this will allow
the construction in Section \ref{sus:a-mon not submod} of an
allocation-monotonic mechanism that does not have a submodular pricing
function.

In the single good case, let $a>0$ be a parameter, and consider a mechanism $%
\mu $ with $q(x)=ax$ and $s(x)=%
{\frac12}%
ax^{2}$, and thus $b(x)=ax\cdot x-%
{\frac12}%
ax^{2}=%
{\frac12}%
ax^{2}$, for $x$ in a set $V\subseteq \lbrack 0,a^{-1}]$ (so that $q(x)\leq
1 $; the IC conditions hold because $b^{\prime }(x)=ax=q(x)$). The
corresponding pricing function is then $p(g)=s(a^{-1}g)=%
{\frac12}%
a^{-1}g^{2}$ for $g=ax$ (with $x\in V$), and so the functions $s,b$, and $p$
are convex quadratics (in certain domains).

In the $k$-good case, let $A$ be a positive definite (and thus symmetric and
invertible) $k\times k$ matrix, and let $V\subseteq \{x\in \mathbb{R}%
_{+}^{k}:Ax\in \lbrack 0,1]^{k}\}$ be a domain of valuations (in this
section all vectors should be understood as column vectors, even when
written as, say, $x=(x_{1},...,x_{k})$; the transpose of $x$, which is a row
vector, is denoted by $x^{\top }$). For convenience we assume that $V$ is a
closed convex set with a nonempty interior that contains $\mathbf{0}$. We
define:

\begin{itemize}
\item A mechanism $\mu =(q,s)$ is \emph{quadratic} with matrix $A$ and
domain $V$ as above if%
\begin{equation}
q(x)=Ax\text{\ \ and\ \ }s(x)=\frac{1}{2}x^{\top }Ax  \label{eq:quadratic}
\end{equation}%
for every $x$ in $V$
\end{itemize}

\noindent (thus, there are no restrictions outside $V$). For every $x$ in $V$
we then have $b(x)=q(x)\cdot x-s(x)=Ax\cdot x-%
{\frac12}%
x^{\top }Ax=%
{\frac12}%
x^{\top }Ax$, i.e.,%
\begin{equation}
b(x)=\frac{1}{2}x^{\top }Ax,  \label{eq:quad-b}
\end{equation}%
and so the function $b$ is a convex quadratic function on $V$, and $q(x)=Ax$
is a subgradient of $b$ at $x\in V$ (for $x$ in the interior of $V$, this is
the gradient); see Proposition \ref{p:b function}.

Do quadratic mechanisms exist? Yes: for every $A$ and $V$ let $\mu $ be the
IC mechanism with menu $\{(Ax,%
{\frac12}%
x^{\top }Ax):x\in V\}$. Then for each $x\in V$ the choice $(Ax,%
{\frac12}%
x^{\top }Ax)$ is optimal at $x$, because $Ax\cdot x-%
{\frac12}%
x^{\top }Ax\geq Ay\cdot x-%
{\frac12}%
y^{\top }Ay$ for all $y\in V$ (this inequality, which is equivalent to $%
f(y)\geq f(x)+(y-x)\cdot \nabla f(x)$ for the function $f(x)=%
{\frac12}%
x^{\top }Ax$, holds for all $x,y\in \mathbb{R}^{k}$ by the convexity of $f$%
). Therefore $\mu $ satisfies (\ref{eq:quadratic}) (and $\mathbf{0}\in V$
gives $b(\mathbf{0})=s(\mathbf{0})=0$, i.e., IR and NPT).

Let $Q_{V}\subseteq Q$ be the range of allocations for valuations in $V$,
i.e., $Q_{V}:=\{q(x):x\in V\}\subseteq \lbrack 0,1]^{k}$. Each $g\in Q_{V}$
is obtained as $q(x)$ for a unique $x\in V$, namely, $x=A^{-1}g$, and so $%
Q_{V}=A^{-1}(V)$ is a closed convex set with a nonempty interior that
contains $\mathbf{0}$ (because $V$ is such a set, and $A^{-1}$ is regular).
The restriction to $Q_{V}$ of any pricing function $p$ of $\mu $ then
satisfies%
\begin{equation}
p(g)=s(A^{-1}g)=\frac{1}{2}(A^{-1}g)^{\top }A(A^{-1}g)=\frac{1}{2}g^{\top
}A^{-1}g  \label{eq:pq}
\end{equation}%
for every $g\in Q_{V}$, and so $p$ is a convex quadratic function on $Q_{V}$.

\subsection{Allocation Monotonicity without Submodularity\label{sus:a-mon
not submod}}

For quadratic mechanisms there are simple necessary conditions for
allocation monotonicity and for submodularity.

\begin{proposition}
\label{p:quad}Let $\mu $ be a quadratic mechanism with matrix $A$ and domain 
$V$. Then:

\begin{description}
\item[(i)] If $\mu $ is allocation monotonic then the off-diagonal entries
of $A$ are all nonnegative: $(A)_{ij}\geq 0$ for all $i\neq j.$

\item[(ii)] If $\mu $ is submodular then the off-diagonal entries of $A^{-1}$
are all nonpositive: $(A^{-1})_{ij}\leq 0$ for all $i\neq j.$
\end{description}
\end{proposition}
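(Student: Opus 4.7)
The key observation is that on the interior of $V$ the allocation rule takes the explicit form $q(x)=Ax$, while on $Q_V$ any pricing function of $\mu$ (in particular the canonical one) agrees with the convex quadratic $\tfrac{1}{2}g^{\top}A^{-1}g$ by (\ref{eq:pq}). Each of the two hypotheses---allocation monotonicity of $q$ and submodularity of $p$---can therefore be tested against a small perturbation in two coordinate directions, and this will directly extract a sign condition on the relevant matrix. Since both $V$ and $Q_V$ are convex sets with nonempty interior containing $\mathbf{0}$, there is always room to shift by a small $\varepsilon e^j$ without leaving the domain.

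For part (i), fix any $x$ in the interior of $V$ and a coordinate $j$. For $\varepsilon>0$ small enough, $x+\varepsilon e^{j}$ still lies in $V$, so (\ref{eq:quadratic}) gives $q(x+\varepsilon e^{j})=Ax+\varepsilon Ae^{j}$. Allocation monotonicity asserts $q(x+\varepsilon e^{j})\geq q(x)$ coordinate-wise, hence $Ae^{j}\geq 0$; reading off the $i$-th entry yields $A_{ij}\geq 0$ for every $i$, and in particular for every $i\neq j$.

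For part (ii), fix $i\neq j$. Submodularity of $\mu$ means that the canonical pricing function $p$ is submodular on $[0,1]^{k}$, and on $Q_V$ it is the convex quadratic $p(g)=\tfrac{1}{2}g^{\top}A^{-1}g$. Choose $g$ in the interior of $Q_V$, and pick $\varepsilon,\delta>0$ small enough that the four points $g$, $g+\varepsilon e^{i}$, $g+\delta e^{j}$, and $g+\varepsilon e^{i}+\delta e^{j}$ all lie in $Q_V$; this is possible because $g$ is interior. Since the last point is the coordinate-wise maximum of the middle two and $g$ is their coordinate-wise minimum, submodularity gives
\[
p(g+\varepsilon e^{i})+p(g+\delta e^{j})\;\geq\;p(g)+p(g+\varepsilon e^{i}+\delta e^{j}).
\]
Substituting the quadratic formula on both sides and canceling the common linear and pure-quadratic ($\varepsilon^{2}$ and $\delta^{2}$) terms leaves exactly $-\varepsilon\delta\,(A^{-1})_{ij}\geq 0$, whence $(A^{-1})_{ij}\leq 0$.

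I do not anticipate a genuine obstacle: the content is just the elementary fact that the Hessian of the convex quadratic $\tfrac{1}{2}g^{\top}Bg$ is $B$, together with the standard mixed-partial characterizations of monotonicity and submodularity recorded after (\ref{eq:subm-d}) and (\ref{eq:d2p<=0}). The only point requiring care is to work strictly inside $V$ and $Q_V$, so that the formulas $q(x)=Ax$ and $p(g)=\tfrac{1}{2}g^{\top}A^{-1}g$ apply at all four perturbed points simultaneously; once that is ensured, there is no tie-breaking ambiguity since the subgradient of $b$ at an interior point of $V$ is uniquely $Ax$.
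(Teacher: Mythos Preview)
Your proof is correct and follows essentially the same approach as the paper: both parts exploit that on $\mathrm{int}\,V$ and $\mathrm{int}\,Q_V$ the allocation and canonical pricing functions are the explicit quadratics $q(x)=Ax$ and $p(g)=\tfrac{1}{2}g^{\top}A^{-1}g$, and then read off the sign of the relevant off-diagonal entry from a coordinate-direction perturbation. The paper simply cites the Hessian characterization~(\ref{eq:d2p<=0}) for part (ii) where you carry out the four-point cancellation by hand, but the content is identical.
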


For the diagonal entries we have $(A)_{ii}\geq 0$ and $(A^{-1})_{ii}\geq 0$
for all $i$, by the convexity of the functions $b$ and $p$ (in fact, $%
(A)_{ii}>0$ and $(A^{-1})_{ii}>0$ by the positive definiteness of $A$ and $%
A^{-1}$).

\begin{proof}
(i) The function $q(x)=Ax$ is nondecreasing on the nonempty open set $%
\mathrm{int\,}V$ if and only if $(A)_{ij}\geq 0$ for all $i,j$
(alternatively: $b(x)=%
{\frac12}%
x^{\top }Ax$ is supermodular if and only if $(A)_{ij}\geq 0$ for all $i\neq
j $; see Proposition \ref{p:AM2supM}).

(ii) The function $p(g)=%
{\frac12}%
g^{\top }A^{-1}g$ is submodular on the nonempty open set $\mathrm{int\,}%
Q_{V} $ if and only if $(A^{-1})_{ij}\leq 0$ for all $i\neq j$; see (\ref%
{eq:d2p<=0}).
\end{proof}

\bigskip

For positive definite matrices $A$ and $A^{-1}$, the condition on $A^{-1}$
in (ii) of nonpositive nondiagonal entries \emph{implies} the condition on $%
A $ in (i) of nonnegative nondiagonal entries (see Plemmons 1977, Theorem 1,
C9 $\Longleftrightarrow $ F15, applied to $A^{-1}$, and Proposition \ref%
{p:p-b-mod} (i)). While the converse is easily shown to be true when $k=2$,
it is not true in general, when $k\geq 3$. This allows the construction of
an example of a quadratic mechanism that is allocation monotonic but \emph{%
not} submodular.

\begin{example}
\label{ex:AM not subm}Let $k=3$ and take%
\begin{equation*}
A=%
\begin{bmatrix}
6 & 3 & 1 \\ 
3 & 6 & 3 \\ 
1 & 3 & 6%
\end{bmatrix}%
;
\end{equation*}%
then $A$ is a positive definite matrix, and its inverse is the positive
definite matrix 
\begin{equation*}
A^{-1}=\frac{1}{120}%
\begin{bmatrix}
27 & -15 & 3 \\ 
-15 & 35 & -15 \\ 
3 & -15 & 27%
\end{bmatrix}%
.
\end{equation*}%
We will construct a quadratic mechanism $\mu $ with matrix $A$ and domain $V$
that is allocation monotonic (on the whole space $\mathbb{R}_{+}^{k}$, not
just on\footnote{%
The standard extension that uses the menu $\{(Ax,%
{\frac12}%
x^{\top }Ax):x\in V\}$, as in the previous section, is not allocation
monotonic, and so we construct a different extension.} $V)$. However, $\mu $
is \emph{not} submodular, since $A^{-1}$ has positive off-diagonal entries,
namely, $(A^{-1})_{13}=(A^{-1})_{31}=3>0$, and so, by Proposition \ref%
{p:quad}, the pricing function $p$ is not submodular (already on $Q_{V}$).

To construct the mechanism $\mu $, take $v\gg 0$ such that $Av\leq e$ (e.g., 
$v=(1/15)e$), and put $V=\{x\in \mathbb{R}^{3}:0\leq x\leq v\}$; then $0\leq
Ax\leq Av\leq e$ for every $x\in V$ (because $A\geq 0$). For every $x\in 
\mathbb{R}_{+}^{3}$ denote $\tilde{x}:=x\wedge v$, and let%
\begin{equation*}
b(x)=%
{\frac12}%
\tilde{x}^{\top }A\tilde{x}+e\cdot (x-\tilde{x})=%
{\frac12}%
\tilde{x}^{\top }A\tilde{x}+\sum_{i=1}^{k}[x_{i}-v_{i}]_{+}
\end{equation*}%
(thus, $b(x)=%
{\frac12}%
x^{\top }Ax$ for every $x\in V$). We claim that $b$ is a continuous,
nondecreasing, nonexpansive, convex, and supermodular function on $\mathbb{R}%
_{+}^{3}$. Indeed, the derivative of $b$ in direction $e^{i}$ (the $i$-th
unit vector) is $b^{\prime }(x;e^{i})=(A\tilde{x})_{i}$ if $x_{i}<v_{i}$,
and $b^{\prime }(x;e^{i})=1$ if $x_{i}\geq v_{i}$. Since $(A\tilde{x})_{i}$
is nondecreasing in $x$ (because the mapping $x\rightarrow \tilde{x}$ is
nondecreasing, and $A\geq 0)$ and $(A\tilde{x})_{i}\leq (Av)_{i}\leq 1$, the
function $b^{\prime }(x;e^{i})$ is nondecreasing in $x$. Therefore $b$ is an
ultramodular function (by Theorem 5.5 (i) in Marinacci and Montrucchio 2005);%
\footnote{%
Which yields ultramodularity of $b$ on the interior of $\mathbb{R}_{+}^{k};$
since $b$ is continuous, it extends to $\mathbb{R}_{+}^{k}$ (see also
Proposition \ref{p:delta+} in Appendix \ref{sus-a:utramodular}).} since $%
0\leq b^{\prime }(x;e^{i})\leq 1$, it is nondecreasing and nonexpansive.
Finally, to see that $b$ is convex, represent it as%
\begin{equation}
b(x)=\max_{I\subseteq K}\left\{ b_{0}(x_{K\backslash I},v_{I})+\sum_{i\in
I}(x_{i}-v_{i})\right\}   \label{eq:b0}
\end{equation}%
where $b_{0}(x):=%
{\frac12}%
x^{\top }Ax.$ Indeed: (i) the nonexpansiveness of $b_{0}$ yields $%
b_{0}(y_{-i},x_{i})\leq b_{0}(y_{-i},v_{i})+x_{i}-v_{i}$ when $x_{i}\geq
v_{i},$ and $b_{0}(y_{-i},v_{i})+x_{i}-v_{i}\leq b_{0}(y_{-i},x_{i})$ when $%
x_{i}\leq v_{i}$, and so the maximum in (\ref{eq:b0}) is attained at $%
I=\{i:x_{i}\geq v_{i}\},$ i.e., when $(x_{K\backslash I},v_{I})=x\wedge v=%
\tilde{x}$ and $b(x)=b_{0}(\tilde{x})+e\cdot (x-\tilde{x});$ (ii) for each $I
$ the corresponding function in (\ref{eq:b0}) is convex, because $b_{0}$ is
convex, and so their maximum is convex as well.

Therefore $b$ is a buyer payoff function (by Proposition \ref{p:b function})
that is supermodular. By Proposition \ref{p:AM2supM}, the corresponding
buyer-favorable (and tie-favorable) mechanism $\mu =(q,s)$, whose allocation
function is $q_{i}(x)=b^{\prime }(x;e^{i})$ (i.e., $q_{i}(x)=(A\tilde{x}%
)_{i} $ if $x_{i}<v_{i}$ and $q_{i}(x)=1$ if $x_{i}\geq v_{i}$) for $%
i=1,...,k$, is allocation monotonic.
\end{example}

\section{Revenue of Symmetric Deterministic Mechanisms\label{s:sym det}}

In this section we study the class of symmetric deterministic mechanisms,
which are monotonic (by Theorem 4 and Proposition 5 in Hart and Reny 2015),
but in general not allocation monotonic (see Example \ref{ex:mon-not-am} in
the Introduction). Perhaps surprisingly, here we start with the \emph{%
supermodular} case, where we show that the revenue is at most an $O(\log k)$
multiple of the separate revenue (Theorem \ref{th:sd-super}), from which we
obtain an $O(\log ^{2}k)$ multiple in the general case (Theorem \ref%
{th:symdrev}).

Let $\mu $ be a symmetric deterministic mechanism. Its canonical
deterministic pricing function $p$ is then also symmetric: the price of a
set $A\subseteq K$ of goods depends only on the size $|A|$ of $A$. We thus
write $p(\left\vert A\right\vert )$ instead of $p(A)$, where $%
p:\{0,1,...,k\}\rightarrow \lbrack 0,\infty ]$ is a nondecreasing function
with $p(0)=0.$

\subsection{The Supermodular Case\label{sus:sym det supermod}}

The symmetric deterministic mechanism $\mu $ is supermodular if $%
p(|A|)+p(|B|)\leq p(|A\cup B|)+p(|A\cap B|)$ for every $A,B\subseteq K$.
This is easily seen to be equivalent to the sequence of price differences $%
d(m):=p(m)-p(m-1)$ being nondecreasing, i.e., 
\begin{equation*}
d(m)\geq d(m-1)
\end{equation*}%
for all\footnote{%
Such functions are the restriction to $\{0,1,...,k\}$ of convex functions on
the interval $[0,k].$} $1\leq m\leq k$ (when $p(m)=\infty $ put $d(m)=\infty 
$). Let \textsc{SupermodSymDRev} denote the maximal revenue achievable by
supermodular symmetric deterministic mechanisms.

For every $k\geq 1$ let%
\begin{equation*}
H(k)%
{\;:=\;}%
1+\frac{1}{2}+...+\frac{1}{k}
\end{equation*}%
denote the harmonic sum up to $k$; thus, $\ln k\leq H(k)\leq \ln k+1$, and $%
H(k)-\ln k\rightarrow \gamma \equiv 0.577..$. as $k\rightarrow \infty .$

\begin{theorem}
\label{th:sd-super}Let $X$ be a $k$-good random valuation. Then%
\begin{equation*}
\text{\textsc{SupermodSymDRev}}(X)\leq H(k)\cdot \text{\textsc{SymSRev}}%
(X)\leq O(\log k)\cdot \text{\textsc{SRev}}(X).
\end{equation*}
\end{theorem}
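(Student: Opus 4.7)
The plan is to exploit the simple structure of supermodular symmetric deterministic mechanisms together with a Markov-type union bound to compare with the symmetric separate revenue, and then use the trivial inequality $\text{SymSRev}(X) \le \text{SRev}(X)$.

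First I would describe the pricing function. For such a mechanism $\mu$, the canonical deterministic pricing function is determined by a single sequence $p(0)=0, p(1), \dots, p(k)$ whose successive differences $d(m) := p(m) - p(m-1)$ form a nondecreasing sequence (this is the supermodularity condition as already noted in the text). Next I would characterize the buyer's optimal choice: given a valuation $x$ with sorted coordinates $x_{(1)} \ge \cdots \ge x_{(k)}$, the payoff from buying any $m$-subset is maximized by taking the top $m$ coordinates, giving utility $\sum_{i=1}^{m}(x_{(i)} - d(i))$. Since $x_{(i)}$ is nonincreasing in $i$ and $d(i)$ is nondecreasing in $i$, the buyer (under the tie-favorable rule) ends up purchasing precisely those ranked goods with $x_{(i)} \ge d(i)$. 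Hence the revenue can be written as
\[
R(\mu;X) \;=\; \sum_{i=1}^{k} d(i)\cdot \Pr\!\left[X_{(i)} \ge d(i)\right].
\]

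The key step is then bounding $\Pr[X_{(i)} \ge t]$ for each threshold $t=d(i)$. Because the event $\{X_{(i)} \ge t\}$ is the event that at least $i$ of the $X_j$'s lie above $t$, Markov's inequality applied to the counting variable $N_t := |\{j : X_j \ge t\}|$ gives
\[
\Pr\!\left[X_{(i)} \ge t\right] \;\le\; \frac{1}{i}\sum_{j=1}^{k}\Pr[X_j \ge t].
\]
Plugging in $t=d(i)$ yields
\[
d(i)\Pr\!\left[X_{(i)} \ge d(i)\right] \;\le\; \frac{1}{i}\Bigl( d(i)\sum_{j=1}^{k} \Pr[X_j \ge d(i)]\Bigr) \;\le\; \frac{1}{i}\,\text{\textsc{SymSRev}}(X),
\]
since $d(i)\sum_j \Pr[X_j \ge d(i)]$ is exactly the revenue obtained by selling each good separately at the common price $d(i)$, which by definition is at most $\text{\textsc{SymSRev}}(X)$. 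Summing the inequality over $i = 1, \dots, k$ produces $R(\mu;X) \le H(k)\cdot \text{\textsc{SymSRev}}(X)$, and taking the supremum over supermodular symmetric deterministic $\mu$ gives the first inequality.

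For the second inequality, I would simply observe that $\text{\textsc{SymSRev}}(X)\le \text{\textsc{SRev}}(X)$ (since $\text{\textsc{SRev}}$ optimizes the price of each good independently, whereas $\text{\textsc{SymSRev}}$ forces a common price), combined with $H(k) = O(\log k)$. There is no serious obstacle here; the one step requiring a little care is verifying, under the tie-favorable rule, that the buyer's optimal set is precisely $\{i : x_{(i)}\ge d(i)\}$, since strict inequalities give a unique optimum and ties are resolved in favor of the larger allocation (and hence higher payment), which is consistent with the revenue-maximizing choice used throughout Section~\ref{sus:tie-fav}.
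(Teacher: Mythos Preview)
Your proof is correct and is essentially the paper's argument: both write the payment as $\sum_{n} d(n)\,\Pr[\text{buyer takes }\ge n\text{ goods}]$ and bound each summand by $\tfrac{1}{n}\,\text{\textsc{SymSRev}}(X)$ via the inequality $n\cdot\Pr[N_{d(n)}\ge n]\le \mathbb{E}[N_{d(n)}]=\sum_j \Pr[X_j\ge d(n)]$. The only cosmetic difference is that you first identify the optimal bundle explicitly through order statistics (and Markov's inequality by name), whereas the paper works directly from IC---if $|q(x)|=m$ then every chosen coordinate has $x_i\ge d(m)\ge d(n)$---without pinning down which $m$ goods are bought.
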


\begin{proof}
Let $\mu =(q,s)$ be a symmetric deterministic mechanism, whose canonical
deterministic pricing function $p:\{0,1,...,k\}\rightarrow \lbrack 0,\infty
] $ is supermodular. Let $k_{0}\leq k$ be the maximal size for which the
price is finite, i.e., $p(m)<\infty $ if and only if $m\leq k_{0}$. We will
show that%
\begin{equation}
R(\mu ;X)\leq H(k_{0})\cdot \,\text{\textsc{SymSRev}}(X),
\label{eq:harmonic}
\end{equation}%
which yields the result.

Given a random valuation $X$, for each subset size $0\leq m\leq k$ let $%
\beta _{m}:=\mathbb{P}\left[ |q(X)|=m\right] $ be the probability that $\mu $
allocates exactly $m$ goods; then $\beta _{m}=0$ for $m>k_{0}$ (because the
price there is infinite), and 
\begin{equation}
R(\mu ;X)=\sum_{m=1}^{k_{0}}\beta _{m}p(m)  \label{eq:R}
\end{equation}%
(the sum starts at $1$ since $p(0)=0$).

We claim that for every $n=1,...,k_{0}$,%
\begin{equation}
\sum_{i=1}^{k_{0}}\mathbb{P}\left[ X_{i}\geq d(n)\right] \geq
\sum_{m=n}^{k_{0}}m\beta _{m}.  \label{eq:pn-pn-1}
\end{equation}%
Indeed, if $q(x)$ is a set of size at least $n$, say $q(x)=A$ with $%
|A|=m\geq n$, then for each one of the $m$ elements $i$ of $A$ we have $%
b(x)=x(A)-p(m)\geq x(A\backslash \{i\})-p(m-1)$ (by (\ref{eq:b=p0*})), and
so $x_{i}\geq p(m)-p(m-1)=d(m)\geq d(n)$ (by supermodularity). Therefore $A$
contributes to the left-hand sum at least $m$ times $\mathbb{P}\left[ q(X)=A%
\right] $. Summing over all $A$ with $|A|=m\geq n$ (the events $\{q(X)=A\}$
are disjoint for different sets $A$) yields the inequality (\ref{eq:pn-pn-1}%
).

Therefore%
\begin{equation*}
\text{\textsc{SymSRev}}(X)\geq d(n)\sum_{i=1}^{k_{0}}\mathbb{P}\left[
X_{i}\geq d(n)\right] \geq d(n)\sum_{m=n}^{k_{0}}m\beta _{m}\geq
d(n)n\sum_{m=n}^{k_{0}}\beta _{m}
\end{equation*}%
(the second inequality is (\ref{eq:pn-pn-1})), and so%
\begin{equation*}
d(n)\sum_{m=n}^{k_{0}}\beta _{m}\leq \frac{1}{n}\text{\textsc{SymSRev}}(X).
\end{equation*}%
Summing over $n=1,...,k_{0}$ yields (\ref{eq:harmonic}), since%
\begin{equation*}
\sum_{n=1}^{k_{0}}d(n)\sum_{m=n}^{k_{0}}\beta _{m}=\sum_{m=1}^{k_{0}}\beta
_{m}\sum_{n=1}^{m}d(n)=\sum_{m=1}^{k_{0}}\beta _{m}p(m),
\end{equation*}%
which is $R(\mu ;X)$ by (\ref{eq:R}).
\end{proof}

\bigskip

\noindent \textbf{Remarks. }\emph{(a)} The bound in (\ref{eq:harmonic}) is
tight for each $k\geq 1$: see Example \ref{ex:harmonic} below.

\emph{(b)} The bound in (\ref{eq:harmonic}) need not hold when $p$ is not
supermodular: see Example \ref{ex:non-convex p} below.

\bigskip

The following example shows the tightness of (\ref{eq:harmonic}); in
addition, it provides a gap of $\ln k$ between \textsc{AMonRev} and \textsc{%
MonRev}.

\begin{example}
\label{ex:harmonic}Let $M>1$ be large, and define for each $m\geq 1$ 
\begin{eqnarray*}
z_{m} &%
{\;:=\;}%
&(\underbrace{M^{m},...,M^{m}}_{m},\underbrace{0,...,0}_{k-m})\in \mathbb{R}%
_{+}^{k}\text{\ \ \ and} \\
\beta _{m} &%
{\;:=\;}%
&\frac{1}{mM^{m}},
\end{eqnarray*}%
and put $z_{0}:=\mathbf{0}$ and $\beta _{0}:=1-\sum_{m=1}^{k}\beta _{m}\geq
0 $. Let the random valuation $X$ take as values the $\binom{k}{m}$
permutations of (the coordinates of) $z_{m}$ with probability $\beta _{m}/%
\binom{k}{m}$ each, for all $m=0,...,k$; thus, for every $A\subseteq K$ we
have $\mathbb{P}\left[ X=M^{m}\mathbf{1}_{A}\right] =\beta _{m}/\binom{k}{m}$%
, where $m:=\left\vert A\right\vert .$

We claim that, as $M\rightarrow \infty ,$

\begin{enumerate}
\item \textsc{SupermodSymDRev}$(X)$ and \textsc{Rev}$(X)$ converge to $H(k)$%
, and thus so do \textsc{SymDRev}$(X),~$\textsc{DRev}$(X)$, and \textsc{%
MonRev}$(X)$ (which all lie between the above two revenues).

\item \textsc{SRev}$(X)$\ (which is the same as \textsc{SymSRev}$(X)$) and\ 
\textsc{AMonRev}$(X)$ converge to\footnote{\textsc{BRev}$(X)$ converges to $%
1 $ as well, because $mM^{m}\cdot \mathbb{P}\left[ \sum_{i}X_{i}\geq mM^{m}%
\right] \rightarrow 1$ for all $m\geq 1$ (cf. the proof of (ii)).} $1.$
\end{enumerate}

\noindent The proof is divided into four parts, (i)--(iv).

\begin{description}
\item[(i)] \ \ \ \textsc{SupermodSymDRev}$(X)\geq H(k).$
\end{description}

\noindent \emph{Proof.} Take the symmetric deterministic mechanism $\mu
=(q,s)$ with supermodular pricing function $p(m)=M^{m}$ for all $m=1,...,k$
(and $p(0)=0$); we will show that $R(\mu ;X)=H(k)$. Indeed, for each
permutation $z_{m}^{\prime }$ of $z_{m}$ we have $q(z_{m}^{\prime })=\{i\in
K:z_{m,i}^{\prime }=M^{m}\}$ (e.g., $q(z_{m})=\{1,...,m\}$) and $%
s(z_{m}^{\prime })=p(m)$. To see why this is so, consider $z_{m}$: the IC
and IR conditions hold at $z_{m}$ because $q(z_{m})\cdot
z_{m}-s(z_{m})=mM^{m}-p(m)=(m-1)M^{m}$, whereas for each $z_{n}^{\prime }$
(some permutation of $z_{n}$) that is different from $z_{m}$, if $n\leq m$
then $q(z_{n}^{\prime })\cdot z_{m}-s(z_{n}^{\prime })\leq
nM^{m}-M^{n}<(m-1)M^{m}$ (because the set $q(z_{n}^{\prime })$ contains at
most $n$ elements from the set $\{1,...,m\}$; when $n=m$ use $z_{n}^{\prime
}\neq z_{m}$), and if $n>m$ then $q(z_{n}^{\prime })\cdot
z_{m}-s(z_{n}^{\prime })\leq mM^{m}-M^{n}<(m-1)M^{m}$. The same holds for
any permutation $z_{m}^{\prime }$ of $z_{m}$. Therefore $R(\mu
;X)=\sum_{m=1}^{k}\beta _{m}p(m)=\sum_{m=1}^{k}1/m=H(k).\;\square $

\begin{description}
\item[(ii)] $\;\;\;$\textsc{Rev}$(X)\rightarrow H(k)$ as $M\rightarrow
\infty .$
\end{description}

\noindent \emph{Proof.} Let $\mu =(q,s)$ be a mechanism. Since $X$ is
symmetric we can assume without loss of generality that $\mu $ is symmetric,
in the sense that if the coordinates of $x^{\prime }$ are a permutation of
the coordinates of $x$ then $q(x^{\prime })$ is the corresponding
permutation $q(x)^{\prime }$ of the coordinates of $q(x)$, and $s(x^{\prime
})=s(x)$ (indeed, the average of all \textquotedblleft coordinate
permutations" of $\mu $ yields the same revenue from $X$ as the original $%
\mu )$. Therefore the first $m$ coordinates of $q(z_{m})$ are all equal,
say, $q_{i}(z_{m})=\gamma _{m}$ for some $0\leq \gamma _{m}\leq 1$, and $%
s(z_{m}^{\prime })=s(z_{m})$ for every permutation $z_{m}^{\prime }$ of $%
z_{m}$, which yields%
\begin{equation*}
R(\mu ;X)=\sum_{m=1}^{k}\frac{1}{mM^{m}}s(z_{m}).
\end{equation*}

For every $m\geq 1$, by IC at $z_{m}$ vs. $z_{m-1}$, we have%
\begin{equation*}
m\gamma _{m}M^{m}-s(z_{m})\geq (m-1)\gamma _{m-1}M^{m}-s(z_{m-1}).
\end{equation*}%
By IR at $z_{m-1}$ we have $s(z_{m-1})\leq (m-1)M^{m-1}\leq mM^{m-1}$, and
so dividing by $mM^{m}$ yields%
\begin{equation}
\frac{1}{mM^{m}}s(z_{m})\leq \gamma _{m}-\frac{m-1}{m}\gamma _{m-1}+\frac{1}{%
M}.  \label{eq:pm}
\end{equation}%
Summing (\ref{eq:pm}) over $m=1,...,k$ gives%
\begin{equation*}
R(\mu ;X)\leq \sum_{m=1}^{k}\frac{1}{m}\gamma _{m-1}+\frac{k}{M}\leq
\sum_{m=1}^{k}\frac{1}{m}+\frac{k}{M}\rightarrow H(k),
\end{equation*}%
where the second inequality is by $\gamma _{m}\leq 1$ for all $m$. Part (i)
completes the proof.$\;\square $

\begin{description}
\item[(iii)] \ \ \ \textsc{SRev}$(X)=\,\text{\textsc{SymSRev}}(X)\rightarrow
1$ as $M\rightarrow \infty .$
\end{description}

\noindent \emph{Proof.} $X_{i}$ takes the value $M^{m}$ in the fraction $m/k$
of the permutations of $z_{m}$, and so $\mathbb{P}\left[ X_{i}=M^{m}\right]
=(m/k)\beta _{m}=1/(kM^{m})$, which yields%
\begin{equation*}
M^{m}\cdot \mathbb{P}\left[ X_{i}\geq M^{m}\right] =M^{m}\cdot \sum_{n\geq m}%
\frac{1}{kM^{n}}\rightarrow \frac{1}{k}
\end{equation*}%
as $M\rightarrow \infty $. Therefore \textsc{Rev}$(X_{i})\rightarrow 1/k$
for each $i$, and so \textsc{SymSRev}$(X)=\,$\textsc{SRev}$(X)\rightarrow
1.\;\square $

\begin{description}
\item[(iv)] \ \ \ \textsc{AMonRev}$(X)\rightarrow 1$ as $M\rightarrow \infty
.$
\end{description}

\noindent \emph{Proof.} Let $\mu =(q,s)$ be an allocation-monotonic
mechanism; as in the the proof of (ii), we assume without loss of generality
that $\mu $ is symmetric, because the symmetrization preserves allocation
monotonicity (an average of allocation-monotonic mechanisms is clearly
allocation monotonic). Define $\gamma _{m}$ as in (ii).

For each $m\geq 1$, let $y_{m}$ have its first $m$ coordinates equal to $%
M^{m-1}$ and the rest equal to $0$; then $y_{m}\geq z_{m-1}$ and $y_{m}\geq
z_{m-1}^{\#}$, where $z_{m}^{\#}$ is the permutation of $z_{m-1}$ for which
coordinates $2,...,m$ are equal to $M^{m-1}$; by allocation monotonicity we
thus have $q(y_{m})\geq q(z_{m-1})$ and $q(y_{m})\geq
q(z_{m-1}^{\#})=q(z_{m-1})^{\#}$, and so each one of the first $m$
coordinates of $q(y_{m})$ is $\geq \gamma _{m-1}$. By IR at $y_{m}$ we have $%
s(y_{m})\leq q(y_{m})\cdot y_{m}\leq mM^{m-1}$, and then by IC at $z_{m}$
vs. $y_{m}$ we get%
\begin{equation*}
m\gamma _{m}M^{m}-s(z_{m})\geq q(y_{m})\cdot z_{m}-s(y_{m})\geq m\gamma
_{m-1}M^{m}-mM^{m-1}.
\end{equation*}%
Dividing by $mM^{m}$ yields%
\begin{equation*}
\frac{1}{mM^{m}}p_{m}\leq \gamma _{m}-\gamma _{m-1}+\frac{1}{M},
\end{equation*}%
and then summing over $m=1,...,k$ gives%
\begin{equation*}
R(\mu ;X)\leq \gamma _{k}+\frac{k}{M}\leq 1+\frac{k}{M}\rightarrow 1.
\end{equation*}%
Part (iii) completes the proof, since selling separately is an
allocation-monotonic mechanism.$\;\square $
\end{example}

\bigskip

We thus have a random valuation where monotonic mechanisms yield a revenue
that is $\ln k$ times higher than the revenue of allocation-monotonic
mechanisms (use $H(k)>\ln k$ for $k\geq 2$).

\begin{corollary}
\label{c:amon vs mon}For every $k\geq 2$ there exists a $k$-good random
valuation $X$ such that%
\begin{equation*}
\text{\textsc{MonRev}}(X)\geq \ln k\,\cdot \,\text{\textsc{AMonRev}}(X).
\end{equation*}
\end{corollary}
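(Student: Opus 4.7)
The plan is to use the random valuation $X$ constructed in Example \ref{ex:harmonic} with the parameter $M$ chosen sufficiently large, and to combine parts (i) and (iv) of that example.

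First, I would observe that every supermodular symmetric deterministic mechanism is monotonic (by Theorem 4 and Proposition 5 in Hart and Reny 2015, as already noted at the start of Section \ref{s:sym det}). Consequently,
\begin{equation*}
\text{\textsc{MonRev}}(X)\;\geq\;\text{\textsc{SupermodSymDRev}}(X),
\end{equation*}
and by part (i) of Example \ref{ex:harmonic} the right-hand side is at least $H(k)$ for every $M>1$ (that part exhibits an explicit supermodular symmetric deterministic mechanism whose revenue equals $H(k)$).

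Next, part (iv) of Example \ref{ex:harmonic} gives $\text{\textsc{AMonRev}}(X)\to 1$ as $M\to\infty$. Therefore, for every $\varepsilon>0$, choosing $M$ large enough yields $\text{\textsc{AMonRev}}(X)\leq 1+\varepsilon$, and hence
\begin{equation*}
\frac{\text{\textsc{MonRev}}(X)}{\text{\textsc{AMonRev}}(X)}\;\geq\;\frac{H(k)}{1+\varepsilon}.
\end{equation*}
Since $H(k)=1+\tfrac{1}{2}+\cdots+\tfrac{1}{k}>\ln k$ strictly for every $k\geq 2$ (this is the standard comparison of a decreasing Riemann sum with an integral), one can pick $\varepsilon>0$ small enough so that $H(k)/(1+\varepsilon)\geq \ln k$. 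Fixing such an $\varepsilon$ and the corresponding $M$ delivers the claimed inequality $\text{\textsc{MonRev}}(X)\geq \ln k\cdot \text{\textsc{AMonRev}}(X)$.

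The only thing that requires any care is the quantification: the two limits in Example \ref{ex:harmonic} are taken as $M\to\infty$, and one must verify that the lower bound $\text{\textsc{SupermodSymDRev}}(X)\geq H(k)$ in part (i) is valid for \emph{every} $M>1$ (not just in the limit), which is indeed what part (i) establishes by exhibiting a concrete mechanism. This ensures that a single $M$ can be chosen that simultaneously makes \textsc{AMonRev}$(X)$ arbitrarily close to $1$ while keeping \textsc{MonRev}$(X)\geq H(k)$, giving the desired separation. No further obstacle is expected.
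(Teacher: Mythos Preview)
Your proposal is correct and follows essentially the same route as the paper: the paper's justification (given in the sentence immediately preceding the corollary) is simply ``use $H(k)>\ln k$ for $k\geq 2$,'' relying on parts (i) and (iv) of Example~\ref{ex:harmonic}, which is exactly what you do. One tiny remark: the pricing function $p(m)=M^m$ in part~(i) is supermodular only for $M\geq 2$ (since $d(2)=M(M-1)\geq M=d(1)$ requires $M\geq 2$), so your parenthetical ``for every $M>1$'' is slightly too generous; but since you take $M$ large anyway this has no effect on the argument, and in any case the mechanism is symmetric deterministic and hence monotonic regardless of supermodularity.
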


Theorem \ref{th:monrev} implies that \textsc{MonRev}$(X)\leq k\,\cdot \,$%
\textsc{AMonRev}$(X)$ (because selling separately, or bundled, is allocation
monotonic). We do not know what is the correct bound here (between $\ln k$
and $k$).

The next example shows that the bound in (\ref{eq:harmonic}) need not hold
for symmetric deterministic mechanisms that are \emph{not} supermodular.

\begin{example}
\label{ex:non-convex p}Let $k=3$, and take $p(0)=0,\;p(1)=p(2)=1$, and $%
p(3)=M$, where $M>1$ is a large number; then $p$ is \emph{not} supermodular,
because $p(2)-p(1)<p(1)-p(0)$. We will construct a random valuation $X$ such
that, by letting $B_{m}$ be the event that $\left\vert q(X)\right\vert =m$,
and $\beta _{m}:=\mathbb{P}\left[ B_{m}\right] $ its probability, we have $%
\beta _{1}=0$, $\beta _{2}=1/2$, $\beta _{3}=1/(12M)$, and $\beta
_{0}=1-1/2-1/(12M)$. Let the random valuation $X=(X_{1},X_{2},X_{3})$ be
exchangeable (i.e., permuting the coordinates does not change the
distribution of $X)$, as follows: the set $B_{2}$ consists of the valuations 
$(U,1-U,0)$ and their permutations, where $U$ is uniformly chosen from the
interval $(0,1)$; the set $B_{3}$ consists of the single valuation $(M,M,M)$%
; and $B_{0}$ consists of $\mathbf{0}=(0,0,0)$ (the set $B_{1}$ is empty).
The IC and IR conditions are easily checked. Then $R(\mu ;X)=\sum_{m}\beta
_{m}p(m)=1/2\cdot 1+1/(12M)\cdot M=7/12$, whereas for each $i$ we have%
\footnote{%
We write $a\sim b$ to mean $a/b\rightarrow 1$ as $M\rightarrow \infty .$} 
\textsc{Rev}$(X_{i})=\sup_{t}t\cdot \mathbb{P}\left[ X_{i}\geq t\right] \sim
1/12$ (attained at\footnote{%
For $t=M$ we have $M\cdot \beta _{3}=1/12,$ and for $0\leq t\leq 1$ we have $%
t\cdot ((1-t)(1/3)+1/(12M))\sim t(1-t)/3,$ whose maximum, at $t=1/2,$ equals 
$1/12.$} $t\sim 1/2)$, yielding \textsc{SymSRev}$(X)=\,$\textsc{SRev}$%
(X)\sim 1/4$ and thus $R(\mu ;X)/$\textsc{SRev}$(X)\sim
(7/12)/(1/4)=7/3>11/6=1+1/2+1/3=H(3).$
\end{example}

\subsection{The General Case\label{sus:sym det general}}

We can now bound the maximal revenue obtainable by symmetric deterministic
mechanisms, which we denote by \textsc{SymDRev}, relative to the (symmetric)
separate revenue.

\begin{theorema}
\label{th:symdrev}Let $X$ be a $k$-good random valuation. Then%
\begin{equation*}
\text{\textsc{SymDRev}}(X)\leq 2\ln (2k)H(k)\cdot \text{\textsc{SymSRev}}%
(X)\leq O(\log ^{2}k)\cdot \text{\textsc{SRev}}(X).
\end{equation*}
\end{theorema}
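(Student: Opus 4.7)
The approach mirrors that of Theorem~\ref{th:amon-srev}: reduce to a subclass for which a sharper bound is already known---here, the supermodular symmetric deterministic mechanisms handled in Theorem~\ref{th:sd-super}---paying a Chawla--Teng--Tzamos factor of $2\ln(2k)$. Concretely, let $\mu$ be a symmetric deterministic mechanism, with symmetric canonical pricing function $p:\{0,\ldots,k\}\to[0,\infty]$, and let $k_0\le k$ be the largest index with $p(k_0)<\infty$. I plan to build an auxiliary pricing function $p'$ on $\{0,\ldots,k\}$ whose corresponding mechanism $\mu'$ is supermodular and which is pointwise sandwiched with $p$, and then apply Theorem~\ref{th:chawla} followed by Theorem~\ref{th:sd-super}.

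Define $p'$ to be the \emph{lower convex envelope} of $p$ on $\{0,\ldots,k_0\}$, namely the largest convex function on $\{0,\ldots,k_0\}$ with $p'(0)=0$ and $p'(m)\le p(m)$; extend it by $p'(m)=\infty$ for $m>k_0$. By construction $p'$ is nondecreasing, closed, and convex, so the induced symmetric deterministic mechanism $\mu'$ is supermodular, and obviously $p'\le p$. The heart of the argument is the reverse inequality $p(m)\le k\,p'(m)$. The envelope $p'$ is piecewise linear with vertices at some indices $0=j_0<j_1<\cdots<j_r=k_0$ where $p'(j_i)=p(j_i)$; for $m$ strictly between consecutive vertices $j_i<m<j_{i+1}$,
\begin{equation*}
p'(m)\;=\;\Bigl(1-\tfrac{m-j_i}{j_{i+1}-j_i}\Bigr)p(j_i)+\tfrac{m-j_i}{j_{i+1}-j_i}\,p(j_{i+1})\;\ge\;\tfrac{1}{j_{i+1}-j_i}\,p(j_{i+1})\;\ge\;\tfrac{1}{k}\,p(m),
\end{equation*}
using $p(j_i)\ge 0$, $j_{i+1}-j_i\le k$, and $p(m)\le p(j_{i+1})$ by monotonicity; at the vertices themselves the inequality holds trivially.

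With the sandwich $p'\le p\le k\,p'$ in hand, Theorem~\ref{th:chawla} yields $R(\mu;X)\le 2\ln(2k)\cdot R(\mu';X)$, and Theorem~\ref{th:sd-super} yields $R(\mu';X)\le H(k)\cdot\text{\textsc{SymSRev}}(X)$. Chaining the two and taking the supremum over $\mu$ gives $\text{\textsc{SymDRev}}(X)\le 2\ln(2k)H(k)\cdot\text{\textsc{SymSRev}}(X)$, and the second inequality in the statement follows from $\text{\textsc{SymSRev}}\le\text{\textsc{SRev}}$ together with $H(k)=O(\log k)$. The main obstacle is the factor-$k$ comparison---a small combinatorial estimate about the lower convex hull of a nondecreasing sequence on $\{0,\ldots,k_0\}$. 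A minor technicality is that Theorem~\ref{th:chawla} is phrased for pricing functions on $[0,1]^k$, so one must first pass from $p$ and $p'$ on $\{0,\ldots,k\}$ to their symmetric canonical extensions on $[0,1]^k$ and check that the pointwise ratio bound survives under such symmetric convex extensions; this step is routine.
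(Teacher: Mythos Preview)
Your proof is correct and follows essentially the same approach as the paper: sandwich the symmetric pricing function $p$ within a factor $k$ by a supermodular one, invoke the Chawla--Teng--Tzamos theorem, and then apply Theorem~\ref{th:sd-super}. The only difference is in the direction of the approximation---the paper builds a supermodular \emph{majorant} $p'\ge p$ by taking running maxima of the differences $d(m)=p(m)-p(m-1)$ (so that $\tfrac{1}{k}p'\le p\le p'$), whereas you build a supermodular \emph{minorant} via the lower convex envelope (so that $p'\le p\le kp'$); both yield the same ratio $c_2/c_1=k$ and hence the same $2\ln(2k)$ factor.
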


\begin{proof}
We will show that any symmetric deterministic $k$-good pricing function can
be bounded within a factor of $k$ by a supermodular one, which yields the
first $\log k$ factor; the second $\log k$ factor comes from Theorem \ref%
{th:sd-super}.

Let $\mu $ be a symmetric deterministic mechanism; let $p$ be its canonical
deterministic pricing function, with $p(0)=0$. Let $d(m):=p(m)-p(m-1)$ be
the price differences; then\footnote{%
Recall that we put $d(m)=\infty $ when $p(m)=\infty .$} $0\leq d(m)\leq p(m)$
(because $p$ is nondecreasing and $p(0)=0)$. Define $d^{\prime
}(m):=\max_{1\leq n\leq m}d(n)$ and $p^{\prime }(m):=\sum_{\ell
=1}^{m}d^{\prime }(\ell )$ for all $m\geq 1$, and $p^{\prime }(0):=p(0)=0$.
We have: $p^{\prime }\geq p$ (because $d^{\prime }\geq d$); $p^{\prime }$ is
nondecreasing (because $d^{\prime }\geq d\geq 0$); $p^{\prime }$ is
supermodular (because $d^{\prime }$ is nondecreasing); and, for each $m\geq
1 $, if $d^{\prime }(m)=d(n)$ for some $1\leq n\leq m$ then $p^{\prime
}(m)\leq md(n)$ (because $d^{\prime }(\ell )\leq d(n)$ for all $1\leq \ell
\leq m)$, and so $p^{\prime }(m)\leq kp(m)$ (because $d(n)\leq p(n)\leq p(m)$
and $m\leq k$). Altogether, we have%
\begin{equation*}
\frac{1}{k}p^{\prime }(m)\leq p(m)\leq p^{\prime }(m)
\end{equation*}%
for all $m$, with $p^{\prime }$ supermodular. By the result of Chawla, Teng,
and Tzamos (2022) (Theorem \ref{th:chawla}), it follows that%
\begin{equation*}
\text{\textsc{SymDRev}}(X)\leq 2\ln (2k)\cdot \text{\textsc{SupermodSymDRev}}%
(X);
\end{equation*}%
together with Theorem \ref{th:sd-super}, this yields the result.
\end{proof}

\section{Monotonicity of Deterministic Mechanisms\label{s:mon det}}

In this section we obtain conditions on a deterministic pricing function to
yield a monotonic mechanism. These conditions are more complex than the
submodularity condition for allocation monotonicity (see Theorem \ref{th:am}
(iv) in Section \ref{sus:a-mon characterization}). A mechanism $\mu =(q,s)$
is \emph{tie consistent }if the buyer breaks ties in a consistent manner,
i.e., in the same way at all valuations. This means that if the same two
distinct choices are optimal both at $x$ and at $y$, then $\mu $ cannot
choose one at $x$ and the other at $y$. For example, if the price of each
single good is $1$, we cannot have good $1$ allocated at $x=(2,2,0)$ and
good $2$ allocated at $y=(2,2,1)$. Formally, tie consistency requires that
if $q(x)$ is optimal at $y$ (which means that $q(y)\cdot
x-s(y)=b(x)=q(x)\cdot x-s(x)$) and $q(y)$ is optimal at $x$ then $q(x)=q(y)$%
. Choosing among the seller-favorable choices the set $A$ that maximizes,
say, $\sum_{i\in A}i$ yields a mechanism that is tie favorable and tie
consistent.

\begin{theorem}
\label{th:pm}Let $\mu $ be a deterministic buyer-favorable tie-consistent
mechanism, with a nondecreasing pricing function\footnote{%
Assume for simplicity that all prices are finite; $p$ need not be the
canonical pricing function.} $p:2^{K}\rightarrow \mathbb{R}_{+}$. A
necessary and sufficient condition for $\mu $ to be monotonic is: if%
\begin{equation*}
p(A)>p(B)
\end{equation*}%
for some $A,B\subseteq K$ then there is no $z\in \mathbb{R}^{A\backslash B}$
satisfying%
\begin{equation}
p(A)-p(A\backslash C)\leq \sum_{i\in C}z_{i}<p(B\cup C)-p(B)\text{\ for all }%
\emptyset \neq C\subseteq A\backslash B.  \label{eq:C}
\end{equation}
\end{theorem}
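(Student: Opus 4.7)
The plan is to prove both directions by connecting the failure of monotonicity directly to the algebraic condition on $z$. For the sufficient direction (condition $\Rightarrow$ monotonic) I would prove the contrapositive. Suppose $x \leq y$ with $s(x) > s(y)$, and set $A := q(x)$, $B := q(y)$; then $p(A) > p(B)$. Define $z \in \mathbb{R}^{A \setminus B}$ by $z_i := x_i$. The left inequality $z(C) \geq p(A) - p(A \setminus C)$ for every $C \subseteq A \setminus B$ is just the IC condition at $x$ applied to the deviation from $A$ to $A \setminus C$. For the right inequality, IC at $y$ applied to the deviation from $B$ to $B \cup C$ gives $y(C) \leq p(B \cup C) - p(B)$; buyer-favorability of $q(y) = B$ makes this strict, since for $\emptyset \neq C \subseteq A \setminus B$ we have $C \cap B = \emptyset$ and so $B \cup C \supsetneq B$: were the inequality an equality, $B \cup C$ would be an equally optimal proper superset of $B$, contradicting maximality of $q(y)$. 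Combined with $x \leq y$, this yields $x(C) \leq y(C) < p(B \cup C) - p(B)$, so $z$ satisfies (\ref{eq:C}).

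For the necessary direction, given $A, B$ with $p(A) > p(B)$ and $z$ satisfying (\ref{eq:C}), I would construct $x \leq y$ with $s(x) > s(y)$. First note that the left inequality at singletons gives $z_i \geq 0$; and applying both inequalities at $C = A \setminus B$ yields $p(A) - p(A \cap B) \leq z(A \setminus B) < p(A \cup B) - p(B)$, which rearranges to the strict supermodularity $p(A) + p(B) < p(A \cup B) + p(A \cap B)$, ruling out $A \subseteq B$ and $B \subseteq A$. Hence both $A \setminus B$ and $B \setminus A$ are nonempty. Pick $\delta > 0$ so that $\delta |C| < p(B \cup C) - p(B) - z(C)$ for every $\emptyset \neq C \subseteq A \setminus B$ (possible by the strict right inequality and finiteness), and $M, N > 2 \max_{D \subseteq K} p(D)$. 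Define
\begin{align*}
x_i := \begin{cases} z_i + \delta, & i \in A \setminus B, \\ M, & i \in A \cap B, \\ 0, & i \in K \setminus A, \end{cases}
\qquad
y_i := \begin{cases} z_i + \delta, & i \in A \setminus B, \\ M, & i \in A \cap B, \\ N, & i \in B \setminus A, \\ 0, & i \in K \setminus (A \cup B). \end{cases}
\end{align*}
Then $x, y \in \mathbb{R}_+^k$ and $x \leq y$.

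I would then compute $s(x) = p(A)$ and $s(y) = p(B)$. At $x$, the choice of $M$ forces $A \cap B \subseteq D$ for any optimal $D$. Writing $D = (A \setminus C) \cup S$ with $C \subseteq A \setminus B$ and $S \subseteq K \setminus A$, the gap $(x(A) - p(A)) - (x(D) - p(D))$ equals $z(C) + \delta|C| - p(A) + p(D)$; using $p(D) \geq p(A \setminus C)$ and the left inequality, this is at least $\delta|C|$, strictly positive whenever $C \neq \emptyset$. So every optimum at $x$ satisfies $D \supseteq A$, and among such $D$ optimality reduces to $p(D) = p(A)$; buyer-favorability picks a maximal such $D$, giving $s(x) = p(A)$. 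Analogously, at $y$ the choice of $M, N$ forces $B \subseteq D$; writing $D = B \cup T$ with $T \subseteq K \setminus B$ and $C := T \cap (A \setminus B)$, the gap from $B$ is $p(B \cup T) - p(B) - z(C) - \delta|C|$, which by monotonicity of $p$ is $\geq (p(B \cup C) - p(B) - z(C)) - \delta|C| > 0$ when $C \neq \emptyset$, by the choice of $\delta$. So optimal $D$ at $y$ have $T \subseteq K \setminus (A \cup B)$ and $p(B \cup T) = p(B)$, giving $s(y) = p(B)$. Therefore $s(x) = p(A) > p(B) = s(y)$ with $x \leq y$, contradicting monotonicity.

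The central difficulty is the tie structure in the necessity argument. Without the $\delta$-perturbation, any $A \setminus C$ for which the left inequality is tight would tie with $A$ at $x$, and since $A \setminus C \subsetneq A$ a buyer-favorable maximum could then fail to contain $A$, potentially giving $s(x) < p(A)$. Adding $\delta$ to every coordinate in $A \setminus B$ turns all these left-side ties into strict preferences for the $A$-side, and the strict right inequality is precisely the slack that allows $\delta > 0$ to be chosen small enough without spoiling the analogous argument at $y$ (where any nonempty $C \subseteq A \setminus B$ appearing inside $T$ must still be ruled out).
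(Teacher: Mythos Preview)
Your argument is correct, with one cosmetic slip: the threshold $M,N>2\max_{D}p(D)$ is not quite enough at $y$, since a set $D$ missing a coordinate of $B$ may still pick up $z(A\setminus B)+\delta|A\setminus B|$ from the $A\setminus B$ block; you need something like $M,N>2\max_{D}p(D)+\delta k$ (or simply ``sufficiently large''). With that adjusted, both directions go through as written.

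Your route differs from the paper's in an interesting way. The paper handles both directions at once: starting from $x\leq y$ with $q(x)=A$, $q(y)=B$, it uses \emph{tie-consistency} to push $x$ and $y$ to a canonical form (coordinates $0$, $M$, or $z_i$), and then reads off that the existence of such $x,y$ is \emph{equivalent} to the existence of $z$ satisfying (\ref{eq:C}). You instead treat the two directions separately and, for necessity, introduce the $\delta$-perturbation so that at the constructed $x$ every optimum has payment exactly $p(A)$ (and similarly $p(B)$ at $y$); the conclusion $s(x)>s(y)$ then follows regardless of how ties are broken. The upshot is that your proof never invokes tie-consistency at all---buyer-favorability alone (used only in the sufficiency direction, to make the right inequality strict) already gives the characterization. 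What the paper's approach buys is a cleaner ``canonical form'' picture and the exact identification $q(x)=A$, $q(y)=B$; what yours buys is the observation that the tie-consistency hypothesis is in fact redundant for the theorem as stated.
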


To clarify: such a $z$ would need to satisfy simultaneously \emph{all} $%
2\cdot (2^{\left\vert A\backslash B\right\vert }-1)$ inequalities in (\ref%
{eq:C}).

\begin{proof}
Let $x\in \mathbb{R}_{+}^{k}$, and let $A:=q(x)$ be the set of goods
allocated to $x$. If we increase coordinates in $A$ and decrease coordinates
outside $A$, i.e., if $x^{\prime }\in \mathbb{R}_{+}^{k}$ satisfies $%
x_{i}^{\prime }\geq x_{i}$ for all $i\in A$ and $x_{i}^{\prime }\leq x_{i}$
for all $i\notin A$, then $q(x^{\prime })=A$ as well. This is easily seen
since $[x(A)-p(A)]-[x(B)-p(B)]\geq \lbrack x^{\prime }(A)-p(A)]-[x^{\prime
}(B)-p(B)]$ for every $B$, which implies that, first, $A$ is optimal at $%
x^{\prime }$, and second, any $B$ that is optimal at $x^{\prime }$ is also
optimal at $x$; by tie consistency, $A$ must therefore be chosen at $%
x^{\prime }$ too.

Next, the monotonicity of $\mu $ is equivalent to: if $p(A)>p(B)$ then there
are no $x\leq y$ such that $q(x)=A$ and $q(y)=B$ (because then $%
s(x)=p(A)>p(B)=s(y))$. Applying the observation of the previous paragraph to
both $x$ and $y$, proceed as follows while keeping the inequality $x\leq y$:
decrease $x_{i}$ and $y_{i}$ to $0$ for $i\notin A\cup B$; increase $x_{i}$
and $y_{i}$ to a large number $M$ for $i\in A\cap B$; decrease $x_{i}$ to $0$
and increase $y_{i}$ to $M$ for $i\in B\backslash A$; and, finally, decrease 
$y_{i}$ to $x_{i}$ for $i\in A\backslash B$. All these changes therefore do
not affect the allocation, and so, by letting $z\in \mathbb{R}^{A\backslash
B}$ be the restriction of $x$ to $A\backslash B$, we have without loss of
generality that $q(x)=A$ and $q(y)=B$ for $x$ and $y$ of the form%
\begin{equation*}
\begin{tabular}{c|cccc}
& $A\backslash B$ & $A\cap B$ & $B\backslash A$ & $K\backslash (A\cup B)$ \\ 
\hline\hline
$x$ & $z$ & $(M,...,M)$ & $(0,...,0)$ & $(0,...,0)$ \\ 
$y$ & $z$ & $(M,...,M)$ & $(M,...,M)$ & $(0,...,0)$%
\end{tabular}%
\;.
\end{equation*}

For large $M$, the conditions that $q(x)=A$, namely $x(A)-p(A)\geq
x(A^{\prime })-p(A^{\prime })$ for all $A^{\prime }$, reduce to%
\begin{equation*}
z(C)\geq p(A)-p(A\backslash C)
\end{equation*}%
for all $C\subseteq A\backslash B$ (indeed, only sets $A^{\prime }$ such
that $A\cap B\subseteq A^{\prime }\varsubsetneq A$, i.e., $A^{\prime
}=A\backslash C$, matter). Similarly, the conditions for $q(y)=B$ reduce to%
\begin{equation*}
z(C)<p(B\cup C)-p(B)
\end{equation*}%
for all $C\subseteq A\backslash B$ (indeed, only sets $B^{\prime }$ such
that $B\varsubsetneq B^{\prime }\subseteq B\cup (A\backslash B)$, i.e., $%
B^{\prime }=B\cup C$, matter; the inequalities here are strict since
otherwise $B^{\prime }$ would be chosen at $y$ instead of $B$, by buyer
favorability).
\end{proof}

\bigskip

\noindent \textbf{Remark. }From the proof it follows that it suffices to
consider sets $A,B$ that are in the range of allocations $Q=q(\mathbb{R}%
_{+}^{k})$ of $\mu $, and, moreover, incomparable (i.e., $A\backslash B$ and 
$B\backslash A$ are both nonempty; indeed, if $A\subseteq B$ then\emph{\ }we
cannot have $p(A)>p(B)$, and if $B\varsubsetneqq A$ then condition (\ref%
{eq:C}) for $C=A\backslash B$ becomes $p(A)-p(B)\leq z(A\backslash
B)<p(A)-p(B)$, a contradiction that shows that there can be no such $z$).
This applies to the two corollaries below as well.

\bigskip

The nonexistence of $z$ satisfying (\ref{eq:C}) is a somewhat unwieldly
condition; we obtain from it two simpler conditions, one that is sufficient
and one that is necessary.

\begin{corollary}
\label{c:pm-subm}Let $\mu $ be a deterministic buyer-favorable
tie-consistent mechanism. Then $\mu $ is monotonic if 
\begin{equation}
p(A)+p(B)\geq p(A\cup B)+p(A\cap B)  \label{eq:submod-p}
\end{equation}%
for all $A,B\subseteq K$ with\footnote{%
Since (\ref{eq:submod-p}) is symmetric in $A$ and $B,$ requiring it when $%
p(A)>p(B)$ is the same as requiring it when $p(A)\neq p(B).$} $p(A)\neq p(B)$%
.
\end{corollary}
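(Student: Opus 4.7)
The plan is to prove Corollary \ref{c:pm-subm} by contrapositive, invoking Theorem \ref{th:pm}. Suppose $\mu$ is \emph{not} monotonic. Then Theorem \ref{th:pm} produces sets $A,B \subseteq K$ with $p(A) > p(B)$ and a vector $z \in \mathbb{R}^{A\setminus B}$ satisfying the double inequality (\ref{eq:C}) for every nonempty $C \subseteq A\setminus B$. The goal is to extract from this a violation of the submodularity inequality $p(A)+p(B) \geq p(A\cup B)+p(A\cap B)$ for this specific pair $A,B$ (which is legitimate to test, since $p(A) > p(B)$ implies $p(A) \neq p(B)$).

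The key observation is that $A \setminus B$ is nonempty: if $A \subseteq B$ then monotonicity of $p$ would force $p(A) \leq p(B)$, contrary to $p(A) > p(B)$. Hence one may specialize (\ref{eq:C}) to the single choice $C = A \setminus B$, which is the ``largest'' allowed $C$. Then $A \setminus C = A \cap B$ and $B \cup C = A \cup B$, so (\ref{eq:C}) reads
\begin{equation*}
p(A) - p(A\cap B) \;\leq\; \sum_{i \in A\setminus B} z_i \;<\; p(A\cup B) - p(B).
\end{equation*}
Dropping the middle term yields $p(A) - p(A\cap B) < p(A\cup B) - p(B)$, i.e.,
\begin{equation*}
p(A)+p(B) \;<\; p(A\cup B)+p(A\cap B),
\end{equation*}
which directly contradicts the hypothesized submodularity inequality for the pair $A,B$. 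This is the desired contradiction.

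There is essentially no obstacle here; the only subtle point is noticing that one does not need the full force of (\ref{eq:C}) over all nonempty $C$ but only the single instance $C = A \setminus B$, which collapses the two sides of (\ref{eq:C}) into exactly the submodularity inequality (with strict direction reversed). The proof is therefore a one-line deduction from Theorem \ref{th:pm}, and no tie-breaking or buyer-favorability considerations beyond those already absorbed into that theorem are needed.
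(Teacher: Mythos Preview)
Your proof is correct and is essentially identical to the paper's: both specialize the condition (\ref{eq:C}) to $C = A\setminus B$, so that $A\setminus C = A\cap B$ and $B\cup C = A\cup B$, and observe that the resulting chain $p(A)-p(A\cap B)\leq z(A\setminus B)<p(A\cup B)-p(B)$ contradicts (\ref{eq:submod-p}). The paper phrases this directly (no $z$ can exist) while you use the contrapositive, but the content is the same.
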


This is a \emph{restricted} version of submodularity, where inequality (\ref%
{eq:submod-p}) is \emph{not} required when $p(A)=p(B).$

\begin{proof}
If $p(A)>p(B)$ then there can be no $z$ satisfying (\ref{eq:C}), because for 
$C=A\backslash B$ it would require that $p(A)-p(A\cap B)\leq z(A\backslash
B)<p(A\cup B)-p(B)$, which contradicts (\ref{eq:submod-p}).
\end{proof}

\bigskip

\begin{corollary}
\label{c:pm-necessary}Let $\mu $ be a deterministic buyer-favorable
tie-consistent mechanism. If $\mu $ is monotonic then%
\begin{equation*}
p(A\cup \{i\})+p(A\cup J)\geq p(A\cup J\cup \{i\})+p(A)
\end{equation*}%
for all pairwise disjoint sets $A,\{i\},J\subseteq K$ for which%
\begin{equation*}
p(A\cup \{i\})>p(A\cup J).
\end{equation*}
\end{corollary}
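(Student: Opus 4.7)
The plan is to derive this as a direct specialization of Theorem \ref{th:pm}, exploiting the fact that when the ``new'' side $A'\setminus B'$ is a singleton, the family of $2\cdot(2^{|A'\setminus B'|}-1) = 2$ inequalities in (\ref{eq:C}) collapses to a single pair of bounds on a scalar, whose simultaneous satisfiability is trivially equivalent to one numerical inequality.

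Concretely, I would apply Theorem \ref{th:pm} with the roles of the sets in that theorem taken by $A' := A\cup\{i\}$ and $B' := A\cup J$. Since $A$, $\{i\}$, and $J$ are pairwise disjoint, we have $A'\setminus B' = \{i\}$, $B'\setminus A' = J$, $A'\cap B' = A$, and $A'\cup B' = A\cup J\cup\{i\}$. The hypothesis $p(A\cup\{i\}) > p(A\cup J)$ is exactly $p(A') > p(B')$, so Theorem \ref{th:pm} applies and tells us that there is no $z \in \mathbb{R}^{\{i\}}$ (i.e., no real number $z_i$) satisfying (\ref{eq:C}) for every nonempty $C\subseteq A'\setminus B'$.

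The only nonempty subset $C$ of the singleton $\{i\}$ is $C=\{i\}$ itself, so the entire system (\ref{eq:C}) reduces to the single two-sided condition
\begin{equation*}
p(A\cup\{i\}) - p(A) \;\le\; z_i \;<\; p(A\cup J\cup\{i\}) - p(A\cup J).
\end{equation*}
The set of $z_i\in\mathbb{R}$ satisfying this is empty if and only if the lower bound is at least the upper bound, i.e.\
\begin{equation*}
p(A\cup\{i\}) - p(A) \;\ge\; p(A\cup J\cup\{i\}) - p(A\cup J),
\end{equation*}
which, after rearrangement, is the asserted inequality
\begin{equation*}
p(A\cup\{i\}) + p(A\cup J) \;\ge\; p(A\cup J\cup\{i\}) + p(A).
\end{equation*}

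There is essentially no obstacle: the only thing to be careful about is the direction of the strict inequality in (\ref{eq:C}), which is exactly what converts ``nonexistence of $z_i$'' into a weak ($\ge$) rather than strict ($>$) inequality in the conclusion, matching the statement of the corollary. No further appeal to buyer favorability, tie consistency, or monotonicity of $p$ is required beyond what is already invoked in Theorem \ref{th:pm}.
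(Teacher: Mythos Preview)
Your proof is correct and follows essentially the same approach as the paper: specialize Theorem \ref{th:pm} to the case where the difference set is a singleton, so that (\ref{eq:C}) collapses to a single two-sided bound on a scalar whose infeasibility is precisely the desired inequality. The only cosmetic difference is that you relabel up front (defining $A'=A\cup\{i\}$, $B'=A\cup J$), whereas the paper works in the notation of Theorem \ref{th:pm} and relabels at the end by setting $J:=B\setminus A$ and replacing $A\cap B$ by $A$.
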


\begin{proof}
When $A\backslash B$ is a singleton, say $A\backslash B=\{i\}$, condition (%
\ref{eq:C}) becomes $p(A)-p(A\backslash \{i\})\leq z_{i}<p(B\cup \{i\})-p(B)$%
, and so the nonexistence of $z$ is equivalent to $p(A)-p(A\backslash
\{i\})\geq p(B\cup \{i\})-p(B)$. Put $J:=B\backslash A$ and replace $A\cap B$
by $A.$
\end{proof}

\bigskip

For disjoint sets $A,I\subset K$, define the \emph{discrete derivative} of $%
p $ at $A$ in the \emph{direction} $I$ by%
\begin{equation*}
p^{\prime }(A;I):=p(A\cup I)-p(A);
\end{equation*}%
this is the \emph{marginal price} of the set of goods $I$, i.e., the change
in price due to adding $I$. With this notation, the condition of Corollary %
\ref{c:pm-necessary} becomes: 
\begin{equation*}
\text{if\ \ }p^{\prime }(A;\{i\})>p^{\prime }(A;J)\text{\ \ then\ \ }%
p^{\prime }(A;\{i\})\geq p^{\prime }(A\cup J;\{i\}),
\end{equation*}%
which is a \textquotedblleft decreasing marginal price" condition.
Similarly, putting $I:=A\backslash B$ and $J:=B\backslash A$, and replacing $%
A\cap B$ by $A$, the condition of Corollary \ref{c:pm-subm} becomes:%
\begin{equation}
\text{if\ \ }p^{\prime }(A;I)>p^{\prime }(A;J)\text{\ \ then\ \ }p^{\prime
}(A;I)\geq p^{\prime }(A\cup J;I).  \label{eq:AIJ}
\end{equation}%
Thus, a sufficient condition for monotonicity is \textquotedblleft (\ref%
{eq:AIJ}) for all pairwise disjoint $A,I,J,$" while a necessary condition
for monotonicity is \textquotedblleft (\ref{eq:AIJ}) for all pairwise
disjoint $A,\{i\},J$" (i.e., singleton sets $I)$.

Additional conditions for monotonicity are provided in Appendix \ref%
{s-a:det-mon}.

\section{Open Problems}

We list a number of issues that remain open, specifying in each case the
best that we know.

\begin{enumerate}
\item The main open problem is to find useful characterizations of the
monotonicity of mechanisms; see Section \ref{s:mon det} and Appendix \ref%
{s-a:det-mon} for the deterministic case, and the last paragraph in Appendix %
\ref{s-a:s-fav monot}. This should also help to address some of the other
open problems below.

\item The bound on the ratio between \textsc{MonRev} and \textsc{SRev}: it
is at most $k$ by Theorem \ref{th:monrev}, and at least $\Omega (\log k)$,
obtained by \textsc{BRev} (see (\ref{eq:brev=logk*srev})) and Example \ref%
{ex:harmonic}.

\item The bound on the ratio between \textsc{MonRev} and \textsc{AMonRev}:
it is at most $k$, as implied by Theorem \ref{th:monrev} (because \textsc{%
SRev}$\,\leq \,$\textsc{AMonRev}), and at least $\Omega (\log k)$ by Example %
\ref{ex:harmonic} (see Corollary \ref{c:amon vs mon}).

\item The bound on the ratio between \textsc{SymDRev} and \textsc{SRev}: it
is at most $O(\log ^{2}k)$ by Theorem \ref{th:symdrev}, and at least $\Omega
(\log k)$, obtained by \textsc{BRev} (see (\ref{eq:brev=logk*srev})).

\item The bound on the ratio between \textsc{MonRev} and \textsc{MonDRev},
the revenue from monotonic deterministic mechanisms: it is at most $k$, as
implied by Theorem \ref{th:monrev} (because \textsc{SRev}$\,\leq \,$\textsc{%
MonDRev}), and, of course, at least $1.$
\end{enumerate}

\appendix

\section{Appendices}

\subsection{Appendix: The Canonical Pricing Function\label{sus-a:canonical p}%
}

The canonical pricing function is nondecreasing, convex, and closed (see
Section \ref{susus:canonical p}); we show here that it is the \emph{unique }%
such pricing function. This generalizes the result of Proposition \ref%
{p:p0-det} in the deterministic case (where only the property of being
nondecreasing matters). Recall that $%
p_{\scriptscriptstyle Q}%
:Q\rightarrow \mathbb{R}_{+}$ denotes the common restriction of all pricing
functions to $Q.$

\begin{proposition}
\label{p:p0}Let $\mu \ $be a mechanism, with range of allocations $%
Q\subseteq \lbrack 0,1]^{k}.$

\begin{description}
\item[(i)] The canonical pricing function $p_{0}$ of $\mu $ is the unique
pricing function of $\mu $ that is nondecreasing, convex, and closed, and is
defined as follows: for every $g\in \lbrack 0,1]^{k}$, let%
\begin{equation}
p_{1}(g)%
{\;:=\;}%
\inf \sum_{j=1}^{J}\lambda ^{j}%
\pQ%
(g^{j}),  \label{eq:p2}
\end{equation}%
where the infimum is taken over all convex combinations of elements of $Q$
that are no less than $g$---i.e., $\sum_{j=1}^{J}\lambda ^{j}g^{j}\geq g$,
where $\sum_{j=1}^{J}\lambda ^{j}=1$, $\lambda ^{j}\geq 0$ and $g^{j}\in Q$
for every $j=1,...,J$; then%
\begin{equation*}
p_{0}(g)=(\mathrm{cl\,}p_{1})(g):=\underline{\lim }_{h\rightarrow g}p_{1}(h).
\end{equation*}

\item[(ii)] If the set $Q$ is closed then the canonical pricing function $%
p_{0}$ of $\mu $ is the unique pricing function of $\mu $ that is
nondecreasing and convex; it is the function $p_{1}$ given by (\ref{eq:p2})
(i.e., in this case $p_{1}$ is closed, and $p_{0}=\mathrm{cl\,}p_{1}=p_{1}).$
\end{description}
\end{proposition}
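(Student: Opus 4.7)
The plan is to establish, in order, that $p_0 \leq p_1$ by a direct convexity--monotonicity calculation, that $\mathrm{cl}\,p_1$ is a closed convex nondecreasing pricing function sandwiching any other candidate between $p_0$ and itself, and finally that $\mathrm{cl}\,p_1 = p_0$ via a Fenchel--Moreau argument, which delivers the uniqueness asserted in (i). Part (ii) then follows by showing that closedness of $Q$ forces $p_1$ itself to be lower semicontinuous, so that $p_1 = \mathrm{cl}\,p_1 = p_0$.

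First I would record that $p_0$ is closed, convex, and nondecreasing directly from its definition as a supremum of the affine functions $g \mapsto g\cdot x - b(x)$ (see Section \ref{susus:canonical p}), and that $p_0|_Q = \pQ$. Then, for any admissible combination $\sum_j \lambda^j g^j \geq g$ with $g^j \in Q$, applying monotonicity and convexity of $p_0$ gives
\begin{equation*}
p_0(g)\;\leq\;p_0\!\left(\sum_j \lambda^j g^j\right)\;\leq\;\sum_j \lambda^j p_0(g^j)\;=\;\sum_j \lambda^j \pQ(g^j),
\end{equation*}
so $p_0 \leq p_1$ and hence $p_0 = \mathrm{cl}\,p_0 \leq \mathrm{cl}\,p_1$. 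The same chain applied to any nondecreasing convex pricing function $p$ in place of $p_0$ yields $p \leq p_1$. A brief verification---merging admissible combinations for $g$ and $h$ with weights $\lambda$ and $1-\lambda$ yields an admissible combination for $\lambda g + (1-\lambda) h$, and any admissible combination for $g$ remains admissible for $g' \leq g$---shows that $p_1$ is itself convex and nondecreasing; the trivial single-term combination on $Q$ together with $p_0 \leq p_1$ forces $p_1|_Q = \pQ$. Consequently $\mathrm{cl}\,p_1$ lies between $p_0$ and $p_1$, coincides with $\pQ$ on $Q$, and is therefore a closed convex nondecreasing pricing function of $\mu$.

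The hard part is the reverse inequality $\mathrm{cl}\,p_1 \leq p_0$, from which uniqueness follows at once because every closed convex nondecreasing pricing function $p$ is trapped in $p_0 \leq p \leq \mathrm{cl}\,p_1 = p_0$. I would obtain this equality by Fenchel--Moreau. Extending a closed convex nondecreasing pricing function $p$ by $+\infty$ off $[0,1]^k$ turns it into a proper closed convex function $\tilde p$ on $\mathbb{R}^k$; the Fenchel conjugate $\tilde p^*$ agrees with $b$ on $\mathbb{R}_+^k$ by the defining identity (\ref{eq:b=p0*}), and the monotonicity of $p$ forces the value at any $x$ having a negative coordinate $x_i$ to reduce to the positive-part value $b(x^+)$ (one may push $g_i$ to $0$ without loss). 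Computing the biconjugate $\tilde p^{**}$ directly on $[0,1]^k$ then gives back exactly $p_0$, and Fenchel--Moreau forces $\tilde p = \tilde p^{**}$, i.e., $p = p_0$. The main technical care is in the domain extensions, which are standard once the boundary behavior dictated by monotonicity is accounted for.

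For part (ii), when $Q$ is closed it is compact in $[0,1]^k$, and I would show that $p_1$ itself is lower semicontinuous. Given $g_n \to g$, by Carathéodory the infimum in (\ref{eq:p2}) can be restricted to convex combinations with at most $k+2$ atoms in $Q$, and $\mathcal{P}(Q)$ is weakly compact; since $\pQ = p_0|_Q$ is lsc, the functional $\mu \mapsto \bar p_\mu := \int \pQ\,\mathrm{d}\mu$ is weakly lsc, so minimizers $\mu_n$ with $\bar g_{\mu_n} := \int g'\,\mathrm{d}\mu_n \geq g_n$ and $\bar p_{\mu_n} = p_1(g_n)$ exist. A weakly convergent subsequence $\mu_n \to \mu$ then satisfies $\bar g_\mu \geq g$ and $\bar p_\mu \leq \liminf_n p_1(g_n)$, hence $p_1(g) \leq \liminf_n p_1(g_n)$, so $p_1$ is closed and $p_1 = \mathrm{cl}\,p_1 = p_0$. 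The sandwich $p_0 \leq p \leq p_1 = p_0$ then yields uniqueness in the broader class of nondecreasing convex pricing functions, without the closedness hypothesis on $p$.
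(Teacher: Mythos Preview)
Your argument is correct and lands in the same place as the paper's, but the packaging differs. The paper obtains $p_0=\mathrm{cl}\,p_1$ in one stroke by writing $b=\sup_{g\in Q}\phi_g$ with $\phi_g(x)=g\cdot x-\pQ(g)$ (and $+\infty$ for $x\not\geq 0$), invoking Rockafellar's Theorem~16.5 to get $p_0=b^*=\mathrm{cl}\bigl(\mathrm{conv}\{\phi_g^*:g\in Q\}\bigr)$, and then checking that $\mathrm{conv}\{\phi_g^*\}$ is exactly $p_1$; uniqueness then comes from the same sandwich $p_0\leq p\leq \mathrm{cl}\,p_1=p_0$ that you use. You instead first verify that $\mathrm{cl}\,p_1$ is itself a closed, convex, nondecreasing pricing function, and then run Fenchel--Moreau on an arbitrary such $p$: extend by $+\infty$ off $[0,1]^k$, compute $\tilde p^*=b(\cdot^+)$ using monotonicity, compute $\tilde p^{**}=p_0$, and conclude $p=p_0$. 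This is effectively a hands-on re-derivation of the instance of Theorem~16.5 that the paper quotes; what you gain is self-containment and a direct proof of uniqueness that does not pass through the explicit formula, while the paper's route is shorter because the structural work is outsourced to Rockafellar.

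For part~(ii) your compactness/lower-semicontinuity argument via Carath\'eodory matches the paper's almost verbatim; the only cosmetic difference is that you phrase it with probability measures and weak convergence, whereas the paper works with finite convex combinations of at most $k+2$ points and coordinatewise limits. One small wording point: you assert that minimizers $\mu_n$ attaining $p_1(g_n)$ exist---this is true by the very compactness/lsc you are about to use, but you may as well take $1/n$-near-minimizers, which is all the argument needs and avoids the forward reference.
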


\begin{proof}
(i) For each $g\in \lbrack 0,1]^{k}$ let%
\begin{equation*}
\phi _{g}(x)%
{\;:=\;}%
\left\{ 
\renewcommand{\arraystretch}{0.5}%
\begin{tabular}{lll}
$g\cdot x-%
\pQ%
(g),$ &  & if $x\geq 0,$ \\ 
&  &  \\ 
$\infty ,$ &  & otherwise.%
\end{tabular}%
\renewcommand{\arraystretch}{1}%
\right.
\end{equation*}%
Then $\phi _{g}$ is a convex function, and $b=\sup \{\phi _{g}:g\in Q\}$.
The Fenchel conjugate $\phi _{g}^{\ast }$ of $\phi _{g}$ is%
\begin{eqnarray*}
\phi _{g}^{\ast }(h) &=&\sup_{x}(h\cdot x-\phi _{g}(x))=\sup_{x\geq
0}(h\cdot x-g\cdot x+%
\pQ%
(g)) \\
&=&\left\{ 
\renewcommand{\arraystretch}{0.5}%
\begin{tabular}{lll}
$%
\pQ%
(g),$ &  & if $h\leq g,$ \\ 
&  &  \\ 
$\infty ,$ &  & otherwise.%
\end{tabular}%
\renewcommand{\arraystretch}{1}%
\right.
\end{eqnarray*}%
By Theorem 16.5 in Rockafellar (1970), the Fenchel conjugate $p_{0}$ of $b$
equals$\linebreak \mathrm{cl}\left( \mathrm{conv}\{\phi _{g}^{\ast }:g\in
Q\}\right) $. By definition, $\mathrm{conv}\{\phi _{g}^{\ast }:g\in Q\}(h)$
is the infimum of all convex combinations $\alpha =\sum_{j}\lambda ^{j}\phi
_{g^{j}}^{\ast }(h^{j})$, where $\sum_{j}\lambda ^{j}h^{j}=h$ and $g^{j}\in
Q $ for all $j$. Now the expression $\alpha $ is finite only if $h^{j}\leq
g^{j}$ (otherwise $\phi _{g^{j}}^{\ast }(h^{j})$ is infinite) for all $j$,
in which case we get $\alpha =\sum_{j}\lambda ^{j}%
\pQ%
(g^{j})$ and $\sum_{j}\lambda ^{j}g^{j}\geq \sum_{j}\lambda ^{j}h^{j}=h$;
conversely, given $\sum_{j}\lambda ^{j}g^{j}\geq h$ we can always find $%
h^{j}\leq g^{j}$ for all $j$ such that $\sum_{j}\lambda ^{j}h^{j}=h$.
Therefore $\mathrm{conv}\{\phi _{g}^{\ast }:g\in Q\}(h)$ is the infimum of $%
\sum_{j}\lambda ^{j}%
\pQ%
(g^{j})$ over all convex combinations $\sum_{j}\lambda ^{j}g^{j}\geq h$ with 
$g^{j}\in Q$ for all $j$, which is precisely $p_{1}(h)$, and so $p_{0}=%
\mathrm{cl\,}\left( \mathrm{conv}\{\phi _{g}^{\ast }:g\in Q\}\right) =%
\mathrm{cl\,}p_{1}.$

Uniqueness: let $p$ be any nondecreasing, convex, and closed pricing
function of $\mu $; since $p$ coincides with $%
\pQ%
$ on $Q$, it follows that $p\leq p_{1}$ (because $p$ is nondecreasing and
convex), and thus $p\leq \mathrm{cl}\,p_{1}=p_{0}$ (because $p$ is closed).
Now $p\geq p_{0}$ (because $p_{0}$ is the minimal pricing function of $\mu )$%
, and so $p=p_{0}.$

(ii) When $Q\subseteq \lbrack 0,1]^{k}$ is a closed set (and thus compact)
the function $p_{1}$ is already closed. Indeed, let $g_{(n)}\rightarrow g$,
and for each $n$ take a convex combination $\sum_{j=1}^{J}\lambda
_{(n)}^{j}g_{(n)}^{j}\geq g_{(n)}$ (i.e., $\lambda _{(n)}^{j}\geq 0$ and $%
\sum_{j=1}^{J}\lambda _{(n)}^{j}=1)$ with $\sum_{j=1}^{J}\lambda _{(n)}^{j}%
\pQ%
(g_{(n)}^{j})\leq p_{1}(g_{(n)})+1/n$, where all $g_{(n)}^{j}\in Q$ and $%
J\leq k+2$ (by Caratheodory's theorem, since $(h,%
\pQ%
(h))\in \mathbb{R}^{k+1})$. The compactness of $Q$ implies that there is a
subsequence, without loss of generality the original sequence, such that $%
\lambda _{(n)}^{j}\rightarrow \lambda ^{j}$ and $g_{(n)}^{j}\rightarrow
g^{j} $ for every $j$. Since $Q$ is closed and $g_{(n)}^{j}\in Q$ we have $%
g^{j}\in Q$, and so $\underline{\lim }_{n}%
\pQ%
(g_{(n)}^{j})\geq 
\pQ%
(g^{j})$ (because $%
\pQ%
=p_{0}$ on $Q$ and $p_{0}$ is lower semicontinuous); therefore in the limit
we get a convex combination $\sum_{j=1}^{J}\lambda ^{j}g_{j}\geq g$ (i.e., $%
\lambda ^{j}\geq 0$ and $\sum_{j=1}^{J}\lambda ^{j}=1)$ with%
\begin{equation*}
\sum_{j=1}^{J}\lambda ^{j}%
\pQ%
(g^{j})\leq \underline{\lim }_{n}\sum_{j=1}^{J}\lambda _{(n)}^{j}%
\pQ%
(g_{(n)}^{j})\leq \underline{\lim }_{n}p_{1}(g_{(n)}).
\end{equation*}%
By the definition of $p_{1}(g)$ we thus have $p_{1}(g)\leq $ $%
\sum_{j=1}^{J}\lambda ^{j}%
\pQ%
(g^{_{j}})$, and so $p_{1}(g)\leq \underline{\lim }_{n}p_{1}(g_{(n)})$, as
claimed.
\end{proof}

\bigskip

A slightly different way of viewing this characterization of $p_{0}$ is
similar to the notion of the \textquotedblleft convexification" of a
function $f$, which is the maximal convex function that is $\leq f$ (such a
function always exists, as the supremum of all such functions is a convex
function $\leq f$). Proposition \ref{p:p0} (i) says that $p_{0}$ is the 
\emph{nondecreasing closed convexification} \emph{of} $%
\pQ%
$, i.e., the maximal function\footnote{%
Maximal among \emph{all} functions, not only pricing functions.} that is
nondecreasing, closed, convex, and $\leq 
\pQ%
$ on $Q$; again, this is the supremum of all such functions (because the
properties are preserved when we take the supremum). Since $%
\pQ%
$ satisfies these properties on $Q$ (beacuse it is the restriction of $p_{0}$
to $Q$), we may also characterize $p_{0}$ as the \emph{maximal extension of} 
$%
\pQ%
$ \emph{that is nondecreasing, convex, and closed}. The requirement reduces
to being a nondecreasing and convex function when $Q$ is a closed set (see
(ii)), and to being a nondecreasing function when we consider deterministic
pricing of a deterministic mechanism (see Proposition \ref{p:p0-det}).

\bigskip

The following examples show that both convexification and closure are indeed
needed.

\textbf{Example. }The need of \emph{convexification}. Let $k=2$, and
consider the separate selling of each good at price $1$. Thus, $%
Q=\{0,1\}^{2} $, and $%
\pQ%
(0,0)=0$, $%
\pQ%
(1,0)=%
\pQ%
(0,1)=1$, $%
\pQ%
(1,1)=2$. Let $g=(1/2,1/2)$; then the minimal nondecreasing pricing function 
$p_{2}$ satisfies $p_{2}(g)=2$ (since the only element of $Q$ that is $\geq
g $ is $(1,1))$. However, $p_{0}(g)=p_{1}(g)=1$ (use $g=1/2(1,0)+1/2(0,1)$).

\bigskip

\textbf{Example}. The need of \emph{closure}. Let $k=1$, and let $\mu =(q,s)$
be the mechanism whose menu consists of paying $2-2\delta $ for an
allocation of $1-\delta ^{2}$ of the good, for all $\delta \in (0,1]$. Then
it is straightforward to verify that every valuation $x\leq 1$ chooses $%
\delta =1$ (i.e., $(q(x),s(x))=(0,0))$, and every valuation $x\geq 1$
chooses $\delta =1/x$ (i.e., $(q(x),s(x))=(1-1/x^{2},2-2/x))$. The buyer
payoff function is thus $b(x)=0$ for $x\leq 1$ and $b(x)=x+1/x-2$ for $x\geq
1$; note that the limit option of paying $2$ for an allocation of $1$ (i.e., 
$\delta =0)$ is therefore never optimal: $x-2<b(x)$. Thus, $Q=[0,1)$ and $%
\pQ%
(g)=2-2\sqrt{1-g}$ for every $g\in \lbrack 0,1)$, which implies that $%
p_{0}(1)=\lim_{g\rightarrow 1}%
\pQ%
(g)=2$ (the inequality $\geq $ is because $p_{0}$ is nondecreasing, and the
inequality $\leq $ is because $p_{0}$ is closed). However, $p_{1}(1)=\infty $%
, because there is no $g^{\prime }\in \mathrm{conv\,}Q$ such that $g^{\prime
}\geq 1$.

To avoid infinite prices, we embed this into a two-good example: let $k=2$,
and let the menu consist of $((1-\delta ^{2},0),2-2\delta )$ for all $\delta
\in (0,1]$, together with $((1,1),3)$ (and so there is no need to include
the limit option of $((1,0),2)$, since for every $x\in \mathbb{R}_{+}^{2}$
with $x_{1}\geq 2$ the option $((1-1/x_{1}^{2},0),2-2/x_{1})$ yields a
strictly better payoff: $x_{1}+1/x_{1}-2>x_{1}-2$). Thus $Q=([0,1)\times
\{0\})\cup \{(1,1)\}$ and $%
\pQ%
(g_{1},0)=2-2\sqrt{1-g_{1}}$ for all $g_{1}\in \lbrack 0,1)$, and $%
\pQ%
(1,1)=3$, which yields $p_{0}(1,0)=2$ (because $p_{0}$ is nondecreasing and
closed), whereas $p_{1}(1,0)=3$ (because the only element of $\mathrm{conv}%
\,Q$ that is $\geq (1,0)$ is $(1,1)).$

\subsection{Appendix: Pricing Approximation\label{sus-a:chawla}}

We provide here the result of Chawla, Teng, and Tzamos (2022) on comparing
revenues by pricing functions, which we use in Sections \ref{sus:a-mon rev}
and \ref{sus:sym det general}, and Appendix \ref{sus-a:det-super}. Let $%
\mathcal{P}$ and $\mathcal{P}^{\prime }$ be two classes of $k$-good pricing
functions, defined on $G=\{0,1\}^{k}$ (in the deterministic case) or on $%
G=[0,1]^{k}$ (in the general case); the pricing functions need not be
canonical. The maximal revenues that are obtainable by mechanisms with
pricing functions in $\mathcal{P}$ and $\mathcal{P}^{\prime }$ are denoted
by $\mathcal{P}$-\textsc{Rev} and $\mathcal{P}^{\prime }$-\textsc{Rev},
repsectively. A set $Z$ is a \emph{cone} if $z\in Z$ implies $\alpha z\in Z$
for every scalar $\alpha \geq 0$.

\begin{theorem}[Chawla, Teng, and Tzamos]
\label{th:chawla}Let $\mathcal{P}^{\prime }$ be a cone of nondecreasing and
closed $k$-good pricing functions. Assume that there are constants $%
0<c_{1}<c_{2}<\infty $ such that for every $p\in \mathcal{P}$ there is $%
p^{\prime }\in \mathcal{P}^{\prime }$ satisfying%
\begin{equation}
c_{1}p^{\prime }(g)\leq p(g)\leq c_{2}p^{\prime }(g)  \label{eq:p-p'}
\end{equation}%
for every $g$ in $G$; then%
\begin{equation*}
\mathcal{P}\text{-\textsc{Rev}}(X)\leq 2\ln \left( 2\,\frac{c_{2}}{c_{1}}%
\right) \cdot \mathcal{P}^{\prime }\text{-\textsc{Rev}}(X)
\end{equation*}%
for every $k$-good random valuation $X$.
\end{theorem}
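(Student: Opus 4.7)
The strategy is: for each candidate $p \in \mathcal{P}$, replace the mechanism by a random scaling $\alpha p'$ of its companion $p' \in \mathcal{P}'$ and show that a log-uniform average of $\alpha$ over an interval of log-length $\ln(2c_2/c_1)$ already captures an $\Omega(1/\log(c_2/c_1))$ share of the original revenue, so some deterministic $\alpha^\ast$ does too. Fix a mechanism $\mu$ with pricing function $p\in\mathcal{P}$, buyer payoff $b$ and payment $s$, and pick $p'\in\mathcal{P}'$ with $c_1 p'\leq p\leq c_2 p'$ on $G$. By the cone hypothesis, $\alpha p'\in\mathcal{P}'$ for every $\alpha>0$, so the scaled pricing functions are admissible competitors.

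The pointwise analysis rests on a standard envelope identity. Set $b_\alpha(x):=\sup_{g\in G}(g\cdot x-\alpha p'(g))$. As a supremum of affine-in-$\alpha$ functions of nonpositive slope $-p'(g)$, the map $\alpha\mapsto b_\alpha(x)$ is convex, nonincreasing, and absolutely continuous, with $db_\alpha(x)/d\alpha=-p'(g^{\ast\ast}(\alpha,x))$ almost everywhere, where $g^{\ast\ast}(\alpha,x)$ is an optimal allocation under the scaled mechanism. The revenue of $\alpha p'$ at $x$ is $\alpha p'(g^{\ast\ast}(\alpha,x))=-\alpha\cdot db_\alpha(x)/d\alpha$, so
\[
\int_{c_1/2}^{c_2}\alpha p'(g^{\ast\ast}(\alpha,x))\,\frac{d\alpha}{\alpha}
\;=\;-\int_{c_1/2}^{c_2}\frac{db_\alpha(x)}{d\alpha}\,d\alpha
\;=\;b_{c_1/2}(x)-b_{c_2}(x).
\]

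The two endpoint estimates come directly from the sandwich $c_1p'\leq p\leq c_2 p'$. At the upper endpoint, $g\cdot x-c_2p'(g)\leq g\cdot x-p(g)\leq b(x)$ for every $g$, so $b_{c_2}(x)\leq b(x)$. At the lower endpoint, evaluating only at the buyer's choice $g^\ast(x)$ under $p$ gives
\[
b_{c_1/2}(x)\;\geq\;g^\ast(x)\cdot x-\tfrac{c_1}{2}\,p'(g^\ast(x))
\;\geq\;b(x)+\tfrac{1}{2}\,p(g^\ast(x))
\;=\;b(x)+\tfrac{1}{2}\,s(x),
\]
using $\tfrac{c_1}{2}p'(g^\ast)\leq\tfrac{1}{2}p(g^\ast)$. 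Subtracting yields $b_{c_1/2}(x)-b_{c_2}(x)\geq s(x)/2$, and dividing by the log-length $\ln(2c_2/c_1)$ of the interval shows that under the log-uniform distribution on $\alpha\in[c_1/2,c_2]$ the expected payment at $x$ is at least $s(x)/(2\ln(2c_2/c_1))$.

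Taking $\mathbb{E}_X$ and exchanging the order of integration (Fubini is justified by nonnegativity) gives $\mathbb{E}_\alpha\,R(\alpha p';X)\geq R(\mu;X)/(2\ln(2c_2/c_1))$, so some $\alpha^\ast\in[c_1/2,c_2]$ attains at least that revenue; since $\alpha^\ast p'\in\mathcal{P}'$, taking the supremum over $p\in\mathcal{P}$ proves the claim. The main technical obstacle is legitimizing the envelope identity over the whole integral without uniqueness of the buyer's choice: convexity of $b_\alpha(x)$ in $\alpha$ reduces this to a.e.\ differentiability, and a measurable selection $g^{\ast\ast}(\alpha,x)$ realizing the envelope can be chosen by standard measurable-maximum arguments, which also yields the joint $(\alpha,x)$-measurability needed to apply Fubini.
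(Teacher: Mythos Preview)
Your proof is correct and is precisely the Chawla--Teng--Tzamos random-scaling argument that the paper invokes. Note that the paper does not supply its own proof of this theorem: it states the result and cites Chawla, Teng, and Tzamos (2022), adding only the remark that the extra hypotheses on $\mathcal{P}'$ (nondecreasing, closed) are what legitimize the envelope-theorem step and guarantee that the buyer's optimum under $\alpha p'$ is attained---exactly the technical points you isolate in your final paragraph.
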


We have added the technical assumptions that the pricing functions are
nondecreasing and closed (conditions that are always satisfied by canonical
pricing functions, and also by the separable functions that we generate from
them). These assumptions are used to fill in some missing details in the
proof of Lemma 3.1 of Chawla, Teng, and Tzamos (2022). Specifically, they
yield (in their notation) the existence of a maximal $\lambda _{\alpha }$,
and allow the application of the envelope theorem (Milgrom and Segal 2002,
Theorem 2): the function $u_{v}$ is Lipschitz, and hence absolutely
continuous.

\subsection{Appendix: Monotonic Mechanisms\label{s-a:s-fav monot}}

In this appendix we show, first, that if a mechanism is monotonic, then so
are its seller-favorable, and thus tie-favorable, versions.

Let $\mu =(q,s)$ be a mechanism, and let $\tilde{\mu}=(\tilde{q},\tilde{s})$
be a seller-favorable version of $\mu $; thus, $\mu $ and $\tilde{\mu}$ have
the same buyer payoff function $b$, and $\tilde{s}(x)=b^{\prime }(x;x)-b(x)$
for every $x$. Denote by $\mathcal{D}\equiv \mathcal{D}_{b}\subseteq \mathbb{%
R}_{++}^{k}$ the set of points $x$ in the interior of $\mathbb{R}_{+}^{k}$
where the convex function $b$ is differentiable, i.e., $\partial
b(x)=\{\nabla b(x)\}$ (where $\nabla b(x)$ denotes the gradient of $b$ at $x$%
); the set $\mathcal{D}$ is dense in $\mathbb{R}_{++}^{k}$, and hence in $%
\mathbb{R}_{+}^{k}$, and its complement $\mathbb{R}_{+}^{k}\backslash 
\mathcal{D}$ has Lebesgue measure zero (by Theorem 25.5 in Rockafellar
1970). The mechanisms $\mu $ and $\tilde{\mu}$ may differ only on $\mathbb{R}%
_{+}^{k}\backslash \mathcal{D}$: at every $x\in \mathcal{D}$ we have $q(x)=%
\tilde{q}(x)=\nabla b(x)$ and $s(x)=\tilde{s}(x)=b^{\prime
}(x;x)-b(x)=\nabla b(x)\cdot x-b(x).$

\begin{proposition}
\label{p:s-fav monot}Let $\mu $ be a mechanism with buyer payoff function $b$%
, and let $\tilde{\mu}$ be a seller-favorable mechanism with the same buyer
payoff function $b$. If $\mu $ is monotonic then $\tilde{\mu}$ is monotonic.
\end{proposition}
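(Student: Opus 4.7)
The plan is to transfer monotonicity from $s$ to $\tilde s$ by approximating each valuation along its ray through the origin, where the two payment functions provably coincide at all but countably many scalings. For $x \in \mathbb{R}_+^k$, consider the one-dimensional convex function $\beta_x(t):= b(tx)$ on $[0,\infty)$, and let $T_x := \{t>0 : \beta_x \text{ is differentiable at }t\}$; its complement in $(0,\infty)$ is at most countable. The key observation is that for every $t\in T_x$ one has $s(tx)=\tilde s(tx)$: applying the subgradient inequality at $tx$ to $y=(t\pm\varepsilon)x \in \mathbb{R}_+^k$ (valid for $0<\varepsilon<t$) and letting $\varepsilon\to 0^+$ yields the two-sided sandwich $(\beta_x)'_-(t)\le g\cdot x\le (\beta_x)'_+(t)$ for every $g\in\partial b(tx)$; at $t\in T_x$ these bounds coincide, so $g\cdot x=\beta_x'(t)$ independently of the choice of $g$, and hence the payment $q(tx)\cdot tx-b(tx)=t\,\beta_x'(t)-\beta_x(t)$ is the same for every version of the mechanism with buyer payoff $b$.

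The second ingredient is that $\tilde s$ is continuous along each ray through the origin from above: by positive homogeneity of directional derivatives, $\tilde s(tx)=b'(tx;tx)-b(tx)=t\,(\beta_x)'_+(t)-\beta_x(t)$, and as $t\to 1^+$ this tends to $(\beta_x)'_+(1)-\beta_x(1)=b'(x;x)-b(x)=\tilde s(x)$ by continuity of $\beta_x$ and the standard right-continuity of the right-derivative $(\beta_x)'_+$ of a convex function on an interval.

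With these in hand, fix $x\le y$ in $\mathbb{R}_+^k$. The case $x=\mathbf 0$ is immediate from $\tilde s(\mathbf 0)=0$ and NPT. Otherwise, since $T_x\cap T_y$ has countable complement in $(0,\infty)$, pick a sequence $t_n\to 1^+$ with $t_n>1$ and $t_n\in T_x\cap T_y$; then $t_n x\le t_n y$, so by the assumed monotonicity of $\mu$ we have $s(t_n x)\le s(t_n y)$, which by the key observation equals $\tilde s(t_n x)\le \tilde s(t_n y)$; passing to the limit using the ray-continuity of $\tilde s$ delivers $\tilde s(x)\le \tilde s(y)$. The one delicate point is the sandwich step when $tx$ lies on the boundary of $\mathbb{R}_+^k$; this remains legitimate because for any $t>0$ the inward direction $-x$ still keeps $(t-\varepsilon)x$ in $\mathbb{R}_+^k$ for small $\varepsilon$, so both halves of the subgradient inequality are available.
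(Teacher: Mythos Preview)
Your proof is correct, and the underlying idea---approximating along rays and exploiting that all versions of the mechanism agree at points where the one-dimensional restriction $\beta_x$ is differentiable---is close in spirit to the paper's. But your execution is genuinely cleaner on two counts. First, the paper works with the full-dimensional differentiability set $\mathcal{D}\subset\mathbb{R}_{++}^k$ of $b$ and must argue that one can approach $x$ from the direction $x$ while staying in $\mathcal{D}$ (invoking Rockafellar's Theorem~24.6); you stay entirely within one-dimensional convex analysis, using only that $(\beta_x)'_+$ is right-continuous and that the non-differentiability set of each $\beta_x$ is countable. Second, and more substantively, the paper treats $x$ and $y$ asymmetrically: for $x$ it builds an exact approximating sequence with $\tilde s(x_n)\to\tilde s(x)$, while for $y$ it perturbs $y_n:=x_n+(y-x)$ into $y_n'\in\mathcal{D}$ with $y_n'\ge x_n$ and then invokes upper semicontinuity of $\tilde s$ to get $\overline{\lim}\,\tilde s(y_n')\le\tilde s(y)$. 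Your device of scaling both $x$ and $y$ by the \emph{same} factor $t_n\in T_x\cap T_y$ preserves the order $t_n x\le t_n y$ automatically and lets you use the identical ray-limit argument at both endpoints, sidestepping the upper-semicontinuity step entirely. The trade-off is minor: the paper's route would generalize more readily to comparisons not preserved by common scaling, but for the present statement your argument is shorter and more self-contained.
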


\begin{proof}
First, we claim that for every $x\in \mathbb{R}_{+}^{k}$ there is a sequence 
$\{x_{n}\}_{n=1}^{\infty }\subset \mathcal{D}$ such that $x_{n}\rightarrow x$
and $\tilde{s}(x_{n})\rightarrow \tilde{s}(x)$. Indeed, for $x\neq \mathbf{0}
$ let $\{x_{n}\}_{n}$ be a sequence of points in $\mathcal{D}$ such that $%
x_{n}\rightarrow x$ from the direction $x$, i.e., $\left\Vert
x_{n}-x\right\Vert ^{-1}(x_{n}-x)\rightarrow \left\Vert x\right\Vert ^{-1}x$
(such a sequence exists since $\mathcal{D}$ is dense in $\mathbb{R}_{++}^{k}$%
: for example, take $x_{n}\in \mathcal{D}$ to be within $1/n^{2}$ of $%
(1+1/n)x$). By Theorem 24.6 in Rockafellar (1970), every limit point of the
sequence $\{\nabla b(x_{n})\}_{n}$ belongs to $\partial b(x)_{x}$, and so $%
\tilde{s}(x_{n})=\nabla b(x_{n})\cdot x_{n}-b(x_{n})\rightarrow b^{\prime
}(x;x)-b(x)=\tilde{s}(x)$. For $x=\mathbf{0}$, for any sequence $%
\{x_{n}\}_{n}$ in $\mathcal{D}$ with $x_{n}\rightarrow \mathbf{0}$ we have $%
\tilde{s}(x_{n})=\nabla b(x_{n})\cdot x_{n}-b(x_{n})\rightarrow 0-b(\mathbf{%
0)}=b^{\prime }(\mathbf{0};\mathbf{0})-b(\mathbf{0})=\tilde{s}(\mathbf{0)}$.

Second, we claim that if $\tilde{s}$ is nondecreasing on $\mathcal{D}$ then $%
\tilde{s}$ is nondecreasing on $\mathbb{R}_{+}^{k}$. Indeed, take $x\leq y$
to be two points in $\mathbb{R}_{+}^{k}$. Take $x_{n}\in \mathcal{D}$ such
that $x_{n}\rightarrow x$ and $\tilde{s}(x_{n})\rightarrow \tilde{s}(x)$;
put $y_{n}:=x_{n}+y-x\geq x_{n}$, and let $y_{n}^{\prime }\in \mathcal{D}$
be such that $y_{n}^{\prime }\geq y_{n}$ and $\left\Vert y_{n}^{\prime
}-y_{n}\right\Vert \leq 1/n$ (the existence of $y_{n}^{\prime }$ is by the
density of $\mathcal{D}$ in $\mathbb{R}_{+}^{k}$). Thus, $x_{n}\leq
y_{n}^{\prime }$ and $y_{n}^{\prime }\rightarrow y$, and so $\tilde{s}%
(x_{n})\leq \tilde{s}(y_{n}^{\prime })$ (because $\tilde{s}$ is
nondecreasing on $\mathcal{D}$, which contains both $x_{n}$ and $%
y_{n}^{\prime }$), and $\overline{\lim }_{n}\,\tilde{s}(y_{n}^{\prime })\leq 
\tilde{s}(y)$ (because $\tilde{s}$ is upper semicontinuous\footnote{%
A real function $f$ is \emph{upper semicontinuous} if $\overline{\lim }%
_{y\rightarrow x}f(y)\leq f(x)$ for every $x;$ equivalently, the set $%
\{x:f(x)\geq t\}$ is closed for every real $t.$}: use Theorem 24.5 in
Rockafellar 1970 for $b^{\prime }$); altogether, $\tilde{s}(x)=\lim_{n}%
\tilde{s}(x_{n})\leq \overline{\lim }_{n}\,\tilde{s}(y_{n}^{\prime })\leq 
\tilde{s}(y).$

This completes the proof: if $s$ is nondecreasing on $\mathbb{R}_{+}^{k}$
then $\tilde{s}$ is nondecreasing on $\mathcal{D}$ (where it coincides with $%
s$), and so it is nondecreasing on $\mathbb{R}_{+}^{k}.$
\end{proof}

\bigskip

Second, we provide a formulation of monotonicity in terms of the allocation
function $q$. Assume for simplicity that the function $b$ is twice
differentiable;\footnote{%
And hence differentiable, which implies that there is a unique mechanism
with buyer payoff function $b.$} then we have $\nabla s(x)=\nabla (\nabla
b(x)\cdot x-b(x))=\nabla ^{2}b(x)\,x$, where $\nabla ^{2}b(x)\,x$ is the
product of the $k\times k$ matrix $\nabla ^{2}b(x)$ (the Hessian of $b$ at $%
x $) and the column vector $x$. Consider $q(tx)=\nabla b(tx)$ as a function
of $t\geq 0$; its derivative is $dq(tx)/dt=\nabla ^{2}b(tx)\,x=(1/t)\nabla
s(tx) $. The monotonicity of the mechanism is equivalent to $\nabla s(x)\geq
0$ for all $x$, and so it is equivalent also to the function $q(tx)$ being
nondecreasing in $t$ for every $x$, i.e., to \emph{the allocation function }$%
q$\emph{\ being nondecreasing along any ray from the origin} (cf. allocation
monotonicity, where $q$ is nondecreasing everywhere).

\subsection{Appendix: Revenue of Monotonic Mechanisms\label{s-a:monrev}}

We generalize the result of Theorem \ref{th:monrev} (see Section \ref%
{s:mon-rev}). A \emph{bundling partition }of the goods is $\Pi
=\{K_{j}\}_{j\in J}$, where $\cup _{j\in J}K_{j}=K=\{1,...,k\}$ and the sets 
$K_{j}$ are disjoint (i.e., $K_{j}\cap K_{j^{\prime }}=\emptyset $ for $%
j\neq j^{\prime }$). The corresponding revenue $\Pi $-\textsc{Rev} is
defined by%
\begin{equation*}
\Pi \text{-\textsc{Rev}}(X):=\sum_{j\in J}\text{\textsc{Rev}}\left(
\sum_{i\in K_{j}}X_{i}\right) .
\end{equation*}%
This is the maximal revenue obtained by selling each bundle of goods $K_{j}$
separately, i.e. (by Myerson's single-good result (\ref{eq:one good})), by
setting a price $p_{j}$ for each bundle $K_{j}$. The partition into
singletons, $\Pi _{\text{\textsc{S}}}=\{\{1\},...,\{k\}\}$, yields the
separate revenue \textsc{SRev}, and the singleton partition, i.e., $\Pi _{%
\text{\textsc{B}}}=\{K\}$, yields the bundling revenue \textsc{BRev}. Since
all bundling-partition mechanisms are clearly monotonic, it follows that $%
\Pi $-\textsc{Rev}$(X)\leq \,$\textsc{MonRev}$(X)$ (cf. (\ref{eq:SBrev})).

Theorem \ref{th:monrev} easily generalizes to:

\begin{proposition}
\label{p:pi-rev}Let $X$ be a $k$-good random valuation. Then 
\begin{equation*}
\text{\textsc{MonRev}}(X)\leq k\cdot \min_{\Pi }\Pi \text{-\textsc{Rev}}(X),
\end{equation*}%
where the minimum is taken over all partitions $\Pi $ of the set of goods.
\end{proposition}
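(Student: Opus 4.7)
The plan is to adapt the proof of Theorem \ref{th:monrev} by replacing the family of one-good variables $\{X_i\}_{i \in K}$ with the family of bundle sums $\{W_j\}_{j \in J}$ defined by $W_j := \sum_{i \in K_j} X_i$ for the fixed partition $\Pi = \{K_j\}_{j \in J}$. Set $V := \max_{j \in J} W_j$, a single (one-good) random variable. The key pointwise observation is that $X \leq V\, e$ in $\mathbb{R}_+^k$: for any $i \in K$, letting $j$ be the unique index with $i \in K_j$, the nonnegativity of the coordinates gives $X_i \leq W_j \leq V$.

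For any monotonic mechanism $\mu = (q,s)$, the pointwise inequality $X \leq V\, e$ together with the monotonicity of $s$ yields $s(X) \leq s(V\, e)$ almost surely, and hence $R(\mu;X) \leq R(\mu; V\, e)$. Taking the supremum over all monotonic $\mu$ on the left-hand side and over all IC and IR $\mu$ on the right-hand side gives $\text{\textsc{MonRev}}(X) \leq \text{\textsc{Rev}}(V\, e)$.

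Now apply Lemma \ref{l:Ye} to the one-good valuation $V$: $\text{\textsc{Rev}}(V\, e) = k \cdot \text{\textsc{Rev}}(V) = k \cdot \text{\textsc{Rev}}(\max_{j \in J} W_j)$. Finally, apply Proposition \ref{p:max} to the family of one-good random variables $\{W_j\}_{j \in J}$ (which can be arbitrarily correlated): $\text{\textsc{Rev}}(\max_j W_j) \leq \sum_{j \in J} \text{\textsc{Rev}}(W_j) = \Pi\text{-\textsc{Rev}}(X)$. Chaining these inequalities gives $\text{\textsc{MonRev}}(X) \leq k \cdot \Pi\text{-\textsc{Rev}}(X)$; since $\Pi$ was an arbitrary bundling partition, taking the minimum over $\Pi$ completes the proof.

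There is no substantive obstacle; the argument is essentially a one-variable substitution in the proof of Theorem \ref{th:monrev}, with $\max_i X_i$ replaced by $\max_j W_j$. As a sanity check, the singleton partition $\Pi_{\text{\textsc{S}}}$ (where each $W_j = X_j$ and $V = \max_i X_i$) recovers the \textsc{SRev} bound, while the trivial partition $\Pi_{\text{\textsc{B}}} = \{K\}$ (where $V = \sum_i X_i$) recovers the \textsc{BRev} bound of Theorem \ref{th:monrev}.
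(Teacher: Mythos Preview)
Your proof is correct. The paper's own proof takes a slightly different route: it invokes the already-established inequality (\ref{eq:k-max}), namely \textsc{MonRev}$(X)\leq k\cdot\text{\textsc{Rev}}(\max_{i\in K}X_i)$, and then bounds $\text{\textsc{Rev}}(\max_{i\in K}X_i)$ by $\Pi$-\textsc{Rev}$(X)$ via two applications of Proposition~\ref{p:max}---first the \textsc{SRev} version applied to $Y_j:=\max_{i\in K_j}X_i$ (using $\max_i X_i=\max_j Y_j$), then the \textsc{BRev} version to pass from $\text{\textsc{Rev}}(Y_j)$ to $\text{\textsc{Rev}}(\sum_{i\in K_j}X_i)$. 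You instead build a diagonal dominator tailored to the partition, $V=\max_j\sum_{i\in K_j}X_i$, redo the $X\leq Ve$ argument directly, and then need only the \textsc{SRev} half of Proposition~\ref{p:max} once. Your route is marginally more self-contained; the paper's route has the advantage of isolating the partition-independent bound (\ref{eq:k-max}) and showing that everything flows from $\text{\textsc{Rev}}(\max_i X_i)$.
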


\begin{proof}
Let $\Pi =(K_{j})_{j\in J}$ be a partition of $K$. For each $j\in J$ put $%
Y_{j}:=\max_{i\in K_{j}}X_{i}$; then $\max_{i\in K}X_{i}=\max_{j\in J}Y_{j}$%
, and so%
\begin{eqnarray*}
\text{\textsc{Rev}}\left( \max_{i\in K}X_{i}\right) &=&\text{\textsc{Rev}}%
\left( \max_{j\in J}Y_{j}\right) \leq \sum_{j\in J}\text{\textsc{Rev}}\left(
Y_{j}\right) \\
&\leq &\sum_{j\in J}\text{\textsc{Rev}}\left( \sum_{i\in K_{j}}X_{i}\right)
=\Pi \text{-\textsc{Rev}}(X),
\end{eqnarray*}%
where we have used Proposition \ref{p:max} twice, in its \textsc{SRev}
version for the first inequality, and in its \textsc{BRev} version for the
second. Apply (\ref{eq:k-max}).
\end{proof}

\subsection{Appendix: Submodular Deterministic Mechanisms\label{sus-a:submod
p}}

A deterministic mechanism is submodular if its canonical deterministic
pricing function $p_{0}^{\text{\textsc{D}}}$ is submodular (see Section \ref%
{sus:s-modularity}); we show here that it suffices that \emph{some} pricing
function is submodular, and only on the range of allocations $Q.$

\begin{proposition}
\label{p:subm-p0}Let $\mu $ be a deterministic mechanism with range of
allocations $Q\subseteq 2^{K}$. If some deterministic pricing function $p$
of $\mu $ satisfies the submodularity inequality (\ref{eq:subm-sets}) for
every\footnote{%
Whether $A\cup B$ and $A\cap B$ are in $Q$ or not.} $A,B\in Q$, then the
canonical deterministic pricing function $p_{0}^{\text{\textsc{D}}}$ of $\mu 
$ satisfies (\ref{eq:subm-sets}) for every $A,B\subseteq K.$
\end{proposition}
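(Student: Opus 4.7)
The plan is to deduce the submodularity of $p_0^{\text{\textsc{D}}}$ on all of $2^K$ from that of $p$ restricted to $Q$, by using the formula (\ref{eq:p0D}) to approximate arbitrary subsets by sets in $Q$ that lie above them, and then invoking the fact that $p_0^{\text{\textsc{D}}}$ is both the \emph{minimal} pricing function of $\mu$ and a nondecreasing function.

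First, I would fix arbitrary $A,B\subseteq K$. If either $p_0^{\text{\textsc{D}}}(A)$ or $p_0^{\text{\textsc{D}}}(B)$ equals $\infty$ the submodularity inequality (\ref{eq:subm-sets}) is vacuous, so I may assume both values are finite. By Proposition \ref{p:p0-det} together with the finiteness of $Q\subseteq 2^K$, the infimum in (\ref{eq:p0D}) is attained, so I pick $A',B'\in Q$ with $A'\supseteq A$, $B'\supseteq B$, and $\pQ(A')=p_0^{\text{\textsc{D}}}(A)$, $\pQ(B')=p_0^{\text{\textsc{D}}}(B)$. Since all pricing functions of $\mu$ coincide on $Q$, we also have $p(A')=\pQ(A')$ and $p(B')=\pQ(B')$; applying the hypothesized submodularity of $p$ at the pair $(A',B')\in Q\times Q$ then gives
\begin{equation*}
p_0^{\text{\textsc{D}}}(A)+p_0^{\text{\textsc{D}}}(B)\;=\;p(A')+p(B')\;\geq\;p(A'\cup B')+p(A'\cap B').
\end{equation*}

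Next, I would close the chain using the two basic features of $p_0^{\text{\textsc{D}}}$ recorded in Section \ref{susus:canonical p}. Minimality of $p_0^{\text{\textsc{D}}}$ yields $p(A'\cup B')\geq p_0^{\text{\textsc{D}}}(A'\cup B')$ and $p(A'\cap B')\geq p_0^{\text{\textsc{D}}}(A'\cap B')$; the monotonicity of $p_0^{\text{\textsc{D}}}$, combined with $A'\cup B'\supseteq A\cup B$ and $A'\cap B'\supseteq A\cap B$ (both immediate from $A'\supseteq A$ and $B'\supseteq B$), then gives $p_0^{\text{\textsc{D}}}(A'\cup B')\geq p_0^{\text{\textsc{D}}}(A\cup B)$ and $p_0^{\text{\textsc{D}}}(A'\cap B')\geq p_0^{\text{\textsc{D}}}(A\cap B)$. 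Concatenating with the previous display produces (\ref{eq:subm-sets}) for $p_0^{\text{\textsc{D}}}$.

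There is no genuine obstacle; the proof is simply the assembly of Proposition \ref{p:p0-det} with the minimality and monotonicity of $p_0^{\text{\textsc{D}}}$. The one point worth flagging is that the sets $A'\cup B'$ and $A'\cap B'$ need not themselves lie in $Q$, so the submodularity hypothesis cannot be re-invoked at them; the fix is precisely to lower-bound $p(A'\cup B')$ and $p(A'\cap B')$ by the corresponding $p_0^{\text{\textsc{D}}}$-values via minimality, which is where the transition from $p$ back to $p_0^{\text{\textsc{D}}}$ is made.
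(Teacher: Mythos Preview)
Your proof is correct and is essentially identical to the paper's own argument: pick $A',B'\in Q$ above $A,B$ attaining the infimum in (\ref{eq:p0D}), apply the submodularity of $p$ at $(A',B')$, then pass back to $p_0^{\text{\textsc{D}}}$ via $p\geq p_0^{\text{\textsc{D}}}$ and the monotonicity of $p_0^{\text{\textsc{D}}}$. The only cosmetic difference is that you make explicit the finiteness of $Q$ to justify attainment of the infimum, which the paper leaves implicit.
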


\begin{proof}
Let $A,B$ be two subsets of $K$ with $p_{0}(A),p_{0}(B)<\infty $ (otherwise
the inequality below is trivial); by Proposition \ref{p:p0-det}, there are $%
A^{\prime },B^{\prime }\in Q$ with $A^{\prime }\supseteq A$ and $B^{\prime
}\supseteq B$ such that $p_{0}(A)=%
\pQ%
(A^{\prime })=p(A^{\prime })$ and $p_{0}(B)=%
\pQ%
(B^{\prime })=p(B^{\prime })$, and so%
\begin{eqnarray*}
p_{0}(A)+p_{0}(B) &=&p(A^{\prime })+p(B^{\prime })\geq p(A^{\prime }\cup
B^{\prime })+p(A^{\prime }\cap B^{\prime }) \\
&\geq &p_{0}(A^{\prime }\cup B^{\prime })+p_{0}(A^{\prime }\cap B^{\prime
})\geq p_{0}(A\cup B)+p_{0}(A\cap B),
\end{eqnarray*}%
where the first inequality is by the submodularity of $p$ on $Q$, the second
because $p\geq p_{0}$, and the third because $p_{0}$ is nondecreasing.
\end{proof}

\subsection{Appendix: Ultramodular Functions\label{sus-a:utramodular}}

The following result is used in the proof of part (ii) of Proposition \ref%
{p:AM2supM} in Section \ref{sus:a-mon characterization}. Marinacci and
Montrucchio (2005, Lemma 5.1 and Theorem 5.4) prove it for interior points;
for completeness, we provide here a short version of their proof that
applies to boundary points as well. Recall the notation $\nabla
^{+}f(x):=(f^{\prime }(x;e^{i}))_{i=1,...,k}$.

\begin{proposition}
\label{p:delta+}Let $f:\mathbb{R}_{+}^{k}\rightarrow \mathbb{R}$ be an
ultramodular function. Then the vector $\nabla ^{+}f(x)$ is the maximal
subgradient of $f$ at $x$ for every $x\in \mathbb{R}_{+}^{k}$, and the
function $x\longmapsto \nabla ^{+}f(x)$ is nondecreasing in $x$ (i.e., $%
\nabla ^{+}f(y)\geq \nabla ^{+}f(x)$ for every $y\geq x$).
\end{proposition}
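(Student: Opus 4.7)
The plan is to prove monotonicity of $\nabla^+ f$ directly from the ultramodularity inequality (the argument works at both interior and boundary points of $\mathbb{R}_+^k$), to invoke Marinacci and Montrucchio (2005), Theorem 5.4(iii), for the maximal-subgradient property at interior points, and then to extend that property to boundary points via an approximation argument based on closedness of the subdifferential.

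For the monotonicity, each right derivative $f'(x;e^i)$ exists because ultramodularity applied with $d^1=se^i$ and $d^2=te^i$ forces $t\mapsto f(x+te^i)$ on $[0,\infty)$ to have nondecreasing forward differences and hence to be convex, so its right derivative at $t=0$ is well defined. Now fix $x\leq y$ in $\mathbb{R}_+^k$, a coordinate $i$, and $t>0$; applying ultramodularity with $d^1:=y-x\geq 0$ and $d^2:=te^i\geq 0$ yields
\[
f(y+te^i)-f(y)\;\geq\;f(x+te^i)-f(x).
\]
Dividing by $t$ and letting $t\to 0^+$ gives $f'(y;e^i)\geq f'(x;e^i)$ and hence $\nabla^+ f(y)\geq \nabla^+ f(x)$, with no special treatment of boundary points.

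For the maximal subgradient property, the interior case is exactly Marinacci and Montrucchio's Theorem 5.4(iii). Suppose then that $x$ is a boundary point, and approximate it from above by $x^{(n)}:=x+(1/n)(1,\ldots,1)\in\mathrm{int}\,\mathbb{R}_+^k$. By the monotonicity just proved, the sequence $\nabla^+ f(x^{(n)})$ is decreasing in $n$ and bounded below by $\nabla^+ f(x)$, and therefore converges to some $g^{\ast}\geq \nabla^+ f(x)$. Each $\nabla^+ f(x^{(n)})$ is a subgradient of $f$ at $x^{(n)}$ by the interior case, so the closedness of $\partial f$ (Rockafellar 1970, Theorem 24.4) yields $g^{\ast}\in \partial f(x)$. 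On the other hand, any $g\in \partial f(x)$ satisfies $f(x+te^i)-f(x)\geq tg_i$ for $t\geq 0$, and dividing by $t$ and letting $t\to 0^+$ gives $g_i\leq f'(x;e^i)$; hence $g\leq \nabla^+ f(x)$ coordinate-wise. Applied to $g=g^{\ast}$, this forces $g^{\ast}=\nabla^+ f(x)$, so $\nabla^+ f(x)\in \partial f(x)$, and the same coordinate-wise bound shows it is the maximal subgradient.

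The main obstacle is the boundary extension itself, since the right derivatives of $f$ at a boundary point need not coincide a priori with limits of interior right derivatives. The key device is the two-sided pinching of the sequence $\nabla^+ f(x^{(n)})$ — bounded below by $\nabla^+ f(x)$ via monotonicity, and above by $\nabla^+ f(x)$ via the general coordinate-wise inequality $g\leq \nabla^+ f(x)$ for any $g\in \partial f(x)$ — which, combined with closedness of the subdifferential, forces $\nabla^+ f(x)$ itself to be a subgradient at $x$.
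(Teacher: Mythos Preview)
Your proof is correct and reaches the same conclusion, but the route differs from the paper's. The paper does \emph{not} invoke Marinacci--Montrucchio at all; it gives a self-contained argument valid at every point simultaneously. Its key step is a telescoping decomposition of $f(x+\delta e)-f(x)$ into single-coordinate increments, each of which supermodularity bounds below by $f(x+\delta e^{i})-f(x)$; dividing by $\delta$ and letting $\delta\to 0^{+}$ yields $f'(x;e)\geq e\cdot\nabla^{+}f(x)$. Combining this with the easy inequality $g\leq\nabla^{+}f(x)$ for every $g\in\partial f(x)$ (which you also derive) and the convex-analysis identity $f'(x;e)=\max_{g\in\partial f(x)}e\cdot g$ forces $\nabla^{+}f(x)\in\partial f(x)$ directly, without any interior/boundary split. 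Your approach instead imports the interior case as a black box and extends it to the boundary by a monotone squeeze plus closedness of the subdifferential graph. What you gain is economy (no need to reprove the interior result); what the paper gains is a uniform argument that sidesteps the approximation entirely.

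One minor technical point: your appeal to Rockafellar's Theorem~24.4 for the closedness of $\partial f$ requires $f$ to be a closed (lower semicontinuous) proper convex function; equivalently, you need $f(x^{(n)})\to f(x)$ along your approximating sequence. This is not literally part of the hypothesis, and at a boundary point it does not follow from ultramodularity alone without some regularity. In the paper's intended application $f=b$ is continuous, so this is harmless, and in general it can be recovered by observing that $t\mapsto f(x+te)$ is convex with $f'(x;e)\geq\sum_{i}f'(x;e^{i})>-\infty$ (via the same telescoping the paper uses), hence right-continuous at $0$. It is worth making this explicit.
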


\begin{proof}
(i) Let $g\in \partial f(x)$; then $f(x+\delta e^{i})-f(x)\geq \delta g_{i}$
for every $\delta >0$ by the subgradient inequality, and so, by dividing by $%
\delta $ and taking the limit as $\delta \rightarrow 0^{+}$, we get $%
f^{\prime }(x;e^{i})\geq g_{i}$. Thus,%
\begin{equation*}
\partial f(x)\subseteq \{g:g\leq \nabla ^{+}f(x)\}.
\end{equation*}

(ii) Let $\delta >0$; we have%
\begin{eqnarray*}
f(x+\delta e)-f(x) &=&[f(x_{1}+\delta ,...,x_{k}+\delta
)-f(x_{1},x_{2}+\delta ,...,x_{k}+\delta )] \\
&&+[f(x_{1},x_{2}+\delta ,...,x_{k}+\delta )-f(x_{1},x_{2},x_{3}+\delta
,...,x_{k}+\delta )] \\
&&+...+[f(x_{1},...,x_{k-1},x_{k}+\delta )-f(x_{1},...,x_{k})] \\
&\geq &\sum_{i=1}^{k}[f(x_{1},...,x_{i-1},x_{i}+\delta
,x_{i+1},...,x_{k})-f(x)],
\end{eqnarray*}%
where the inequality is by supermodularity (term by term, since $\delta >0)$%
. Dividing by $\delta $ and taking the limit as $\delta \rightarrow 0^{+}$
yields%
\begin{equation*}
f^{\prime }(x;e)\geq \sum_{i=1}^{k}f^{\prime }(x;e^{i})=e\cdot \nabla
^{+}f(x).
\end{equation*}

(iii) By (i) we have $f^{\prime }(x;e)=\max_{g\in \partial f(x)}e\cdot g\leq
e\cdot \nabla ^{+}f(x)$, which together with (ii) yields equality. Therefore 
$\nabla ^{+}f(x)$ must belong to the closed set $\partial f(x)$ (otherwise
the maximum would be strictly smaller), and is thus its maximal element.

(iv) Let $y\geq x$; then $f(y+\delta e^{i})-f(y)\geq f(x+\delta e^{i})-f(x)$
for every $\delta >0$ by ultramodularity, and so, by dividing by $\delta $
and taking the limit as $\delta \rightarrow 0^{+}$, we get%
\begin{equation*}
f^{\prime }(y;e^{i})\geq f^{\prime }(x;e^{i}).
\end{equation*}
\end{proof}

\subsection{Appendix: Separable Subadditivity without Allocation
Monotonicity \label{sus-a:sep-subadd}}

Here we show that the converse of Theorem \ref{th:am} (iii) in Section \ref%
{sus:a-mon characterization} does not hold for two or more goods: separably
subadditive pricing may not suffice for allocation monotonicity. This is
easy to see, already in the class of symmetric deterministic mechanisms
(Example \ref{ex:subadd-not-am-3}). However, it requires at least $3$ goods,
and so we provide another example for $2$ goods (Example \ref%
{ex:subadd-not-am-2}).

\begin{example}
\label{ex:subadd-not-am-3}Let $k\geq 3$, and let $p$ be the symmetric
deterministic pricing function $p(m)=m$ for every $m\neq 1$ and $p(1)=2$.
Then $p$ is separably subadditive, because $p(m)+p(n)-p(m+n)$ equals $0,1$,
or $2$ (according to how many of $m,n$ equal $1$). But $p$ is \emph{not}
submodular, because $p(2)+p(2)<p(3)+p(1)$ (e.g., $p(\{1,2\})+p(\{1,3\})<p(%
\{1,2,3\})+p(\{1\})$), and thus the corresponding deterministic mechanism is
not allocation monotonic by Theorem \ref{th:am} (iii).
\end{example}

\begin{example}
\label{ex:subadd-not-am-2}Let $k=2$. We first construct such a function $b$
on a bounded domain, specifically $D:=[0,1]^{2}$, and then show how to
extend it to all $\mathbb{R}_{+}^{2}$. Put $b(x):=[f(x)]_{+}$ for $x\in
\lbrack 0,1]^{2}$, where $f$ is the convex function $\
f(x_{1},x_{2}):=(1/3)(x_{1}^{2}+x_{2}^{2}+x_{1}+x_{2}-x_{1}x_{2}-2)$. It is
easy to see that $b=0$ on the convex set $D_{0}:=\{x\in \lbrack
0,1]^{2}:f(x)\leq 0\}$, whose boundary goes through the three points $(1,0),(%
\sqrt{3}-1,\sqrt{3}-1)$, and $(0,1)$, and the gradient of $b$ is $(0,0)$ for 
$x$ in the interior of $D_{0}$, and $(1/3)(2x_{1}+1-x_{2},2x_{2}+1-x_{1})$
for $x$ in the interior of $D\backslash D_{0}$, and so all subgradients of $%
b $ lie in $[0,1]^{2}$ (use $\nabla f(x)\geq 0$ when $f(x)\geq 0$); thus, $b$
is continuous, nondecreasing, nonexpansive, and convex on $D$. Moreover, it
is separably superadditive there, because $f(x_{1},0)$ and $f(0,x_{2})$ are $%
\leq 0$ for $0\leq x_{i}\leq 1$, and so $b(x_{1},0)+b(0,x_{2})=0+0\leq b(x)$%
. However, it is \emph{not} supermodular because, for interior points of $%
D\backslash D_{0}$ (in particular, for $x\ll (1,1)$ close to $(1,1)$) we
have $\partial ^{2}b/\partial x_{1}\partial x_{2}=-1/3<0.$

To extend $b$ to all $\mathbb{R}_{+}^{2}$ while keeping all the above
properties, we define $b(x):=b(\tilde{x})+[x_{1}-1]_{+}+[x_{2}-1]_{+}$,
where $\tilde{x}_{i}:=\min \{x_{i},1\}$ (cf. Example \ref{ex:AM not subm}
for a similar extension). Thus, there are four regions, $%
C_{00}:=D=[0,1]^{2},\;C_{10}:=[1,\infty )\times \lbrack
0,1],\;C_{01}:=[0,1]\times \lbrack 1,\infty )$, and $C_{11}:=[1,\infty )^{2}$
(the subscript gives the coordinates of the minimal point in each region),
where the function $b$ is, respectively, $b_{00}(x):=[f(x)]_{+},%
\;b_{10}(x):=(1/3)x_{2}^{2}+x_{1}-1,\;b_{01}(x):=(1/3)x_{1}^{2}+x_{2}-1$,
and $b_{11}(x):=x_{1}+x_{2}-5/3$. We will see below that $b$ is a convex
function. Since all its gradients lie in $[0,1]^{2}$, it is nondecreasing
and nonexpansive. It is also separably superadditive, because%
\begin{eqnarray*}
b(x)-b(x_{1},0)-b(0,x_{2}) &=&(b(\tilde{x})+[x_{1}-1]_{+}+[x_{2}-1]_{+}) \\
&&-(b(\tilde{x}_{1},0)+[x_{1}-1]_{+})-(b(0,\tilde{x}_{2})+[x_{2}-1]_{+}) \\
&=&b(\tilde{x})-b(\tilde{x}_{1},0)-b(0,\tilde{x}_{2})\geq 0
\end{eqnarray*}%
because $\tilde{x}\in D$. However, as we have already seen, $b$ is \emph{not}
supermodular (already on $D$).

It remains to show that $b$ is a convex function. Since this is equivalent
to convexity along any straight-line segment, and $b$ is convex in each
region, we have to show that at every boundary point $x^{\ast }$, for
example $x^{\ast }\in C_{00}\cap C_{10}$, as we cross from $C_{00}$ to $%
C_{10}$ in the direction $z$, we have $b_{00}^{\prime }(x^{\ast };z)\leq
b_{10}^{\prime }(x^{\ast };z)$, i.e., $(\nabla b_{10}(x^{\ast })-\nabla
b_{00}(x^{\ast }))\cdot z\geq 0$. Indeed: $x_{1}^{\ast }=1$ and $0\leq
x_{2}^{\ast }\leq 1$; since we go from $\{x_{1}<1\}$ to $\{x_{1}>1\}$, we
have $z_{1}\geq 0$ (with no restriction on $z_{2}$); thus $\nabla
b_{00}(x^{\ast })=(1/3)(2x_{1}^{\ast }+1-x_{2}^{\ast },2x_{2}^{\ast
}+1-x_{1}^{\ast })=(1-(1/3)x_{2}^{\ast },(2/3)x_{2}^{\ast })$ and $\nabla
b_{10}(x^{\ast })=(1,(2/3)x_{2}^{\ast })$, and the desired inequality holds%
\footnote{%
The condition for convexity is thus $\partial b_{00}/\partial x_{1}\leq
\partial b_{10}/\partial x_{1}$ and $\partial b_{00}/\partial x_{2}=\partial
b_{10}/\partial x_{2}$ at each $x^{\ast }\in C_{00}\cap C_{10}.$} (because $%
x_{2}^{\ast }\geq 0$ and $z_{1}\geq 0$). Another case: for $x^{\ast }\in
C_{01}\cap C_{11}$ and $z_{1}\geq 0$, we have $x_{1}^{\ast }=1$, $\nabla
b_{01}(x^{\ast })=(2/3,1)$, and $\nabla b_{11}(x^{\ast })=(1,1)$, and so
again $(\nabla b_{11}(x^{\ast })-\nabla b_{01}(x^{\ast }))\cdot z\geq 0$.
The other cases are similar.
\end{example}

\subsection{Appendix: Monotonicity of Deterministic Mechanisms\label%
{s-a:det-mon}}

Here we obtain further conditions for the monotonicity of deterministic
mechanisms (see Section \ref{s:mon det}).

We start with a sufficient condition that is weaker than that of Corollary %
\ref{c:pm-subm}, and is easily obtained from Theorem \ref{th:pm}.

\begin{corollary}
\label{c:detmon-suff2}Let $\mu $ be a deterministic buyer-favorable
tie-consistent mechanism, with a nondecreasing pricing function $%
p:2^{K}\rightarrow \mathbb{\mathbb{R}}_{+}$. A sufficient condition for $\mu 
$ to be monotonic is: if%
\begin{equation*}
p(A)>p(B)
\end{equation*}%
for some $A,B\subseteq K$, then there is $\emptyset \neq C\subseteq
A\backslash B$ such that 
\begin{equation}
p(A)-p(A\backslash C)\geq p(B\cup C)-p(B).  \label{eq:CC}
\end{equation}
\end{corollary}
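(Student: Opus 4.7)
The plan is to derive this directly from Theorem \ref{th:pm}, which characterizes monotonicity in terms of the nonexistence of vectors $z \in \mathbb{R}^{A \setminus B}$ satisfying the full system (\ref{eq:C}) of simultaneous inequalities, ranging over all nonempty $C \subseteq A \setminus B$. The observation is that the hypothesis now furnishes, for each pair $(A,B)$ with $p(A) > p(B)$, a \emph{single} witness $C$ that by itself blocks any such $z$.

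Concretely, fix $A, B \subseteq K$ with $p(A) > p(B)$, and let $C \neq \emptyset$, $C \subseteq A \setminus B$, be the set provided by the hypothesis, so that $p(A) - p(A \setminus C) \geq p(B \cup C) - p(B)$. If there were a $z \in \mathbb{R}^{A \setminus B}$ satisfying (\ref{eq:C}) for this $C$ in particular, we would have
\[
p(A) - p(A \setminus C) \;\leq\; \sum_{i \in C} z_i \;<\; p(B \cup C) - p(B),
\]
which directly contradicts the hypothesized inequality. Hence no such $z$ exists, and Theorem \ref{th:pm} delivers monotonicity of $\mu$.

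There is essentially no obstacle: the content of the corollary is the observation that to rule out the vector $z$, it suffices to exhibit one $C$ achieving the reversed inequality, rather than to verify restricted submodularity for all of $A \setminus B$ at once (as in Corollary \ref{c:pm-subm}, which corresponds to the specific choice $C = A \setminus B$). Thus this gives a strictly weaker sufficient condition than the restricted submodularity of Corollary \ref{c:pm-subm}, while the necessary condition of Corollary \ref{c:pm-necessary} corresponds to the opposite extreme of singleton $C$'s. No tie-breaking or buyer-favorability analysis is needed beyond what is already absorbed into Theorem \ref{th:pm}.
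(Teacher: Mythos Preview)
Your proof is correct and is essentially the same as the paper's: the paper's proof is the single line ``Condition (\ref{eq:CC}) contradicts condition (\ref{eq:C}),'' and you have spelled out exactly that contradiction. Your additional remarks comparing this to Corollaries \ref{c:pm-subm} and \ref{c:pm-necessary} are accurate and mirror the discussion in the paper.
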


\begin{proof}
Condition (\ref{eq:CC}) contradicts condition (\ref{eq:C}).
\end{proof}

\bigskip

Putting $I:=A\backslash B$, $J:=B\backslash A$, and replacing $A\cap B$ with 
$A$, this can rewritten as follows (cf. (\ref{eq:AIJ})): if%
\begin{equation*}
p^{\prime }(A;I)>p^{\prime }(A;J)
\end{equation*}%
then there is $\emptyset \neq \tilde{I}\subseteq I$ such that%
\begin{equation*}
p^{\prime }(A\cup I\backslash \tilde{I};\tilde{I})\geq p^{\prime }(A\cup J;%
\tilde{I}).
\end{equation*}

Proposition 5 in Hart and Reny (2015) shows that seller-favorable
deterministic symmetric mechanisms are monotonic; the same holds for
buyer-favorable (tie-consistent) mechanisms.

\begin{corollary}
Let $\mu $ be a deterministic symmetric buyer-favorable tie-consistent
mechanism. Then $\mu $ is monotonic.
\end{corollary}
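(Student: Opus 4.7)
The plan is to apply Corollary \ref{c:detmon-suff2} to the canonical deterministic pricing function $p$ of $\mu$, which is nondecreasing by Proposition \ref{p:p0-det}. Symmetry of $\mu$ means that $b(x)=b(x')$ whenever $x'$ is a coordinate-permutation of $x$; combined with $p(A)=\sup_x(x(A)-b(x))$ this gives $p(\sigma(A))=p(A)$ for every permutation $\sigma$ of $K$, so $p(A)$ depends only on $|A|$. Thus I would write $p(|A|)$ in place of $p(A)$ from here on.

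The first step is to show that $p(A)>p(B)$ forces $|A|>|B|$. If instead $|A|\le |B|$, choose any $B'\supseteq A$ with $|B'|=|B|$; monotonicity of $p$ gives $p(A)\le p(B')$ and symmetry gives $p(B')=p(B)$, contradicting $p(A)>p(B)$. Setting $a:=|A\setminus B|$, $b:=|B\setminus A|$, $m:=|A\cap B|$, this reduces to $a>b$.

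The key step is then to pick any $C\subseteq A\setminus B$ with $|C|=a-b$; this is a nonempty valid choice since $1\le a-b\le a=|A\setminus B|$. A direct count gives $|A\setminus C|=b+m=|B|$ and $|B\cup C|=a+m=|A|$, so by symmetry $p(A\setminus C)=p(B)$ and $p(B\cup C)=p(A)$, whence
$$p(A)-p(A\setminus C)\;=\;p(A)-p(B)\;=\;p(B\cup C)-p(B).$$
Thus inequality (\ref{eq:CC}) holds (with equality), and Corollary \ref{c:detmon-suff2} yields the monotonicity of $\mu$. There is no real obstacle here: once one confirms via symmetry and monotonicity of $p$ that $p(A)>p(B)$ implies $|A|>|B|$, the cardinality $|C|=a-b$ is essentially dictated by the requirement to make $|A\setminus C|$ equinumerous with $B$ and $|B\cup C|$ equinumerous with $A$, after which the desired inequality is automatic.
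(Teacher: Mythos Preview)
Your proof is correct and follows essentially the same approach as the paper: both verify that $p(A)>p(B)$ forces $|A|>|B|$, then pick $C\subseteq A\setminus B$ of size $|A|-|B|$ so that $|A\setminus C|=|B|$ and $|B\cup C|=|A|$, yielding equality in (\ref{eq:CC}) and hence monotonicity via Corollary~\ref{c:detmon-suff2}. (One minor notational nit: you reuse the letter $b$ for both the buyer payoff function and for $|B\setminus A|$.)
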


\begin{proof}
If $p(A)>p(B)$ then $\left\vert A\right\vert >|B|$ (because $p$ is
nondecreasing), and so taking a set $C\subseteq A\backslash B$ with of size $%
\left\vert C\right\vert =\left\vert A\right\vert -\left\vert B\right\vert $
(such a $C$ exists since $\left\vert A\backslash B\right\vert \geq
\left\vert A\right\vert -\left\vert B\right\vert $) yields $%
p(A)-p(A\backslash C)=p(B\cup C)-p(B)$ (because $|A|=|B\cup C|$ and $%
|A\backslash C|=|B|$), which is (\ref{eq:CC}).
\end{proof}

\bigskip

Using a \textquotedblleft Theorem of the Alternative" yields another
characterization of monotonicity for deterministic mechanisms. The next
proposition gives the technical result; we denote by $\mathcal{P(I)}$ the
set of \emph{nonempty} subsets of a finite set $I.$

\begin{proposition}
\label{p:v,w}Let $I$ be a nonempty finite set, and let $v,w:\mathcal{P}%
(I)\rightarrow \mathbb{R}$. The following two statements are equivalent:

(I) There is no $z\in \mathbb{R}^{I}$ such that%
\begin{equation*}
v(C)\leq z(C)<w(C)\text{\ for all }C\in \mathcal{P}(I).
\end{equation*}

(II) There are $\lambda ,\mu \in \mathbb{R}_{+}^{\mathcal{P}(I)}\backslash
\{0\}$ such that%
\begin{eqnarray*}
\sum_{C\in \mathcal{P}(I)}\lambda _{C}\mathbf{1}_{C} &=&\sum_{C\in \mathcal{P%
}(I)}\mu _{C}\mathbf{1}_{C}\text{\qquad and} \\
\sum_{C\in \mathcal{P}\mathbf{(}I)}\lambda _{C}v(C) &\geq &\sum_{C\in 
\mathcal{P}\mathbf{(}I)}\mu _{C}w(C).
\end{eqnarray*}
\end{proposition}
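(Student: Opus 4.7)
The plan is to prove this as a standard theorem of the alternative via LP duality; equivalently, one could invoke Motzkin's transposition theorem. The direction (II)$\Rightarrow$(I) is an immediate calculation: if $z$ satisfied $v(C)\leq z(C)<w(C)$ for all $C\in\mathcal{P}(I)$, then using $\sum_C\lambda_C\mathbf{1}_C=\sum_C\mu_C\mathbf{1}_C$ one obtains $\sum_C\lambda_C v(C)\leq \sum_C\lambda_C z(C) = z\cdot\left(\sum_C\lambda_C\mathbf{1}_C\right) = z\cdot\left(\sum_C\mu_C\mathbf{1}_C\right) = \sum_C\mu_C z(C) < \sum_C\mu_C w(C)$, where the strict inequality uses $\mu\neq 0$ (some $\mu_C>0$, and $z(C)<w(C)$ there), contradicting (II).

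For (I)$\Rightarrow$(II), I would introduce the linear program in variables $z\in\mathbb{R}^I$ and $\epsilon\in\mathbb{R}$: maximize $\epsilon$ subject to $z(C)\geq v(C)$ and $z(C)+\epsilon\leq w(C)$ for every $C\in\mathcal{P}(I)$. Because $\mathcal{P}(I)$ is finite, statement (I) is equivalent to this LP having optimum $\leq 0$. The primal is always feasible (take $z_i$ large and $\epsilon$ very negative). Assigning nonnegative multipliers $\lambda_C$ to $z(C)\geq v(C)$ and $\mu_C$ to $z(C)+\epsilon\leq w(C)$, the dual reads: minimize $\sum_C\mu_C w(C)-\sum_C\lambda_C v(C)$ subject to $\sum_C\lambda_C\mathbf{1}_C=\sum_C\mu_C\mathbf{1}_C$ (each $z_i$ is free) and $\sum_C\mu_C=1$ (since $\epsilon$ is free), with $\lambda,\mu\geq 0$. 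Under (I), strong duality yields a dual-feasible pair with objective $\leq 0$, giving both the inequality $\sum_C\lambda_C v(C)\geq\sum_C\mu_C w(C)$ and the equality of indicator sums required in (II). The normalization $\sum_C\mu_C=1$ forces $\mu\neq 0$, and then $\sum_C\lambda_C\mathbf{1}_C=\sum_C\mu_C\mathbf{1}_C\neq\mathbf{0}$---since each $\mathbf{1}_C$ with $C\in\mathcal{P}(I)$ is nonzero and nonnegative---forces $\lambda\neq 0$ as well.

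The main obstacle is only a little bookkeeping around strong duality: if the primal is unbounded then the dual is infeasible, but in that case the primal optimum is $+\infty$, so (I) already fails and there is nothing to prove; when the primal optimum is finite, standard finite-dimensional LP strong duality applies and produces the certificate. The argument can alternatively be packaged in one step by applying Motzkin's transposition theorem to the homogenized system $\mathbf{1}_C\cdot z\geq v(C)t$, $-\mathbf{1}_C\cdot z+w(C)t>0$, $t>0$, which directly produces nonnegative multipliers $(\lambda,\mu,\alpha)$ whose projection onto the $z$-coordinates yields $\sum_C\lambda_C\mathbf{1}_C=\sum_C\mu_C\mathbf{1}_C$ and onto the $t$-coordinate yields $\sum_C\lambda_C v(C)-\sum_C\mu_C w(C)=\alpha\geq 0$; Motzkin's nontriviality requirement then forces $\mu\neq 0$, and hence $\lambda\neq 0$, exactly as before.
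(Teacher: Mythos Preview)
Your proof is correct and takes essentially the same approach as the paper: both reduce the equivalence to a theorem of the alternative. The paper applies Motzkin's transposition theorem directly to the homogenized system $\mathcal{M}z - v\zeta \geq 0$, $-\mathcal{M}z + w\zeta \gg 0$, $\zeta > 0$ (exactly the alternative you sketch at the end), while your primary route via LP strong duality with the auxiliary variable $\epsilon$ is just a repackaging of the same argument; your observations that $\sum_C\mu_C=1$ forces $\mu\neq 0$ and then $\sum_C\mu_C\mathbf{1}_C\neq 0$ forces $\lambda\neq 0$ match the paper's reasoning verbatim.
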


\begin{proof}
Let $\mathcal{M}$ denote the $(2^{|I|}-1)\times |I|$ matrix whose rows are
the characteristic vectors $\mathbf{1}_{C}\in \mathbb{R}^{I}$ of all $C\in 
\mathcal{P}(I)$. Writing $v$ for the vector $(v(C))_{C\in \mathcal{P}(I)}$
and $w$ for the vector $(w(C))_{C\in \mathcal{P}(I)}$, (I) is equivalent to
the system of inequalities%
\begin{eqnarray*}
\mathcal{M}z-v\zeta &\geq &0 \\
-\mathcal{M}z+w\zeta &\gg &0 \\
\zeta &>&0
\end{eqnarray*}%
\emph{not} having a solution $z\in \mathbb{R}^{I},\;\zeta \in \mathbb{R}$
(indeed, for one direction take $\zeta =1$, and for the converse replace $z$
with $(1/\zeta )z)$. By Motzkin's Theorem of the Alternative (see, e.g.,
Mangasarian 1994), this is equivalent to the system%
\begin{eqnarray*}
\lambda ^{\top }\mathcal{M+\mu }^{\top }(-\mathcal{M}) &=&0 \\
\lambda ^{\top }(-v)+\mathcal{\mu }^{\top }w+\eta &=&0 \\
\lambda ,\mu ,\eta &\geq &0 \\
(\mu ,\eta ) &\neq &0
\end{eqnarray*}%
having a solution $\lambda ,\mu \in \mathbb{R}^{I},\eta \in \mathbb{R}$.
Now, if $\mu =0$ then $\lambda =0$ (because $\lambda ^{\top }\mathcal{M}=%
\mathcal{\mu }^{\top }\mathcal{M}=0\mathcal{\ }$implies $\lambda =0)$, and
then $\eta =0$ (because $\eta =\lambda ^{\top }v-\mathcal{\mu }^{\top }w)$,
contradicting $(\mu ,\eta )\neq 0$. Therefore $\mu \neq 0$, and so $\lambda
\neq 0$, and we get%
\begin{eqnarray*}
\lambda ^{\top }\mathcal{M} &=&\mathcal{\mu }^{\top }\mathcal{M} \\
\lambda ^{\top }v &\geq &\mathcal{\mu }^{\top }w,
\end{eqnarray*}%
which is (II).
\end{proof}

\bigskip

Applying this equivalence to the result of Theorem \ref{th:pm} yields

\begin{theorem}
\label{th:pm-alt}Let $\mu $ be a deterministic buyer-favorable
tie-consistent mechanism, with a nondecreasing pricing function $%
p:2^{K}\rightarrow \mathbb{R}_{+}$. A necessary and sufficient condition for 
$\mu $ to be monotonic is: if%
\begin{equation*}
p(A)>p(B)
\end{equation*}%
for some $A,B\subseteq K$, then there are $\lambda ,\mu \in \mathbb{R}_{+}^{%
\mathcal{P}\mathbf{(}A\backslash B)}\backslash \{0\}$ such that 
\begin{eqnarray*}
\sum_{C\in \mathcal{P}(A\backslash B)}\lambda _{C}\mathbf{1}_{C}
&=&\sum_{C\in \mathcal{P}(A\backslash B)}\mu _{C}\mathbf{1}_{C}\text{\qquad
and} \\
\sum_{C\in \mathcal{P}\mathbf{(}A\backslash B)}\lambda
_{C}(p(A)-p(A\backslash C)) &\geq &\sum_{C\in \mathcal{P}\mathbf{(}%
A\backslash B)}\mu _{C}(p(B\cup C)-p(B)).
\end{eqnarray*}
\end{theorem}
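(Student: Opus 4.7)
The plan is to derive Theorem \ref{th:pm-alt} by combining two results already established in the excerpt: the characterization of monotonicity from Theorem \ref{th:pm} and the linear-algebraic duality in Proposition \ref{p:v,w}. In essence, Theorem \ref{th:pm-alt} is just the ``Theorem of the Alternative'' dualization of the primal nonexistence statement in Theorem \ref{th:pm}, so the proof should be a short reduction.

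First I would invoke Theorem \ref{th:pm}: under the assumptions on $\mu$, monotonicity is equivalent to the statement that for every pair $A, B \subseteq K$ with $p(A) > p(B)$, there is no $z \in \mathbb{R}^{A\setminus B}$ with
$$p(A) - p(A \setminus C) \;\leq\; \sum_{i \in C} z_i \;<\; p(B \cup C) - p(B) \quad \text{for all } \emptyset \neq C \subseteq A \setminus B.$$
Since the assumptions $p(A) > p(B)$ imply $A \setminus B$ is nonempty (indeed, if $A \subseteq B$ then $p(A) \leq p(B)$ by monotonicity of $p$), I may set $I := A \setminus B$, which is a nonempty finite set, and define $v, w : \mathcal{P}(I) \rightarrow \mathbb{R}$ by
$$v(C) := p(A) - p(A \setminus C), \qquad w(C) := p(B \cup C) - p(B).$$
With these identifications the forbidden system becomes precisely the one in statement (I) of Proposition \ref{p:v,w}, namely $v(C) \leq z(C) < w(C)$ for all $C \in \mathcal{P}(I)$, where $z(C) = \sum_{i \in C} z_i$.

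Next I would apply Proposition \ref{p:v,w}: the nonexistence of such a $z$ is equivalent to the existence of vectors $\lambda, \mu \in \mathbb{R}_+^{\mathcal{P}(I)} \setminus \{0\}$ satisfying
$$\sum_{C \in \mathcal{P}(I)} \lambda_C \mathbf{1}_C = \sum_{C \in \mathcal{P}(I)} \mu_C \mathbf{1}_C \qquad \text{and} \qquad \sum_{C \in \mathcal{P}(I)} \lambda_C v(C) \geq \sum_{C \in \mathcal{P}(I)} \mu_C w(C).$$
Substituting back the explicit forms of $v$ and $w$ and the definition $I = A \setminus B$ yields exactly the condition stated in Theorem \ref{th:pm-alt}. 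Quantifying over all pairs $(A,B)$ with $p(A) > p(B)$, the equivalence of the two conditions (over all such pairs) gives the necessary and sufficient condition for monotonicity.

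There is really no main obstacle: both Theorem \ref{th:pm} and Proposition \ref{p:v,w} are already proved, and the proof is just a clean substitution. The only minor issue worth checking is that the domain $\mathcal{P}(A \setminus B)$ is nonempty (so that Proposition \ref{p:v,w} applies), which follows because $p(A) > p(B)$ together with $p$ being nondecreasing forces $A \not\subseteq B$, hence $A \setminus B \neq \emptyset$.
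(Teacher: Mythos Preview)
Your proposal is correct and takes exactly the same approach as the paper: the paper's proof consists of the single sentence ``Applying this equivalence to the result of Theorem \ref{th:pm} yields'' (where ``this equivalence'' refers to Proposition \ref{p:v,w}), and your write-up simply fills in the substitution $I = A\setminus B$, $v(C) = p(A) - p(A\setminus C)$, $w(C) = p(B\cup C) - p(B)$ that makes this work, together with the check that $A\setminus B \neq \emptyset$.
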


\subsection{Appendix: Revenue of Deterministic Mechanisms\label{s-a:det}}

In this appendix we generalize the line of proof of Section \ref{s:sym det}
from symmetric deterministic mechanisms (which are monotonic) to general
deterministic mechanisms (which need not be monotonic). This yields, in
particular, an improvement of an $O(\log k)$ factor to the result of
Corollary A.5 in Hart and Nisan (2019), on the ratio between the
deterministic and separate revenues. We consider first the supermodular
case, and then obtain the general result by pricing function approximation.

\subsubsection{The Supermodular Case\label{sus-a:det-super}}

A\ deterministic pricing function $p:2^{K}\rightarrow \lbrack 0,\infty ]$ is 
\emph{supermodular} if $p(A)+p(B)\leq p(A\cup B)+p(A\cap B)$ for every $%
A,B\subseteq K$; let \textsc{SupermodDRev} denote the maximal revenue
achievable by supermodular deterministic mechanisms.

\begin{proposition}
\label{p:supermod-det}Let $X$ be a $k$-good random valuation. Then%
\begin{equation*}
\text{\textsc{SupermodDRev}}(X)\leq \frac{1}{k}(2^{k}-1)\cdot \text{\textsc{%
SRev}}(X).
\end{equation*}
\end{proposition}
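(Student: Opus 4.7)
The plan is to decompose each price $p(A)$ via the Shapley formula and to bound each resulting atomic marginal $p(B\cup\{i\})-p(B)$ using IC together with the supermodular monotonicity of marginals; the key is that the Shapley weights, once one interchanges the order of summation, furnish a factor-of-$k$ saving over a naive chain decomposition. Writing $\beta_A:=\mathbb{P}[q(X)=A]$ (so that $R(\mu;X)=\sum_{A\neq\emptyset}\beta_A p(A)$), the Shapley identity (averaging over uniformly random orderings of $A$) gives
\[ p(A)=\sum_{i\in A}\sum_{B\subseteq A\setminus\{i\}} c_{B,A}\bigl(p(B\cup\{i\})-p(B)\bigr),\qquad c_{B,A}:=\frac{|B|!(|A|-|B|-1)!}{|A|!}. \]

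For each atomic marginal, the supermodularity inequality applied with $U=B\cup\{i\}$ and $V=A\setminus\{i\}$ (with $B\subseteq A\setminus\{i\}$, so $U\cup V=A$ and $U\cap V=B$) yields $p(B\cup\{i\})-p(B)\le p(A)-p(A\setminus\{i\})$. Combined with the IC inequality $x_i\ge p(A)-p(A\setminus\{i\})$ valid whenever $q(x)=A\ni i$, on the event $\{q(X)\supseteq B\cup\{i\}\}$ one has $X_i\ge p(B\cup\{i\})-p(B)$, and Myerson's single-good formula (\ref{eq:one good}) then gives
\[ \bigl(p(B\cup\{i\})-p(B)\bigr)\cdot\mathbb{P}[q(X)\supseteq B\cup\{i\}]\;\le\;\text{\textsc{Rev}}(X_i). \]

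Substituting the Shapley decomposition into $R(\mu;X)$ and interchanging the order of summation, for each fixed $(B,i)$ with $i\notin B$ the combined weight is $\sum_{A\supseteq B\cup\{i\}}\beta_A\,c_{B,A}$. A short calculation (the ratio $c_{B,A'}/c_{B,A}=(|A|-|B|)/(|A|+1)<1$ for $A'\supsetneq A$ with $|A'|=|A|+1$) shows that $c_{B,A}$ is maximized at $|A|=|B|+1$ with value $1/(|B|+1)$, so the combined weight is at most $\mathbb{P}[q(X)\supseteq B\cup\{i\}]/(|B|+1)$; invoking the tail bound then gives $R(\mu;X)\le\sum_i\text{\textsc{Rev}}(X_i)\sum_{B:\,i\notin B}\frac{1}{|B|+1}$. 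The combinatorial identity $\binom{k-1}{b}/(b+1)=\binom{k}{b+1}/k$ evaluates the inner sum as $\frac{1}{k}\sum_{j=1}^{k}\binom{k}{j}=(2^k-1)/k$, and summing over $i$ yields the claim. The main obstacle is choosing the right decomposition of $p(A)$: a naive single-chain (telescoping) decomposition only produces the weaker bound $R(\mu;X)\le 2^{k-1}\,\text{\textsc{SRev}}(X)$, and it is precisely the Shapley-weight decay $\max_A c_{B,A}=1/(|B|+1)$ that, via the combinatorial identity above, supplies the factor-of-$k$ improvement needed for the $(2^k-1)/k$ bound.
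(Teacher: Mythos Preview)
Your proof is correct, and it is in fact cleaner than the paper's argument on the key step. Both proofs arrive at the same intermediate quantity
\[
Z=\sum_{A\neq\emptyset}\frac{1}{|A|}\sum_{i\in A}(p_A-p_{A\setminus\{i\}})\sum_{C\supseteq A}\beta_C
\]
(your upper bound after replacing $c_{B,A}$ by $1/(|B|+1)$ and renaming $B\cup\{i\}=:A$ is exactly this $Z$), and both proofs then bound $Z\le\frac{2^k-1}{k}\,\text{\textsc{SRev}}(X)$ in the same way: the supermodular tail estimate $(p_A-p_{A\setminus\{i\}})\sum_{C\supseteq A}\beta_C\le\text{\textsc{Rev}}(X_i)$ together with the identity $\sum_{A\ni i}1/|A|=(2^k-1)/k$.

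The difference lies in establishing $R(\mu;X)\le Z$. The paper computes $Z-R(\mu;X)$ explicitly as $\sum_{B}\beta_B\sum_{A\subsetneq B}p_A\frac{2|A|+1-|B|}{|A|+1}$ and then proves each inner sum is nonnegative via a pairing trick (matching the negative coefficient at size $m$ with the positive one at size $|B|-m-1$) together with a lemma that the average price over $m$-subsets is nondecreasing in $m$. Your route avoids all of this: the Shapley identity writes $R(\mu;X)$ \emph{exactly} as a nonnegative combination of the marginals $p(B\cup\{i\})-p(B)$, and the single observation $c_{B,A}\le c_{B,B\cup\{i\}}=1/(|B|+1)$ immediately yields $R(\mu;X)\le Z$, using only that $p$ is nondecreasing. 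This is genuinely simpler and more transparent; the paper's pairing argument is replaced by a one-line monotonicity of Shapley weights. One small point you glossed over: when some $p(A)=\infty$ the Shapley identity needs care; the paper handles this upfront by replacing infinite prices with large finite ones, which preserves supermodularity and can only increase revenue.
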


\begin{proof}
Let $\mu =(q,s)$ be a $k$-good deterministic mechanism with canonical
deterministic pricing function $p$ that is supermodular. Without loss of
generality assume that $p(A)$ is finite for all $A$; otherwise, replacing
each $p(B)=\infty $ by, say, $p^{\prime }(B)=|B|M$, where $M>\max
\{p(A):p(A)<\infty \}\geq \max \{s(x):x\in \mathbb{R}_{+}^{k}\}$, keeps the
pricing function nondecreasing and supermodular, and can only increase the
revenue (because any switch to a new price yields a higher payment).

All sets $A,B$ below should be understood as subsets of $K$, and we will
write $p_{A}$ instead of $p(A)$. Given a random allocation $X$, for each set 
$A$ let $\beta _{A}:=\mathbb{P}\left[ q(X)=A\right] $ be the probability
that the allocation is the set $A$; then%
\begin{equation*}
R(\mu ;X)=\sum_{A\neq \emptyset }\beta _{A}p_{A}
\end{equation*}%
(because $p_{\emptyset }=0$). Consider the expression%
\begin{equation*}
Z:=\sum_{A\mathcal{\neq \emptyset }}\frac{1}{|A|}\sum_{i\in
A}(p_{A}-p_{A\backslash \{i\}})\sum_{B\supseteq A}\beta _{B}.
\end{equation*}%
In the steps below we will prove the two inequalities%
\begin{equation}
R(\mu ;X)\leq Z\leq \frac{2^{k}-1}{k}\cdot \text{\textsc{SRev}}(X),
\label{eq:ZZ}
\end{equation}%
which yield the result.

\begin{description}
\item[(i)] For every set $A\neq \emptyset $ with $p(A)<\infty $ and every $%
i\in A$ we have%
\begin{equation*}
(p_{A}-p_{A\backslash \{i\}})\sum_{B\supseteq A}\beta _{B}\leq \text{\textsc{%
Rev}}(X_{i}).
\end{equation*}
\end{description}

\noindent \emph{Proof. }If $q(x)=A$ then $x(A)-p_{A}\geq x(A\backslash
\{i\})-p_{A\backslash \{i\}}$ by IC, and so $x_{i}\geq p_{A}-p_{A\backslash
\{i\}}$. Since $p_{B}-p_{B\backslash \{i\}}\geq p_{A}-p_{A\backslash \{i\}}$
for every $B\supseteq A$ by the supermodularity of $p$, we get%
\begin{equation*}
\mathbb{P}\left[ X_{i}\geq p_{A}-p_{A\backslash \{i\}}\right] \geq
\sum_{B\supseteq A}\mathbb{P}\left[ q(X)=B\right] =\sum_{B\supseteq A}\beta
_{B};
\end{equation*}%
multiplying by $p_{A}-p_{A\backslash \{i\}}$ completes the proof. $\square $

\begin{description}
\item[(ii)] 
\begin{equation*}
Z\leq \frac{2^{k}-1}{k}\cdot \text{\textsc{SRev}}(X).
\end{equation*}
\end{description}

\noindent \emph{Proof. }By (i) we have%
\begin{equation*}
Z\leq \sum_{A\neq \emptyset }\frac{1}{|A|}\sum_{i\in A}\text{\textsc{Rev}}%
(X_{i})=\sum_{i=1}^{k}\text{\textsc{Rev}}(X_{i})\sum_{A\ni i}\frac{1}{|A|}.
\end{equation*}%
For each $i\in K$ there are $\binom{k-1}{\ell -1}$ sets $A$ of size $\ell $
that contain $i$, and so 
\begin{equation*}
\sum_{A\ni i}\frac{1}{|A|}=\sum_{\ell =1}^{k}\frac{1}{\ell }\binom{k-1}{\ell
-1}=\sum_{\ell =1}^{k}\frac{1}{k}\binom{k}{\ell }=\frac{1}{k}\left(
2^{k}-1\right) .
\end{equation*}%
$\square $

\begin{description}
\item[(iii)] 
\begin{equation*}
Z=R\left( \mu ;X\right) +\sum_{B\neq \emptyset }\beta
_{B}\sum_{A\varsubsetneqq B}p_{A}\frac{2\left\vert A\right\vert
+1-\left\vert B\right\vert }{\left\vert A\right\vert +1}.
\end{equation*}
\end{description}

\noindent \emph{Proof. }Fix $B$, and consider all the terms of $Z$ that
include $\beta _{B}$. For each $A\subseteq B$ the term $\left\vert
A\right\vert ^{-1}p_{A}\beta _{B}$ appears $\left\vert A\right\vert $ times
(once for each $i\in A)$, yielding $p_{A}\beta _{B}$ in total; and for each $%
A\varsubsetneqq B$ the term $-(\left\vert A\right\vert +1)^{-1}p_{A}\beta
_{B}$ appears $\left\vert B\right\vert -\left\vert A\right\vert $ times
(once for each $A^{\prime }\subseteq B$ such that $A=A^{\prime }\backslash
\{i\}$, i.e., once for each $i\in B\backslash A)$, yielding $-(\left\vert
B\right\vert -\left\vert A\right\vert )(\left\vert A\right\vert
+1)^{-1}p_{A}\beta _{B}$ in total. Thus,%
\begin{eqnarray*}
Z &=&\sum_{B\neq \emptyset }\beta _{B}\sum_{A\subseteq B}p_{A}\left( 1-\frac{%
\left\vert B\right\vert -\left\vert A\right\vert }{\left\vert A\right\vert +1%
}\right) \\
&=&\sum_{B\neq \emptyset }\beta _{B}p_{B}+\sum_{B\neq \emptyset }\beta
_{B}\sum_{A\varsubsetneqq B}p_{A}\left( 1-\frac{\left\vert B\right\vert
-\left\vert A\right\vert }{\left\vert A\right\vert +1}\right) ,
\end{eqnarray*}%
where in the second line we have split the interior sum into $A=B$ and $%
A\varsubsetneqq B$; this completes the proof. $\square $

\begin{description}
\item[(iv)] For every $B\neq \emptyset $ we have%
\begin{equation*}
\sum_{A\varsubsetneqq B}p_{A}\frac{2\left\vert A\right\vert +1-\left\vert
B\right\vert }{\left\vert A\right\vert +1}\geq 0.
\end{equation*}
\end{description}

\noindent \emph{Proof. }Let $n:=|B|$; for each $m=1,...,n-1$ put%
\begin{eqnarray*}
\pi _{m} &%
{\;:=\;}%
&\binom{n}{m}^{-1}\sum_{A\subseteq B:\left\vert A\right\vert =m}p_{A}\text{\
\ \ and} \\
\lambda _{m} &%
{\;:=\;}%
&\frac{2m+1-n}{m+1}\binom{n}{m};
\end{eqnarray*}%
we need to show that%
\begin{equation*}
\sum_{m=1}^{n-1}\lambda _{m}\pi _{m}\geq 0.
\end{equation*}%
We will show that each term $\lambda _{m}\pi _{m}$ whose coefficient $%
\lambda _{m}$ is negative (i.e., when $m<(n-1)/2)$ is \textquotedblleft
covered" by the corresponding term $\lambda _{n-m-1}\pi _{n-m-1}$ (whose
coefficient is positive); i.e.,%
\begin{equation}
\lambda _{m}\pi _{m}+\lambda _{n-m-1}\pi _{n-m-1}\geq 0.
\label{eq:lambda-pi}
\end{equation}%
Indeed, for each $m<(n-1)/2$ we have%
\begin{equation*}
\left\vert \lambda _{m}\right\vert =\frac{n-2m-1}{m+1}\binom{n}{m}=\frac{%
n-2m-1}{n-m}\binom{n}{n-m-1}=\lambda _{n-m-1};
\end{equation*}%
together with $\pi _{m}\leq \pi _{n-1-m}$, which we will prove in (v) below,
this yields (\ref{eq:lambda-pi}). Summing over all $m<(n-1)/2$ completes the
proof. $\square $

\begin{description}
\item[(v)] Let $N$ be a set of size $n$, and for each $m=0,1,...,n$ let $\pi
_{m}$ be the average price of subsets of $N$ of size $m$, i.e.,%
\begin{equation*}
\pi _{m}:=\binom{n}{m}^{-1}\sum_{C\subseteq N:\left\vert C\right\vert
=m}p_{C}.
\end{equation*}%
Then $\pi _{m}$ is a nondecreasing function of $m.$
\end{description}

\noindent \emph{Proof. }Let $m<n$; we will show that $\pi _{m}\leq \pi
_{m+1} $. The function $p$ is nondecreasing,\footnote{%
This fact by itself does not yield the result immediately; recall Simpson's
paradox.} and so%
\begin{equation*}
\sum_{C}\sum_{D}p_{C}\leq \sum_{C}\sum_{D}p_{D},
\end{equation*}%
where both sums range over all pairs $(C,D)$ such that $\left\vert
C\right\vert =m$, $\left\vert D\right\vert =m+1$, and $C\subset D$. Each $%
p_{C}$ appears in the left-hand sum $n-m$ times (once for each $i\in
N\backslash C)$, and each $p_{D}$ appears in the right-hand sum $m+1$ times
(once for each $i\in D)$. Therefore%
\begin{equation*}
(n-m)\binom{n}{m}\pi _{m}\leq (m+1)\binom{n}{m+1}\pi _{m+1}.
\end{equation*}%
Since $(n-m)\binom{n}{m}=(m+1)\binom{n}{m+1}$, we get $\pi _{m}\leq \pi
_{m+1}$. $\square $

Results (ii), (iii), and (iv) yield (\ref{eq:ZZ}), and the proof is complete.
\end{proof}

\bigskip

\noindent \textbf{Remark. }The result of Proposition \ref{p:supermod-det} is
tight: the pricing functions in the proof of Proposition 7.1 in Hart and
Nisan (2019)---which are used in Proposition A.10 there to get the ratio 
\textsc{DRev}$(X)/$\textsc{SRev}$(X)$ close to $(2^{k}-1)/k$---are
supermodular. Indeed, the sequence $t_{n}$ constructed there increases
exponentially, i.e., $t_{n}\leq \varepsilon \cdot t_{n+1}$ for all $n$,
where $\varepsilon >0$ is small; hence, by taking $\varepsilon \leq 1/2$, in
the nontrivial case where $A,B\subsetneqq A\cup B$ we get $p(A),p(B)\leq
(1/2)p(A\cup B)$, and so $p(A)+p(B)\leq p(A\cup B)\leq p(A\cup B)+p(A\cap B)$%
.

\subsubsection{The General Case\label{sus-a:det-general}}

We approximate any deterministic pricing function by a supermodular one.

\begin{proposition}
\label{p:det-p-approx}For every deterministic pricing function $p$ on $k$
goods there exists a supermodular pricing function $p^{\prime }$ such that%
\begin{equation*}
p(A)\leq p^{\prime }(A)\leq 2^{k-1}p(A)
\end{equation*}%
for every $A\subseteq K.$
\end{proposition}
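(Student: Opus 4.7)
The plan is to construct $p'$ as a ``supermodular closure'' of $p$, built bottom-up by a recursion that directly encodes the supermodular inequality. Explicitly, I would set $p'(\emptyset) := 0$ and, for every nonempty $A \subseteq K$,
\begin{equation*}
p'(A) := \max\Bigl\{\,p(A),\; \max_{\substack{B,C \subsetneq A \\ B \cup C = A}} \bigl(p'(B) + p'(C) - p'(B \cap C)\bigr)\Bigr\},
\end{equation*}
with the convention that the inner maximum is $-\infty$ when no admissible pair $(B,C)$ exists (precisely the case $|A| = 1$, so then $p'(A) = p(A)$). Since any admissible pair satisfies $|B|, |C| < |A|$, this is a well-defined recursion on $|A|$. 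I assume without loss of generality that $p$ is nondecreasing (one may replace $p$ by the canonical pricing function of the same mechanism).

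Once $p'$ is defined, it is routine to check that $p'(A) \geq p(A)$ by construction, and that $p'$ is nondecreasing: for any $i \notin A$ with $A \neq \emptyset$, the pair $(A, \{i\})$ is admissible in the recursion for $A \cup \{i\}$, giving $p'(A \cup \{i\}) \geq p'(A) + p(\{i\}) \geq p'(A)$, and one iterates. Supermodularity of $p'$ is essentially by construction: for any $A, B$ incomparable (so both are proper subsets of $A \cup B$), the pair $(A, B)$ itself is admissible in the recursion defining $p'(A \cup B)$ and directly yields $p'(A \cup B) \geq p'(A) + p'(B) - p'(A \cap B)$; the remaining cases (comparable $A, B$, or one of them empty) collapse to the nondecreasingness of $p'$.

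The heart of the argument is then a strong induction on $|A|$ showing $p'(A) \leq 2^{|A|-1} p(A)$ for every nonempty $A$, which, since $|A| \leq k$, yields the required $p'(A) \leq 2^{k-1} p(A)$. The base $|A| = 1$ is immediate from $p'(A) = p(A)$. For $|A| = n \geq 2$, any admissible pair $(B, C)$ satisfies $|B|, |C| \leq n-1$, so by the inductive hypothesis together with the monotonicity $p(B), p(C) \leq p(A)$ inherited from $p$:
\begin{equation*}
p'(B) + p'(C) - p'(B \cap C) \;\leq\; p'(B) + p'(C) \;\leq\; \bigl(2^{|B|-1} + 2^{|C|-1}\bigr)\, p(A) \;\leq\; 2^{n-1}\, p(A),
\end{equation*}
where the final estimate uses $|B|, |C| \leq n-1$ and is attained at $|B| = |C| = n-1$. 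Combined with the trivial $p(A) \leq 2^{n-1} p(A)$, this completes the induction.

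The main obstacle I anticipate is not the inductive bound itself, which is clean and tight, but rather checking that the recursion genuinely yields a function supermodular at \emph{every} pair of subsets, not only at the pairs explicitly appearing in the recursion. This is handled by the observation above that every nontrivial supermodular constraint $p'(A \cup B) + p'(A \cap B) \geq p'(A) + p'(B)$ for incomparable $A, B$ is exactly the defining inequality enforced by the recursion at the set $A \cup B$, so supermodularity cannot fail.
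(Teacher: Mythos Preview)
Your proof is correct and follows essentially the same approach as the paper: construct the minimal supermodular majorant of $p$ recursively, then bound it by induction on $|A|$ using $p'(B)+p'(C)\le 2^{|B|-1}p(A)+2^{|C|-1}p(A)\le 2^{|A|-1}p(A)$. The only cosmetic difference is that the paper restricts the inner maximum to pairs of the form $(A\setminus\{i\},A\setminus\{j\})$ and then appeals to the standard equivalence between this ``local'' condition and full supermodularity, whereas your larger maximum over all $(B,C)$ with $B\cup C=A$ makes supermodularity immediate; the resulting $p'$ is the same function, and the inductive bound goes through identically.
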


\begin{proof}
The minimal supermodular function $p^{\prime }$ that majorizes $p$ (i.e., $%
p^{\prime }\geq p$) is constructed inductively, starting with $p^{\prime
}(\emptyset ):=0$ and $p^{\prime }(\{i\}):=p(\{i\})$ for every $i\in K$, by
letting%
\begin{equation*}
p^{\prime }(A)%
{\;:=\;}%
\max \{p(A),\tilde{p}(A)\}
\end{equation*}%
where%
\begin{equation*}
\tilde{p}(A)%
{\;:=\;}%
\max_{i,j\in A,\,i\neq j}\{p^{\prime }(A\backslash \{i\})+p^{\prime
}(A\backslash \{j\})-p^{\prime }(A\backslash \{i,j\})\}
\end{equation*}%
for every $A\subseteq K$ with $\left\vert A\right\vert \geq 2$ (indeed, the
function $p^{\prime }$ is clearly supermodular, and $p^{\prime }\geq p$; it
is straightforward to show inductively that any supermodular $p^{\prime
\prime }$ such that $p^{\prime \prime }\geq p$ must satisfy $p^{\prime
\prime }\geq p^{\prime }$).

We claim that 
\begin{equation}
p^{\prime }(A)\leq 2^{\left\vert A\right\vert -1}p(A)  \label{eq:2^(k-1)}
\end{equation}%
for all $A\neq \emptyset $. We prove this by induction. For singleton sets $%
A=\{i\}$ we have equality. Take $A$ with $\left\vert A\right\vert \geq 2$,
and let the maximum in the definition of $\tilde{p}(A)$ be attained at a
certain pair $i\neq j$; then%
\begin{eqnarray*}
\tilde{p}(A) &\leq &p^{\prime }(A\backslash \{i\})+p^{\prime }(A\backslash
\{j\})\leq 2^{\left\vert A\right\vert -2}p(A\backslash \{i\})+2^{\left\vert
A\right\vert -2}p(A\backslash \{j\}) \\
&\leq &2^{\left\vert A\right\vert -2}p(A)+2^{\left\vert A\right\vert
-2}p(A)=2^{\left\vert A\right\vert -1}p(A)
\end{eqnarray*}%
(where we have used: $p\geq 0$; (\ref{eq:2^(k-1)}) for the smaller sets $%
A\backslash \{i\}$ and $A\backslash \{j\}$; and $p$ being nondecreasing).
This yields (\ref{eq:2^(k-1)}) for $A.$
\end{proof}

\bigskip

The bound of $2^{O(k)}$ is tight.

\textbf{Example.} Let $k$ be even, say $k=2m$, and partition the $k$ goods
into the $m$ disjoint pairs $K_{1}=\{1,2\},K_{2}=\{3,4\},...,K_{m}=\{2m-1,2m%
\}$. Let $p(A)=1$ if $A$ contains at least one element of each $K_{\ell }$
(i.e., $\left\vert A\cap K_{\ell }\right\vert \geq 1$ for all $\ell
=1,...,m) $, and $0$ otherwise. We claim that the minimal supermodular $%
p^{\prime }$ that majorizes $p$ is given by%
\begin{equation}
p^{\prime }(A):=\prod\nolimits_{\ell =1}^{m}\left\vert A\cap K_{\ell
}\right\vert ,  \label{eq:Al}
\end{equation}%
and so, in particular,%
\begin{equation*}
p^{\prime }(K)=2^{m}=2^{k/2}p(K).
\end{equation*}

To show this, let $p^{\prime \prime }$ be the minimal supermodular function
that majorizes $p$. We will show that $p^{\prime \prime }\geq p^{\prime }$;
since $p^{\prime }\geq p$ and $p^{\prime }$ is supermodular (as is easy to
verify), it follows that $p^{\prime \prime }=p^{\prime }.$

For every set $A\subseteq K$ put $a_{\ell }:=\left\vert A\cap K_{\ell
}\right\vert $, and let $a:=(a_{1},...,a_{m})\in \{0,1,2\}^{m}$ be the
\textquotedblleft profile" of $A$. The pricing functions $p$, and thus $%
p^{\prime \prime }$ as well as $p^{\prime }$, depend only on the profile $a$%
; we will abuse notation and write $p(a)$ instead of $p(A)$, and similarly
for $p^{\prime }$ and $p^{\prime \prime }.$

In the domain where at least one coordinate of $a$ is $0$ the function $p$
vanishes, and so it is supermodular, which yields $p^{\prime \prime
}(a)=0=p^{\prime }(a).$

In the remaining domain, where $a\geq (1,...,1)$, the function $p$ is the
constant function $1$. Starting with $p^{\prime \prime
}(1,...,1)=p(1,...,1)=1$, we proceed by induction on $n:=\left\vert \{\ell
:a_{\ell }=2\}\right\vert $, the number of coordinates in $a$ that equal $2$%
. The supermodularity condition yields%
\begin{equation*}
p^{\prime \prime }(2,\ast )+p^{\prime \prime }(0,\ast )\geq 2p^{\prime
\prime }(1,\ast )
\end{equation*}%
for every completion $\ast $ (for example: $p^{\prime \prime }(B\cup
\{1,2\})+p^{\prime \prime }(B)\geq p^{\prime \prime }(B\cup \{1\})+p^{\prime
\prime }(B\cup \{2\})$ for every $B\subseteq \{3,...,2m\})$. Since $%
p^{\prime \prime }(0,\ast )=p(0,\ast )=0$, we get $p^{\prime \prime }(2,\ast
)\geq 2p^{\prime \prime }(1,\ast )$, and thus, by induction, $p^{\prime
\prime }(a)\geq 2^{n}p^{\prime \prime }(1,...,1)=2^{n}=p^{\prime }(a)$
(recall that $n$ is the number of $2$'s in $a)$. Thus $p^{\prime \prime
}\geq p^{\prime }$ as claimed.

\bigskip

Corollary A.5 of Hart and Nisan (2019) yields \textsc{DRev}$(X)\leq
O(2^{k}\log k)\cdot \,$\textsc{SRev}$(X)$; we improve this by a factor of $%
\log k.$

\begin{proposition}
\label{p:det vs srev}Let $X$ be a $k$-good random allocation. Then%
\begin{equation*}
\text{\textsc{DRev}}(X)\leq \ln 4\cdot (2^{k}-1)\cdot \text{\textsc{SRev}}%
(X).
\end{equation*}
\end{proposition}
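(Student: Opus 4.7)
The plan is to combine Proposition~\ref{p:det-p-approx} (which approximates any deterministic pricing function by a supermodular one within a factor of $2^{k-1}$), Theorem~\ref{th:chawla} (the Chawla--Teng--Tzamos revenue-approximation tool), and Proposition~\ref{p:supermod-det} (the tight bound for the supermodular case). The factors will cancel nicely: an extra multiplicative $k$ appears from Theorem~\ref{th:chawla} but is exactly absorbed by the $1/k$ saved in Proposition~\ref{p:supermod-det}.

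First I would take an arbitrary deterministic pricing function $p$ and invoke Proposition~\ref{p:det-p-approx} to obtain a supermodular pricing function $p'$ with
\[
p(A)\;\le\;p'(A)\;\le\;2^{k-1}\,p(A)\qquad\text{for every }A\subseteq K,
\]
which gives the two-sided bound $c_1\,p'(A)\le p(A)\le c_2\,p'(A)$ with $c_1=2^{-(k-1)}$ and $c_2=1$. Next, letting $\mathcal{P}$ be the class of all (nondecreasing, closed) deterministic pricing functions and $\mathcal{P}'$ the cone of supermodular (nondecreasing, closed) deterministic pricing functions, I would apply Theorem~\ref{th:chawla}. Its hypothesis is verified by the previous inequality, and its conclusion yields
\[
\text{\textsc{DRev}}(X)\;\le\;2\ln\!\bigl(2\,c_2/c_1\bigr)\cdot\text{\textsc{SupermodDRev}}(X)
\;=\;2\ln(2^{k})\cdot\text{\textsc{SupermodDRev}}(X)
\;=\;k\ln 4\cdot\text{\textsc{SupermodDRev}}(X).
\]

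Finally I would plug in Proposition~\ref{p:supermod-det}, which bounds $\text{\textsc{SupermodDRev}}(X)\le \frac{1}{k}(2^{k}-1)\cdot\text{\textsc{SRev}}(X)$, to obtain
\[
\text{\textsc{DRev}}(X)\;\le\;k\ln 4\cdot\frac{2^{k}-1}{k}\cdot\text{\textsc{SRev}}(X)
\;=\;\ln 4\cdot(2^{k}-1)\cdot\text{\textsc{SRev}}(X),
\]
which is the stated inequality.

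There is really no genuine obstacle once all three ingredients are in hand; the only thing to check carefully is the bookkeeping of the constants in Theorem~\ref{th:chawla} ($c_2/c_1 = 2^{k-1}$, whence $2\ln(2\cdot 2^{k-1})=k\ln 4$) and the verification that the class $\mathcal{P}'$ of supermodular nondecreasing closed deterministic pricing functions is indeed a cone (closed under nonnegative scalar multiplication), which is immediate since all three defining properties are preserved under scaling by a nonnegative constant. A minor additional remark worth including is that the restriction to nondecreasing deterministic pricing functions is without loss of generality (canonical pricing functions are nondecreasing by Proposition~\ref{p:p0-det}), so Theorem~\ref{th:chawla} is applicable.
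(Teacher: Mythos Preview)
Your proof is correct and follows exactly the same route as the paper: apply Proposition~\ref{p:det-p-approx} together with Theorem~\ref{th:chawla} to get $\text{\textsc{DRev}}(X)\le k\ln 4\cdot\text{\textsc{SupermodDRev}}(X)$, then combine with Proposition~\ref{p:supermod-det}. The bookkeeping of constants and the verification of the hypotheses of Theorem~\ref{th:chawla} are handled correctly.
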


\begin{proof}
Proposition \ref{p:det-p-approx} yields, by the result of Chawla, Teng, and
Tzamos (2022) (Theorem \ref{th:chawla}), an approximation factor of $2\ln
(2\cdot 2^{k-1})=\ln 4\cdot k$, namely,%
\begin{equation*}
\text{\textsc{DRev}}(X)\leq \ln 4\cdot k\cdot \text{\textsc{SupermodDRev}}(X)
\end{equation*}%
for all $X$. Combine this with the result of Proposition \ref{p:supermod-det}%
.
\end{proof}

\end{document}